\let\Re\undefined
\let\Im\undefined
\DeclareMathOperator{\Re}{Re}
\DeclareMathOperator{\Im}{Im}
\DeclareMathOperator{\supp}{supp}
\newtheorem{theorem}{Theorem}[section]
\newtheorem{lemma}[theorem]{Lemma}
\newtheorem{proposition}[theorem]{Proposition}
\newtheorem{corollary}[theorem]{Corollary}
\theoremstyle{definition}
\newtheorem{definition}[theorem]{Definition}
\newtheorem{remark}[theorem]{Remark}
\newtheorem{problem}{Riemann-Hilbert Problem}[section]
\numberwithin{equation}{section}
\definecolor{shadecolor}{rgb}{0.95, 0.95, 0.86}
\def\be{\begin{equation}}
\def\ee{\end{equation}}
\def\bi{\begin{itemize}}
\def\ei{\end{itemize}}
\def\bea{\begin{eqnarray}}
\def\eea{\end{eqnarray}}
\def\bl{\begin{lemma}}
\def\el{\end{lemma}}
\def\bd{\begin{definition}}
\def\ed{\end{definition}}
\def\bp{\begin{proposition}}
\def\ep{\end{proposition}}
\def\br{\begin{remark}}
\def\er{\end{remark}}
\def\bt{\begin{theorem}}
\def\et{\end{theorem}}
\def\bc{\begin{corollary}}
\def\ec{\end{corollary}}
\def\ra{\rightarrow}
\newcommand{\G}{\Gamma}
\renewcommand{\O}{\Omega}
\renewcommand{\k}{\varkappa}
\renewcommand{\d}{\delta}
\newcommand{\D}{\Delta}
\newcommand{\e}{\varepsilon}
\renewcommand{\o}{\omega}
\newcommand{\g}{\gamma}
\renewcommand{\part}{\partial}
\newcommand{\tB}{\tilde B}
\newcommand{\Res}{\text{Res} \,}
\def\Rscr{\mathcal R}
\def\le{\left}
\def\ri{\right}
\def\C{{\mathbb C}}
\def\R{{\mathbb R}}
\def\N{{\mathbb N}}
\def\a{\alpha}
\def\b{\beta}
\def\g{\gamma}
\def\gt{\hat\gamma}
\def\e{\varepsilon}
\def\m{\mu}
\def\l{\lambda}
\def\n{ {\nu}}
\def\1{{\bf 1}}
\def\r{\rho}
\def\s{ {\sigma}}
\def\t{ {\tau}}
\def\th{ {\theta}}
\def\x{\xi}
\def\z{\zeta}
\def\hf{\frac{1}{2}}
\newcommand{\Iscr}{\mathcal I}
\begin{document}

\title{Recent developments in spectral theory of the focusing NLS soliton and breather gases: the thermodynamic limit of  average densities, fluxes and certain meromorphic differentials; periodic gases}

\author[1]{Alexander Tovbis}

\author[2]{Fudong Wang}

\affil[1]{University of Central Florida, Orlando FL, U.S.A.,

Alexander.Tovbis@ucf.edu}

\affil[2]{University of Central Florida, Orlando FL, U.S.A., fudong@math.ucf.edu}

\footnotetext{The work of  is supported  by the  NSF grant DMS-2009647.}

\maketitle

\begin{abstract}

In this paper we consider soliton and breather gases for one dimensional integrable focusing Nonlinear 
Schr\"{o}dinger Equation (fNLS). We
derive  average densities and fluxes for such gases by studying the thermodynamic limit
of the fNLS finite gap solutions.  Thermodynamic limits of quasimomentum, quasienergy and their connections with the corresponding $g$-functions were also established.

We then introduce
the notion of periodic fNLS gases and calculate for them the  average densities, fluxes and  thermodynamic
limits  of meromorphic differentials. Certain accuracy estimates of the obtained results are  also included.  

Our results constitute another  step
towards the mathematical foundation for the spectral  theory of fNLS soliton and breather gases that
appeared in work of G. El and A. Tovbis, Phys. Rev. E, 2020.

\end{abstract}

\tableofcontents

\section{Introduction and statement of results}\label{sec-intro}

Solitons and breathers represent well known  localized solutions in many integrable
systems. Due to their ``elastic'' interaction, they can also be viewed as ``quasi-particles" of complex
statistical objects called soliton and breather gases. The nontrivial relation 
between the integrability and randomness in these gases falls within the framework
of “integrable turbulence”, introduced by V. Zakharov in \cite{Za09}. The latter was
motivated by the complexity of many nonlinear wave phenomena in physical
systems that can be modeled by integrable equations. In view of the growing
evidence of wide spread presence of the integrable gases (fluids, nonlinear optical media, etc.), see \cite{ElTovbis}, \cite{El-review} and references therein, they present a fundamental interest for nonlinear science.

In this paper we consider soliton and breather gases for the fNLS
\begin{equation}  \label{NLS}
i  \psi_t +  \psi_{xx} +2 |\psi|^2 \psi=0,
\end{equation}
where $x,t\in \R$ are the space-time variables and  $\psi:\R^2 \to \C$ is the unknown complex -valued function. 

One of the central objects in the spectral theory of soliton and breather gases is the nonlinear dispersion relations (NDR), defining the density of states (DOS) $u(z)$ and its temporal analog (density of fluxes) $v(z)$. The NDR 
for the fNLS breather  gas are defined by geometry (a Schwarz symmetric branchcut (band) $\g_0$ and a
compact  $\G^+\subset \C^+\setminus \g_0$) and a spectral scaling function $\s(z)\geq 0$, $z\in \G^+$. The compact $\G^+$ is the locus of accumulation of shrinking spectral bands in the {\it thermodynamic limit} of some finite gap solutions of \eqref{NLS}, whereas $\s(z)$ represents the ratio of scaled logarithmic bandwidth and the density of the bands.
Description of the thermodynamic limit as well as 
further discussion and some details about the derivation of the NDR can be found in Section \ref{sec-background}.

The NDR for the solitonic component of the fNLS breather  gas have the form (\cite{ElTovbis})
\begin{multline} \label{dr_breather_gas1}
\int_{\G^+}
\le[\log\le| \frac{w-\bar z}{w-z}\ri|+ \log\le|\frac{R_0(z)R_0(w)+z w -\d_0^2}
{R_0(\bar z)R_0(w)+\bar z w-\d_0^2}\ri|\ri]   u(w) d\lambda(w)
+\sigma(z)u(z) \\
= \Im R_0(z), 
\end{multline}
\begin{multline}  \label{dr_breather_gas2}
 \int_{\Gamma^+}
\left[\log\left| \frac{w-\bar z}{w-z}\right|+\log\left|\frac{R_0(z)R_0(w)+ z w-\delta_0^2}
{R_0(\bar z)R_0(w)+ \bar z w-\delta_0^2}\right|\right] v(w) d\lambda(w)
+\sigma(z)v(z) \\
= - 2\Im[z R_0(z)],
\end{multline}
 where  $R_0(z)=\sqrt{z^2-\delta_0^2}$ with the branchcut $\g_0=[-\d_0,\d_0]\subset i\R$ being the exceptional (permanent) band,
 $z\in\Gamma^+$, $\s$ is a continuous non negative function on $\G^+$
and $\lambda$ is some reference measure on $\Gamma^+$ that
reflects the density of accumulating in the thermodynamic limit
bands. In the case of $\d_0=0$ the fNLS breather gas reduces to the 
fNLS soliton gas with the NDR
\begin{align} \label{dr_soliton_gas1}
 \int _{\Gamma^+}\log \left|\frac{w-\bar z}{w-z}\right|
u(w)d\lambda(w)+\sigma(z)u(z)& = \Im z,\\
\label{dr_soliton_gas2}
 \int _{\Gamma^+}\log \left|\frac{w-\bar z}{w-z}\right| v(w) d\lambda(w)+  \sigma(z) v(z)& = -4 \Im z\Re z.
\end{align}
We routinely assume that $\lambda$ is the standard area measure if $\G^+$ (or its connected component) is a
 $2D$ region, or the arclength measure if $\Gamma^+$ (or its connected component) is a contour.

 It has to be noted that the properties of $u,v$ are completely defined by $\g_0$, the compact $\G^+$ and 
 the function $\s$.
Rigorous mathematical analysis of equations  \eqref{dr_soliton_gas1}-\eqref{dr_soliton_gas2} was reported
in \cite{KT2021}. It was shown there (\cite{KT2021},  Corollary 1.7) that, subject to certain mild conditions, each of  these equations has
a (unique) solution and, moreover, the solution $u(z)$ of \eqref{dr_soliton_gas1} is non negative. Moreover, in the case of a 1D compact $\G^+$, it was  shown that $u,v$ inherit some smoothness from $\G^+, \s$.
Most of these results (but not $u\geq 0$) will hold if we replace the the right hand side of \eqref{dr_soliton_gas1} with any sufficiently smooth (at least two times continuously differentiable) function. Similar results are expected for more general
equations \eqref{dr_breather_gas1}-\eqref{dr_breather_gas2}
but this work have not been completed yet.

In the present paper we will assume the existence and uniqueness  of solutions $u,v$, where $u\geq 0$, to  \eqref{dr_breather_gas1}-\eqref{dr_breather_gas2}. 
{\it To be more precise,  we will assume that the solutions $u,v$ of  \eqref{dr_breather_gas1}-\eqref{dr_breather_gas2}, as well as that
of \eqref{dr_soliton_gas1}-\eqref{dr_soliton_gas2}, belong to $L^1(\G^+)$.} This assumption is not too restrictive: for example, 
it was proved in \cite{KT2021} that $\s u \in L^1(\G^+)$ (with respect to the reference measure $\l$) for any $u$ satisfying 
\eqref{dr_soliton_gas1}; moreover,  the requirement $\s>0$ on $\G^+$ implies the continuity of $u$. 
Sometimes, 
the solutions $u,v$ will be assumed to have certain smoothness provided $\G^+$ and $\s$ possess the appropriate smoothness.

Equations  \eqref{dr_breather_gas1}-\eqref{dr_breather_gas2} were derived in  \cite{ElTovbis} as thermodynamic limits of the
two $N\times N$ linear systems, see \eqref{WFR}, 
of equations satisfied (respectively) by  the solitonic  wavenumbers  $k_m$ and the frequences $\o_m$,
$|m|=1,\dots, N$, of a finite gap (nonlinear multi phase) solution $\psi_N$ to \eqref{NLS}. A finite gap  solution $\psi_N$ is defined by a
hyperelliptic Schwarz symmetrical Riemann surface $\mathfrak R_{N}$  of the genus $2N$ together with a collection of $2N$ initial phases,
see Section \ref{sec-background} for some details. As it is known (\cite{ForLee}, \cite{BBIM}),  $k_m$ and $\o_m$ are defined as 
periods of certain 2nd kind meromorphic differentials $dp_N,dq_N$ on  $\mathfrak R_{N}$, called quasimomentum and quasienergy
respectively. In fact, the  $N\times N$ systems  of linear equations defining $k_m, \o_m$ are imaginary parts of Riemann bilinear identities for the  normalized holomorphic differentials on  $\mathfrak R_{N}$ and  $dp_N,dq_N$  respectively.

The  differentials $dp_N,dq_N$ and their higher analogs are interesting and important objects,
since their Laurent expansions at $z=\infty$ contain valuable information about  $\psi_N$. For example,
\be
\frac{dp_N}{dz}=1-\sum_{m=1}^\infty \frac{I_{m,N}}{z^{m+1}},
\ee
where $I_{m,N}$, $m\in\N$,  represent average densities of $\psi_N$  that are conserved with the $t$ evolution (\cite{ForLee}).
As it is well known, the coefficients  $I_{m,N}\in \R$. Similar  Laurent coefficients for $\frac{dq_N}{dz}$ and for the corresponding  higher meromorphic   differentials  on $\mathfrak R_{N}$ represent the average fluxes of the fNLS and higher NLS hierarchy flows.

One of the main subjects of this paper is the derivation of the average densities    $I_{m,N}$ as well as of  the corresponding  average fluxes in 
the thermodynamic limit $N\ra\infty$, followed by the calculation of the thermodynamic limits of $\frac{dp_N}{dz}$ and other meromorphic differentials.
We also provide the corresponding error estimates.

Let
$d_m\d_0^{m+1}$, $m\in\N$,
denote the coefficient of $z^{-m}$ in the Laurent expansion of $R_0(z)$ at infinity, where $R_0(z)$ is defined below  \eqref{dr_breather_gas2}. That is,  
 $d_m=0$ when $m$ is even and 
\be\label{dm-int}
d_m= -\frac{1}{m}\frac{m!!}{(m+1)!!}
\ee 
when $m$ is odd. Note if we set $m=2n+1$, then $d_{2n+1}=-2^{-2n}C_n$, where $C_n$ is the Catalan number (see DLMF 26.5.1). We start with the following theorem for the fNLS breather gas.

\bt\label{theo-ImN}
(i) Fix some $m\in\N$. Then  for a sufficiently large $N$ in the thermodynamic limit of the 
breather gases we have 

\begin{align}\label{br-avg-den}
     &&I_{m,N}=\le[\frac{m}{2\pi i}\sum_{|j|=1}^N U_{j,N}\oint_{B_j}\frac{[\z^{m-1}R_0(\z)]_+d\z}{R_0(\z)}+md_m\d_0^{m+1}\ri](1+O(N^\hf\delta)),
\end{align}
where:  the polynomial $[f(\z)]_+$ consists of the non-negative powers   of the Laurent series expansion of $f(\z)$ at infinity; 
$\d>0$ denotes the largest length among all  the shrinking bands;
$B_j$ are the cycles on $\mathfrak R_{N}$ shown at Figure \ref{Fig:Cont}, and; $U_{j,N}=\hf k_j$ with the  wavenumbers $k_j$ described in Section \ref{sec-background}.

(ii) If the measures $\l_N:=\sum_{j=1}^N\frac{U_{j,N}}{2\pi}\d(z-z_{j,N})$, where $z_{j,N}\in\G^+$ denotes the center of the $j$th bands of $\mathfrak R_{N}$, $j=1,\dots,N$, and $\d(z)$ denotes the delta-function,
are weakly* convergent to the measure $u(z)d\l(z)$ on $\G^+$, where $u(z)$ solves \eqref{dr_breather_gas1}, then the thermodynamic limit of   $I_{m,N}$ is given  by
\begin{align}\label{inv-breather-t1}
I_m:= \lim_{N\ra\infty}I_{m,N}=2m\int_{\G^+}u(\z)\Im{F_m(\z)}d\l(\z)+md_m\d_0^{m+1},
 \end{align}
where 
\begin{align}\label{def-F-m}
   F_m(z)=\int_0^z \frac{[\z^{m-1}R_0(\z)]_+}{R_0(\z)}d\z. 
\end{align}

\et

The proof of Theorem \ref{theo-ImN} can be found in Section \ref{sec-cons-dens}.
As it was mentioned above, soliton gas can be obtained from breather gas by shrinking the exceptional band $\g_0$, that is, by taking 
limit $\d_0\ra 0$. In this case the statements of Theorem \ref{theo-ImN} are given in the following corollary.

\bc\label{cor-sol-inv}
In the case of the fNLS
soliton gas, i.e. when $\g_0$ is one of the shrinking bands, the equations \eqref{br-avg-den} and \eqref{inv-breather-t1} become
\bea\label{inv-soliton-t1}
 &&I_{m,N}=\le (\frac{m}{2\pi i}\sum_{|j|=1}^N {U_{j,N}}\oint_{B_j}\z^{m-1}d\z \ri )(1+O(N^\hf\delta)), \cr
&&I_m =2\int_{\G^+}u(\z)\Im \z^m d\l(\z)
 \eea
 respectively, 
 where $\d$ denotes the length of the largest band.
\ec


Similar results for averaged fluxes and their higher time analogs are discussed in Subsection \ref{sect-higher}.

To estimate  the rate of convergence of $I_{m,N}\ra I_m$
for a fixed $m\in\N$ one needs to know the rate of convergence of  the measures $\l_N$ in Theorem \ref{theo-ImN}, 
to $u d\l$ on $\G^+$. This question was considered in Section \ref{sect-approx-u} for the soliton gas with a 1D compact (contour) $\G^+$ under   the  following additional assumptions: (i) $\s>0$ on $\G^+$; (ii) the solution $u(z)$ of \eqref{dr_soliton_gas1} is $\a$-H\"older continuous on $\G^+$ with some $\a\in(0,1]$, and; (iii) the density $\varphi$ of the centers of the shrinking bands provides  $O(N^{-\beta})$, $\beta>0$, approximation for the arclength measure $\l$, see Section \ref{sec-background}, including \eqref{varphi-def}, for details.

Under these assumptions the fact that  the measures $\l_N$ provide $O(N^{-\varrho})$ approximations  to $u d\l$ on $\G^+$, where $\varrho=\min\{\a,\beta\}$, follows directly from Theorem \ref{th-v>0}.

\bc
 Under the conditions of Theorem \ref{th-v>0}, 
 \be\label{I_m-est}
 |I_m-I_{m,N}|=O(N^{-\varrho})
 \ee 
 for any $m\in\N$ provided $\varrho<1$. The accuracy in \eqref{I_m-est} is $O(\frac{\ln N}N)$ if $\varrho=1$.
 \ec

Theorem \ref{th-v>0} also implies that, in the thermodynamic limit,  $NU_{j,N}$ is approximated by $\frac{1}{\pi}\varphi(z_{j,N})u(z_{j,N})$ with the accuracy $O(N^{-\varrho})$ as $N\ra \infty$, thus providing the error estimate of the transition  from the discrete NDR \eqref{WFR} (systems of linear equations) to its continuous counterpart \eqref{dr_soliton_gas1}-\eqref{dr_soliton_gas2} (integral equations). The accuracy is $O(\frac{\ln N}N)$ in the case $\varrho=1$.

Next,  we consider the thermodynamic   $\lim_{N\ra\infty}\frac{dp_N}{dz}$ of the density of the quasimomentum 
meromorphic differential, which we define as 
\be\label{def-dpdz}
\frac{dp}{dz}:=1{-}\sum_{m=1}^\infty \frac{I_{m}}{z^{m+1}} 
 \ee
 in a neighborhood of $z=\infty$ on the main sheet. It is clear that a requirement such as  $u\in L^1(\G^+)$ is sufficient 
 for the series in \eqref{def-dpdz} to be convergent in a neighborhood of $z=\infty$.
In the {\it remaining part of the Introduction we  restrict ourselves to the case when the compact  $\G^+$ is a contour (or a collection of contours).} 
Moreover, we: call the connected parts of $\G=\G^+\cup \bar\G^+$ ``superbands";   define $\mathfrak R_{\infty}$ as the limiting hyperelliptic Riemann surface  to the sequence $\mathfrak R_{N}$ , $N\in\N$,
	where the    branchcuts of $\mathfrak R_{\infty}$ coincide with  the superbands of  $\G$.
 
 The general complex nonlinear dispersion relations for the fNLS breather gas were obtained  in 
 \cite{ElTovbis}.  The imaginary parts of the general complex NDR form the NDR \eqref{dr_breather_gas1}-\eqref{dr_breather_gas2} for the solitonic components of the fNLS breather gas, whereas their real parts form  the NDR for the corresponding carrier components.
 The first general complex NDR 
equation has the form:
\begin{multline} \label{dr_breather_gas1-gen}
i\int_{\G^+}
\le[\log \frac{\bar w-z}{w-z}+ \log\frac{R_0(z)R_0(w)+z w -\d_0^2}
{R_0( z)R_0(\bar w)+ z \bar w-\d_0^2}+ i\pi \chi_z(w)\ri]   u(w) |d w| \\
+i\sigma(z)u(z) 
= R_0(z)+\tilde u(z) 
\end{multline}
(\cite{ElTovbis}, equation (25)),
where: (i) $\tilde u(z)$  is the ``carrier density of states" function that is
defined  as a smooth 
interpolation of the carrier wavenumbers (see Section \ref{sec-background}), i.e, $\tilde u(z)$ interpolates the values $\hf\tilde k_j$ at $z_j\in\G^+$, $j=1,\dots,N$, and;
(ii)
$\chi_z(\m)$ is the indicator function of the arc $(z_\infty,z)$ of $\G^+$. Here $z_\infty$ denotes the beginning of the oriented curve $\G^+$.

Let us consider 
first  the soliton gas. 
To further simplify the situation, take $\G^+\subset i\R^+$. Then we prove that
\be\label{dp/dz-vert}
\frac{dp}{dz}=1{-}2\pi iC_\G[ u] \quad\text{ on $\bar\C\setminus \G$ },
	\ee
where: $\G=\G^+\cup \G^-$, $\G^-=\overline{\G^+}$; the density of states $u$ (see \eqref{dr_soliton_gas1}) has  odd (anti-Schwarz symmetrical) continuation to $\G^-$, and; $C_{\G}$ denotes the Cauchy transform on $\G$. Thus, $\frac{dp}{dz}$ is analytic in $\bar\C\setminus \G$ and has a jump $-2\pi i u(z)$ on $\G$.

 It is an interesting observation that  $ \frac{dp}{dz}$ defined by \eqref{dp/dz-vert} in the case of a condensate (i.e., $\s\equiv 0$)   coincides with the denisty of the quasimomentum differential  $ \frac{dp_\infty}{dz}$ of $\mathfrak R_{\infty}$. Indeed, it was proved in \cite{KT2021}, Theorem 6.1, that in the case of a bound state (i.e.,  $\G\subset i\R$) soliton condensate  the DOS
$u(z)=\frac{i}{\pi}\left(\frac{dp_\infty}{dz}\right)_+$,  where 
$\left(\frac{dp_\infty}{dz}\right)_+$ denotes the boundary value on the positive side of $\G$.  Then it is easy to check that 
$\frac{dp}{dz}$ and $\frac{dp_\infty}{dz}$ have the same jump $-2\pi i u(z)$ on $\G$ and the same asymptotics at $z=\infty$ on $\bar \C$. Since $u(z)$ has at most $z^{-\hf}$ singularities on $\G$ (at the endpoints), we proved that for a bound state condensate  
\be\label{bstate-cond}
\frac{dp}{dz}\equiv \frac{dp_\infty}{dz}.
\ee
As discussed in Section \ref{sec-background}, this result is also valid for the KdV soliton condensates.

In general, the requirement $\G^+\subset i\R^+$ from  \eqref{dp/dz-vert} can be removed if we introduce
 $\breve u(z): = u(z) e^{-i\th(z)}$, 
	where $\th(z)=\arg dz$ along  $\G^+$ at $z\in\G^+$ traverses it in the positive direction. Then (Theorem \ref{the-lim-dpdz}) 
 equation \eqref{dp/dz-vert} becomes
 \be\label{dp/dz}
\frac{dp}{dz}=1{+}2\pi C_\G[\breve u] \quad\text{ on $\bar\C\setminus \G$ },
	\ee
	so that \eqref{dp/dz-vert}  is a particular case of \eqref{dp/dz} .
 
Let  $\G^+$ be a given contour.
 Then equations \eqref{dr_soliton_gas1} and \eqref{dp/dz} allow one to express any two of the functions $\frac{dp}{dz}$, $\s(z)$, $u(z)$
 in terms of the remaining one. Similar results can be obtained for the quasienergy density $\frac{dq}{dz}$  as well as for the thermodynamic limits 
 of the higher meromorphic differentials on $\mathfrak R_{N}$.

Let $\g\in\C$ be a simple Schwarz symmetric curve that consisits of the superbands and connecting them gaps.
Together with the density $\frac{dp}{dz}$ we consider  its ``antiderivative" $2g_x(z)$, satisfying $\frac d{dz} [2g_x(z)]=1-\frac{dp}{dz}$. The notation $2g_x$ emphasizes connection with the $g$-functions associated with the Riemann-Hilbert Problems (RHP) for the finite gap solutions $\psi_N$ of the fNLS \eqref{NLS}, see Section \ref{sec-background}. Indeed, it follows from \eqref{dp/dz} that
 \begin{align}
\label{2gx-lim-int}
2g_x(z):=z-\int_0^zdp=
-\int_\G u(\m)\arg\m|d\m|-i\int_\G\ln (\m-z)u(\m)|d\m|,
   \end{align}
so that 
\be\label{2gx-infty}
2g_x(\infty)=-\int_\G u(\m)\arg\m|d\m|.
\ee
If $0\in\G$, the integral in \eqref{2gx-lim-int} start from the positive side of $\G$, i.e., from $0\in\G_+$.
Note that  $2g_x$ is analytic in $\bar\C\setminus \G$ but,  
	if $\G$ consists of several superbands, not necessarily  single valued there. However, similarly to the g-function from Section \ref{sec-background}, $2g_x(z)$ is analytic and single valued in $\bar\C\setminus \g$.

Next, in Theorem \ref{the-lim-dpdz-breath}, we obtain a similar to
\eqref{2gx-lim-int} expression for
the thermodynamic limit of $2g_x(z)$ for the breather gas. As a consequence, combining it with \eqref{dr_breather_gas1-gen},
we derive the following corollary for both soliton and breather gases.

\bc\label{cor-gxpm-int}
For any $z\in\G$ we have
\be\label{gxpm-uu-int}
g_{x+}(z)+g_{x-}(z)=\tilde u(z)-i\s(z)u(z)+z, 
\ee
where $\tilde u(z)$ is defined in \eqref{dr_breather_gas1-gen}. That is,
\be\label{su-sol}
\s(z)u(z)=\Im (z- 2g_x(z)),\quad \tilde u(z)=\Re[g_{x+}(z)+g_{x-}(z)-z].
\ee
\ec

It is shown in Section \ref{sec-cons-dens} that the above mentioned results regarding the DOS $u(z)$ and the quasimomentum differential $dp$ after appropriate reformulation will be valid for the density of fluxes $v(z)$, see \eqref{dr_breather_gas2}, and the quasienergy
differential $dq$. They will also be valid for higher meromorphic differentials associated with the flows of the fNLS hierarchy.
Moreover, the corresponding results will still be valid if we replace the right hand side in \eqref{dr_soliton_gas1}
by ${xz+2tz^2+f_0(z)}$, where $f_0(z)$ is a sufficiently smooth Schwarz symmetrical function, defining the initial phases of a particular family of finite gap solutions $\psi_N$, $N\ra\infty$,
to the fNLS \eqref{NLS}, that is, defining a particular realization of a soliton gas.
Various sets of sufficient conditions on $f_0(z)$ (or even $f_0(z,N)$) will be studied elsewhere.
It is expected that similar results will be also valid for breather gases.

 In Section \ref{sec-per} we introduce a new class of {\it periodic} soliton and breather fNLS  gases. These are the gases whose density of bands $\varphi(z)$ and the scaled bandwidth $\nu(z)$, see Section \ref{sec-background} for details, are determined through  the semiclassical  ($\e\ra 0$) limit of the fNLS spectral data  with some  periodic potential $q(x)$.
 
 In this paper we assume that $q(x)$ is an even, nonnegative, continuous and  single humped $2L$-periodic, $L>0$,  function, such
 that $M=q(0)$ and $m=q(L)\geq 0$ are the maximum and the minimum of $q$  respectively. In particular, $q(x)$ must be strictly monotonically decreasing on $[0,L]$.
 It follows from results of \cite{BioOre2020} and \cite{BOT2021} that: 
 \begin{itemize}
     \item the bands of the Lax spectrum of the corresponding Zakharov-Shabat operator
 are  confined to the ``cross'' $\R\cup [-iM,iM]$ in the limit $\e\ra 0$;
 \item $\R$ is a single band and if  $m>0$ then $[-im,im]$ is also (asymptotically) a single band, and;
 \item   
 the leading $\e\ra 0$ order of the Floquet discriminant (the trace of the monodromy
 matrix for Zakharov-Shabat ) on $z\in[im,iM]$ is
 \be\label{w-lam}
 w(\l)=2\cos\frac{S_1(\l)}{\e}\cosh \frac{S_2(\l)}{\e},
 \ee
  where $\l=-z^2$ 
  and
  \be\label{S12I1}
S_1(\l)=\int_{-q^{-1}(|z|)}^{q^{-1}(|z|)}\sqrt{[q^2(x)-\l]}dx,~~~~~~\qquad S_2(\l)=\int_{q^{-1}(|z|)}^{L}\sqrt{|q^2(x)-\l|}dx.
\ee
 
 \end{itemize}

The latter  statement was obtained in \cite{BioOre2020} through formal WKB arguments.  Some rigorous results about the localization of the spectral
bands can be found in \cite{BOT2021}. 

Equations \eqref{w-lam}, \eqref{S12I1} show that if we take
$\G^+=[im,iM]$, the number of spectral bands will grow like $O(1/\e)$ whereas the size of these bands will decay exponentially in $1/\e$.
Thus, the semiclassical fNLS evolution of the periodic potential $q(x)$ should provide a realization of a soliton ($m=0$) or a breather ($m>0$) fNLS gas. Spectral characteristics of such gases, namely,
the density of bands $\varphi(z)$ and the scaled logarithmic bandwidth $\nu(z)$, $z\in \G^+$, see Section \ref{sec-background}, given by
  \be\label{per-nu-eta-int}
 \varphi(z)=\frac{-iz\int_{0}^{q^{-1}(|z|)}\frac{dx} {\sqrt{q^2(x)+z^2}}}{\int_0^L\sqrt{q^2(x)-m^2}dx }, \qquad
\nu(z)=
\frac{\pi \int_{q^{-1}(|z|)}^{L}\sqrt{|q^2(x)+z^2|}dx}{2\int_0^L\sqrt{q^2(x)-m^2}dx}
\ee
respectively, are calculated 
 in Section \ref{sec-per-sol}. 
 
The above calculations serve as  a motivation   to call fNLS soliton gases with $\G^+=[0,iM]$  (the case of $m=0$) and $\varphi, \nu$ given by \eqref{per-nu-eta-int} as  {\it periodic}
fNLS soliton gases.  The gases corresponding  to $m>0$ with the same $\varphi, \nu$  and $\G^+=[im,iM]$ will be called {\it periodic} fNLS  breather gases with the permanent 
band $\g_0=[-im,im]$. 

The density of states $u(z)$ of a periodic gas must be proportional 
to its density of bands $\varphi(z)$, a fact that is intuitively clear  since the fNLS evolution of a periodic potential remains periodic for all times. 
In Section \ref{sec-per}, see Theorems \ref{the-per-sol-gas}, \ref{the-per-breath-gas}, we prove this fact and calculate the coefficient of proportionality. Namely, the density of states $u(z)$, $z\in \G^+$, of a periodic gas with 
a potential $q(x)$ is given by
  \be\label{uz-per}
  u(z)= \frac{1}{\pi L}\int_0^L\sqrt{q^2(x)-m^2}dx \cdot\varphi(z)=
  \frac{-iz}{\pi L}\int_{0}^{q^{-1}(|z|)}\frac{dx} {\sqrt{q^2(x)+z^2}}.
  \ee

It is worth noting that for the considered periodic gases the density of fluxes $v(z)\equiv 0$, as the right hand sides
of \eqref{dr_soliton_gas2}, \eqref{dr_breather_gas2} are identical zeros.

As an example, consider the potential $q(x)=Q$ on $[0,L]$ and assume ${q^{-1}(|z|)}=L$ for any $z\in \G^+$, where $\G^+=[0,iQ]$. Then \eqref{uz-per} immediately yields 
\be\label{bstate-con}
u(z)=\frac{-iz}{\pi  \sqrt{Q^2+z^2}}, \quad z\in [0,iQ],
  \ee
which is the well known DOS of the genus zero  bound state soliton condensate, see \cite{ElTovbis}, Example 1. Of course, $q(x)=Q$ does not satisfy the monotonicity assumption we put on $q(x)$, but that can be mitigated by modifying  $q(L)=0$ just at one point $x=L$ and then approximating (with respect to any integral norm) the ``modified" discontinuous $q(x)$ by smooth and strictly monotonic functions. That will justify $u(z)$ given by \eqref{bstate-con}. Note that the corresponding $\nu(z)\equiv 0$.

On the other hand, we can set $q(L)=m$ for any $m\in(0,Q)$ and repeat the previous reasoning. We will still have the same $u(z)$ given by \eqref{bstate-con} but only for $z\in[im,iQ]$. Thus, the same DOS $u$ simultaneously satisfies the NDR \eqref{dr_soliton_gas1} for soliton gas on $\G^+=[0,iQ]$ and 
the NDR \eqref{dr_breather_gas1} for breather gas on $\G^+=[im,iQ]$.
In both cases $\n\equiv 0$ and, thus, $\s\equiv 0$.
Theorems \ref{the-per-sol-gas}, \ref{the-per-breath-gas} imply that the above mentioned property of DOS $u(z)$ is valid for any  $q(x)$ with $m>0$.

We then calculate (Theorem \ref{the-per-I-gx}) the average densities $I_k$ for the  periodic soliton and breather gases, which turned out to be zero for any even $k$ and proportional to the $(k+1)$th moment of $q(x)$ for any odd $k$:

\begin{align}
    I_k=\frac{(-1)^{\frac{k+1}{2}}kd_k}{L}\int_0^L q^{k+1}(x)dx,
    \qquad k=1,3,5,\dots.
\end{align}
Moreover,
\begin{align}
    2g_x(z)=\frac{1}{L}\int_0^L\le({z-\sqrt{z^2+q^2(x)}}\ri)dx+\frac{1}{L}\int_0^L\sqrt{q^2(x)-m^2}dx,\quad z\in \bar\C\backslash \G.
\end{align}

According to Corollary \ref{cor-gxpm-int}, that implies
\be\label{comp-u}
{\s(z)u(z)=\frac 1{ L}\int_{q^{-1}(|z|)}^L\sqrt{|z^2+q^2(x)|}dx, \quad
\tilde u(z)=\frac{1}{L}\int_0^L\sqrt{q^2(x)-m^2}dx}
\ee
on $\G^+$ for both soliton and breather periodic gases.

 In addition to the bound state condensate \eqref{bstate-con}, a few more particular examples of  periodic soliton gases were considered in Section \ref{sec-per}.

 Finally, in Appendix \ref{sec-err}, Lemma \ref{lem-est-evenB}, we find the  thermodynamic limit of the  asymptotic behavior of the coefficients of the systems of linear equations \eqref{WFR}  for $k_j,\o_j$ together with error estimates. This result is later used in Theorem \ref{th-v>0},
 where we show that DOS $u(z)$ in the thermodynamic limit can be approximated by a piece-wise constant function $\hat u(z)$  that is defined through the wavenumbers $k_j$.

\section{Background}\label{sec-background}

The  simplest solution of  equation \eqref{NLS} 
is a plane wave 
 \begin{equation}\label{pw}
\psi= q e^{2iq^2  t}, 
\end{equation}
where $q>0$ is the amplitude of the wave.

It is well known that the fNLS is an integrable equation \cite{ZS}; the Cauchy (initial value) problem for \eqref{NLS}  can
be solved using the inverse scattering transform (IST) method for different classes of initial data (potentials).
The scattering transform connects a given potential with its scattering data expressed in terms of the spectral variable $z\in\C$.
In particular, the scattering data consisting of one pair of spectral points $z=a\pm ib$, where $b>0$, and a (norming) constant $c\in \C$,
defines the famous  soliton solution 
\begin{equation}\label{nls_soliton}
\psi_S (x,t)= 2ib \, \hbox{sech}[2b(x+4at-x_0)]e^{-2i(ax + 2(a^2-b^2)t)+i\phi_0},
   \end{equation}
to the fNLS. This solution represents a solitary traveling wave  (pulse  on a zero background) with $c$ defining the initial position $x_0$ of its center and the initial phase $\phi_0$.
It is characterized by two independent papameters:  $b=\Im z$ determines the soliton amplitude $2b$ and
$a=\Re z$ determines its velocity $s=-4 a$. Scattering data that consists of several points
$z_j \in \C^+ $ (and their complex conjugates $\bar z_j$), $j\in \N$, together with their norming constants corresponds to  multi-soliton
solutions. Assuming that at $t=0$ the centers of individual solitons are far from each other, 
we can represent the fNLS time evolution of a multi-soliton solution as propagation and interaction of the 
individual solitons.

It is well known 
that the interaction of solitons in multi-soliton fNLS solutions reduces to only two-soliton elastic
collisions, where
the faster soliton (corresponding to $z_m$) gets a forward shift \cite{ZS}
$$
\Delta_{mj}=\frac1{\Im(z_m)}\log\left|\frac{z_m-\overline{z}_j}{z_m-z_j}
\right|
$$
and the slower ``$z_j$-soliton'' is shifted backwards by $-\Delta_{mj}$.

Suppose now we have a ``large ensemble''  (a ``gas'') of solitons \eqref{nls_soliton} whose  spectral characteristics $z$ are 
distributed over a compact set $\G^+\subset \C^+$ according to some non negative measure $\m$. Assume also that 
the locations (centers) of these solitons are distributed uniformly on $\R$ and that $\mu(\G^+)$ is small, i.e, the gas 
is dilute. Let us consider the speed of the trial $z$ - soliton in the gas. Since it undergoes rare but
sustained collisions with other solitons, the speed $s_0(z)=-4\Re z$ of a free solution must be modified as
\begin{equation} \label{mod-speed}
s(z)=s_0(z) + \frac{1}{\Im z} \int_{\G^+}\log \le|\frac{w-\bar z}{w-z}\ri|[s_0(z)-s_0(w)]d\mu(w).
\end{equation}
Similar modified speed formula was first obtain by V. Zakharov \cite{Zakharov} in the context of the Korteweg-de Vries
(KdV) equation. Without the ``dilute'' assumption, i.e, with $\mu(\G^+)=O(1)$, equation \eqref{mod-speed} for $ s(z)$ turns 
into the 

integral equation
\begin{equation} \label{eq-state}
s(z)=s_0(z) + \frac{1}{\Im z} \int_{\Gamma^+}\log \left|\frac{w-\bar z}{w-z}\right|[s(z)-s(w)]d\mu(w)
\end{equation}
 known as  {the} {\it equation of state} for the fNLS soliton gas, an analog of which was first obtained in \cite{ElKamch} using 
purely physical reasoning. Here  $s(z)$ has the meaning of the speed of the ``element of the gas'' associated with the spectral
parameter $z$ (note that when $\mu(\G^+)=O(1)$ we cannot distinguish individual solitons).

{A} similar equation in the KdV context was obtained earlier in \cite{El2003}. 
If we now assume some  dependence of  $s$ and $u$ on space time parameters $x,t$  
(here $d\mu=u d\lambda$  with $\lambda$ being the Lebesgue measure) that  
occurs on very large spatiotemporal scales, 
then we complement the equation of state 
\eqref{eq-state} by the continuity equation for the density of states
\begin{equation} \label{kinet}
\part_t u+\part_x (su)=0, 
\end{equation}
which was first suggested in  \cite{ElKamch} and derived in \cite{ElTovbis}. Equations \eqref{eq-state},
\eqref{kinet} form the kinetic equation for  {a} dynamic (non-equilibrium) fNLS soliton gas. 
The kinetic equation for the KdV soliton gas was derived in  \cite{El2003}.
It is remarkable that 
recently the kinetic equation having
similar structure 
was derived in the framework of the “generalized hydrodynamics”
for quantum many-body integrable systems, see, for example, \cite{DYC, DSY, VY}. 

It is easy to observe that \eqref{eq-state} is a direct consequence of
\eqref{dr_breather_gas1}-\eqref{dr_breather_gas2}, where $s(z)= \frac{v(z)}{u(z)}$.
Indeed, after multiplying  \eqref{dr_soliton_gas1} by $s(z)$,
substituting $v(z)=s(z)u(z)$ into  \eqref{dr_soliton_gas2}, subtracting the second equation from the first one
and dividing both parts by $\Im z$ we obtain exactly \eqref{eq-state}. In this paper we mostly consider the NDR \eqref{dr_breather_gas1}-\eqref{dr_breather_gas2} and \eqref{dr_soliton_gas1}-\eqref{dr_soliton_gas2} for equilibrium soliton gases, that is,
we do not assume any dependence of $u,v$ on the space-time variables $x,t$.

A mathematical albeit formal (i.e., without, for example, error estimates)  derivation of the equation of state \eqref{eq-state} 
was presented in the recent
paper \cite{ElTovbis}. The first step in this process is derivation of equations \eqref{dr_soliton_gas1}-\eqref{dr_soliton_gas2}, which describe
the density of states $u$ and its temporal analog $v$.
The derivation  is based on the idea of  thermodynamic limit for a family of
finite gap solutions of the fNLS, which was originally developed for the KdV equation in \cite{El2003}. 
Finite-gap solutions are quasi-periodic functions in $x,t$ that  can be spectrally represented by a finite
number of Schwarz symmetrical arcs (bands) on the complex $z$ plane. Here Schwarz symmetry means that either
a band $\g$ coincides with its Schwarz symmetrical image $\bar \g$ or if $\g$ is a band then  $\bar\g$
is another band. Assume additionally that there is a complex 
constant (initial phase) associated with each band that also respects the Schwarz symmetry, i.e.,  Schwarz symmetrical bands 
have Schwarz symmetrical phases. Given a finite set of Schwarz symmetrical bands with the corresponding phases,
a finite-gap solution to the fNLS can be written explicitly in terms of the Riemann theta functions
on the hyperelliptic Riemann surface $\mathfrak R$, where  the bands are the branchcuts of $\mathfrak R$, see, for example, \cite{BBIM} and references therein.

For convenience of the further brief exposition, we will consider  the hyperelliptic Riemann surface $\mathfrak R=\mathfrak R_{N}$ to be of genus $2N$, which  
equals the number of bands minus one. The one exceptional band $\g_0$ will be crossing $\R$, whereas the remaining $N$
bands $\g_j\subset \C^+$, $j=1,\dots,N$, and their Schwarz symmetrical $\g_{-j}:=\bar\g_j\subset \C^-$. 
Definition of a particular finite-gap solution of the fNLS   starts with  
a smooth Schwarz symmetrical 
function $f_0=f_0(z;N)$ that is  defined on $\g_j$, $j=0,\pm 1\cdot, \pm N$. Given $f_0$, the
corresponding finite gap solution of the  fNLS can be defined through the solution of the following matrix RHP (see, for examlpe,
\cite{DIZ}, \cite{TVZ1}).

\begin{problem}\label{RHPY} Find a matrix-valued function $Y(z)$, such that $Y(z)$ is: (i) analytic together with its inverse  $Y^{-1}(z)$ on 
 {$\bar\C\setminus \cup_{|j|=0}^{N} \g_j$}; (ii) satisfies the jump condition
\be\label{RHPY-jump}
Y_+(z)=Y_-(z) i\s_2e^{2 if(z)\s_3}~~~{\rm on}~~~\g_j,~~~~j=0,\pm 1\dots,\pm N,
\ee
{where $\s_j$, $j=1,2,3$,  denote standard Pauli matrices, $ f(z)=f_0(z)+xz+2tz^2$} and the orientation of the bands $\g_j$ is shown on Figure \ref{Fig:Cont} below;
(iii) $\lim_{z\ra\infty}Y(z)=\1$, and; (iv) the boundary values $Y_\pm(z)$ on the positive and negative sides respectively are locally $L^2$ on all the bands $\g_j$.  
\end{problem}

It is well known that the RHP \ref{RHPY} has a unique solution and that the solution to the fNLS \eqref{NLS} is given by
\be\label{NLS-RHPY}
\psi(x,t)=-2 (Y_1)_{1,2},~~\text{where} ~~~Y(z)=\1+ Y_1z^{-1}+\cdots 
\ee
is the expansion of $Y(z)$ at infinity (\cite{Deift}). Note that $Y_1$ depends on $x,t$.

The next step in finding $Y(z)$ is to reduce the jump matrix on each $\g_j$ to a constant (in $z$, but not in $x,t$) matrix.
Assume that  there exists a simple piecewise smooth  symmetrical contour $\G$ such that 
$\G= \cup_{|j|=0}^N \g_j  \bigcup \cup_{|j|=1}^N c_j$, where the arcs $c_j$ will be called ``gaps'', connecting  the consecutive bands,
see Figure \ref{Fig:Cont}.
Then the reduction to the ``constant jumps" RHP 
can be done with the help of the so-called $g$-function, that is, by the transformation 
\be\label{Y2Z}
Y(z) = e^{-2ig(\infty)\s_3}Z(z)e^{2ig(z)\s_3}, 
\ee
where $g=g(z;x,t,N)$ is an unknown function, analytic at  $\bar\C\setminus \G$.
Then $Z(z)$ satisfies jump conditions 
\be\label{Z-jump}
Z_+(z)=Z_-(z)e^{2ig_-(z)\s_3} i\s_2e^{2 if(z)\s_3}e^{-2ig_+(z)\s_3} =Z_-(z)i\s_2e^{2 i(f(z)-g_+(z)-g_-(z))\s_3}
\ee
on each $\g_j$. Similarly, we have 
$ Z_+(z)=Z_-(z)e^{-2i(g_+(z)-g_-(z))\s_3}$ on each $c_j$.

The reduction to a piece-wise constant jump matrix for $Z(z)$ will be achieved if we define $g(z)$ as a solution of 
the following scalar RHP.  
\begin{problem}\label{RHPg}
Find  a function $g$ satisfying the following requirements: 1) $g$ is   analytic in $\bar \C\setminus \G$;
2) it satisfies the jump conditions
\begin{align}
\label{g-jump}
g_+(z)+g_-(z)&=f(z) +W_j~~~{\rm on}~~~\g_j, ~~|j|=0,1,\dots, N,  \nonumber\\
g_+(z)-g_-(z)&=\O_j~~~{\rm on}~~~c_j, ~~|j|=1,\dots,N,
\end{align} 
where $f(z)$ is given in RHP \ref{RHPY}, $W_0=0$ and the real constants $W_j, \O_j$, $|j|=1,\dots, N$, subject to the symmetries
$W_{-j}=W_j$, $\O_{-j}=\O_j$,  are to be defined,  and; 3) the boundary values $g_\pm(z)$ are locally $L^2$ functions. 
\end{problem}

According to Sokhotsky-Plemelj formula, the solution of the RHP \eqref{g-jump}, if exists, is given by
\begin{equation}\label{gform2}
g(z)={\frac{R(z)}{2\pi i}}\le[
\sum_{|j|=0}^{N} \int_{\g_j}{\frac{[f(\z)+W_j]d\z}{(\z-z)R_+(\z)}}+\sum_{|j|=1}^{N} \int_{c_j}{\frac{\O_jd\z}{(\z-z)R_+(\z)}}\ri].
\end{equation}
Here
\be\label{R_N}
R(z)
=\sqrt{\prod_{j=1}^{2N+1}(z-\a_{j})(z-\bar\a_{j})} ,
\ee
where $\a_{2j},\a_{2j+1}$ are the respective endpoints of the oriented main arcs  $\g_j$, $j=1,\dots,N$, $\a_1$ and $\bar\a_1$ are the 
endpoints of $\g_0$
and $R_+$ denotes the value
of $R$ on the the positive (left) side of the each $\g_j$. 
To show that \eqref{gform2} satisfies the RHP \ref{RHPg}, one has to show
that the right hand side of \eqref{gform2} is analytic at infinity, that is, the system
\be\label{WO}
\begin{split}
\sum_{|j|=1}^N W_j\oint_{A_j}{\frac{\z^m d\z}{R(\z)}} +\sum_{|j|=1}^N \O_j\oint_{C_j}{\frac{\z^md\z}{R(\z)}} 
=-\sum_{|j|=0}^N \oint_{A_j}{\frac{f(\z)\z^m d\z}{R(\z)}}
\end{split}
\ee
for unknown real constants $W_j, \O_j$,
where $m=0,1\cdots, 2N-1$, 
obtained from \eqref{gform2} by expanding $\frac{1}{\z-z}$ in powers of $\frac 1z$ as $z\ra\infty$, has a solution.
Here $A_j$,  $C_j$ are  negatively oriented loop  contours on $\mathfrak R_{N}$ containing the arcs $\g_j,c_j$ respectively and no other arcs; 
to allow loop contour integration, we have to assume that  
$f_0(z;N)$ has an analytic extension around each band $\g_j$. We note that \eqref{gform2} can be rewritten as 
\begin{equation}\label{gform3}
2g(z)={\frac{R(z)}{2\pi i}}\le[
\sum_{|j|=0}^{N} \oint_{A_j}{\frac{[f(\z)+W_j]d\z}{(\z-z)R_+(\z)}}+\sum_{|j|=1}^{N} \oint_{C_j}{\frac{\O_jd\z}{(\z-z)R_+(\z)}}\ri],
\end{equation}
where $z$ is outside of each loop.

Since the bands $\g_j$ and the gaps $c_j$ do not depend on $x,t$, differentiating the RHP \ref{RHPg} in $x,t$ would transform it
to similar RHPs for $g_x,g_t$ with constants  $W_{xj},\O_{xj}$ or $W_{tj},\O_{tj}$, where $f(z)$ should be replaced by $f_x(z)=z$ or $f_t(z)=2z^2$ respectively. Thus, we obtain the corresponding expressions \eqref{gform3} for $2g_x$ and $2g_t$. 

Let us define  the quasimomentum $dp_N$ and quasienergy $dq_N$ differentials on $\mathfrak R_{N}$ as real normalized (all the periods of $dp_N,dq_N$ are real)
meromorphic differentials of the second kind  with the only poles at $z=\infty$ on both sheets. These differentials 
are  (uniquely) 
defined 
(see e.g. \cite{ForLee},
\cite{Krichever}, \cite{bertolatovbis2015})
by local expansions 
\begin{equation}\label{pq}
dp_N \sim [\pm 1 + \mathcal{O}(z^{-2})]dz , \qquad dq_N \sim [\pm 4z+\mathcal{O}(z^{-2})]dz 
\end{equation}
near $z=\infty$ on the main and second sheet respectively.
One can observe  that $dp_N, dq_N$ - the quasimomentum  and the quasienergy differentials - can be expressed as
\be\label{g-diff}
{dp_N=  [1 - 2g_{xz}(z)]dz, \qquad dq_N=  [4z - 2g_{tz}(z)]dz.}
\ee
Indeed, the right hand sides of \eqref{g-diff} have the required behavior at infinity and, as one can easily see  (\cite{TVZ1}), all their the periods
are real. 

Introduce 
\bea\label{k-om}
k_j= 2(\O_{x,j}-\O_{x,j-1}),~~~\tilde k_j= 2W_{x,j},~~~{|j|}=1,\cdots, N, \cr
\o_j= 2(\O_{t,j}-\O_{t,j-1}),~~~\tilde \o_j= 2W_{t,j},~~~{|j|}=1,\cdots, N.
\eea
Then it can be easily shown  (see also  \cite{ElTovbis}) that 
the wave numbers $k_j$, $\tilde k_j$ and the frequencies $\o_j$, $\tilde \o_j$ of a quasi-periodic
finite gap solution $\psi_{N}$ determined by $\mathfrak R_{N}$, can be expressed as
\begin{align}
k_j &= -\oint_{\mathrm{A}_j}dp_N, \quad \o_j = -\oint_{\mathrm{A}_j}dq_N, \quad j=1, \dots, N, \label{kdp}\\
\tilde k_j &= \oint_{\mathrm{B}_j}dp_N, \quad \tilde \o_j = \oint_{\mathrm{B}_j}dq_N, \quad j = 1, \dots, N, \label{omdq}
\end{align}
where the cycles $A_j,B_j$ are shown on Figure \ref{Fig:Cont}.

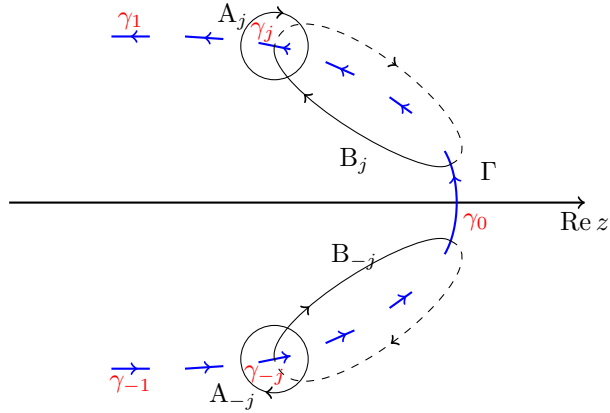
\begin{figure}[h]
\begin{center}
\begin{tikzpicture}[scale=.85]

\draw[black,thick,postaction = {decorate, decoration = {markings, mark = at position 1 with {\arrow[black,thick]{>};}}}] (-3,0) -- (6,0);
\node[below] at (6,0) {$\Re z$};

\draw[blue,thick,postaction = decorate, decoration = {markings, mark = at position .5 with {\arrow[blue,thick]{>};}}] (4,0) arc [start angle=0, end angle=50, x radius=15pt, y radius=30pt];
\draw[blue,thick] (4,0) arc [start angle=0, end angle=-50, x radius=15pt, y radius=30pt];

\draw[blue,thick,,postaction = decorate, decoration = {markings, mark = at position .5 with {\arrow[blue,thick]{>};}}] (3.3,1.4) -- (2.95,1.65);
\draw[blue,thick,,postaction = decorate, decoration = {markings, mark = at position .5 with {\arrow[blue,thick]{>};}}] (2.4,2) -- (1.95,2.2);
\draw[blue,thick,postaction = decorate, decoration = {markings, mark = at position .25 with {\arrow[blue,thick]{>};}}] (1.4,2.4) -- (0.9,2.5);
\draw[blue,thick,postaction = decorate, decoration = {markings, mark = at position .5 with {\arrow[blue,thick]{>};}}] (0.35,2.56) -- (-0.25,2.6);
\draw[blue,thick,postaction = decorate, decoration = {markings, mark = at position .5 with {\arrow[blue,thick]{>};}}] (-0.8,2.6) -- (-1.4,2.6);

\draw[blue,thick,postaction = decorate, decoration = {markings, mark = at position .5 with {\arrow[blue,thick]{<};}}] (3.3,-1.4) -- (2.95,-1.65);
\draw[blue,thick,postaction = decorate, decoration = {markings, mark = at position .5 with {\arrow[blue,thick]{<};}}] (2.4,-2) -- (1.95,-2.2);
\draw[blue,thick,postaction = decorate, decoration = {markings, mark = at position .25 with {\arrow[blue,thick]{<};}}] (1.4,-2.4) -- (0.9,-2.5);
\draw[blue,thick,postaction = decorate, decoration = {markings, mark = at position .5 with {\arrow[blue,thick]{<};}}] (0.35,-2.56) -- (-0.25,-2.6);
\draw[blue,thick,postaction = decorate, decoration = {markings, mark = at position .5 with {\arrow[blue,thick]{<};}}] (-0.8,-2.6) -- (-1.4,-2.6);

\draw[black,postaction = decorate, decoration = {markings, mark = at position .24 with {\arrow[black,thick]{<};}}] (1.15,2.45) circle [radius=15pt];
\node[below] at (0.5,3.25) {$\mathrm{A}_j$};
\node at (0.97,2.67) {$\textcolor{red}{\g_j}$};

\node at (4.3,-0.3) {$\textcolor{red}{\g_0}$};

\draw[black,postaction = decorate, decoration = {markings, mark = at position .74 with {\arrow[black,thick]{>};}}] (3.9,0.6) .. controls (3.5,0.3) and (1.0,1.8) .. (1.15,2.45);
\draw[black,dashed,postaction = decorate, decoration = {markings, mark = at position .5 with {\arrow[black,thick]{<};}}] (3.9,0.6) .. controls (4.9,1) and (1.5,3.75) .. (1.15,2.45);
\node[below] at (2.4,1) {$\mathrm{B}_j$};
\node at (3,1) {$\textcolor{red}{}$};

\draw[black,postaction = decorate, decoration = {markings, mark = at position .74 with {\arrow[black,thick]{<};}}] (3.9,-0.6) .. controls (3.5,-0.3) and (1.0,-1.8) .. (1.15,-2.45);
\draw[black,dashed,postaction = decorate, decoration = {markings, mark = at position .5 with {\arrow[black,thick]{>};}}] (3.9,-0.6) .. controls (4.9,-1) and (1.5,-3.75) .. (1.15,-2.45);
\node[below] at (2.4,-0.5) {$\mathrm{B}_{-j}$};
\node at (3,-1.1) {$\textcolor{red}{}$};

\draw[black,postaction = decorate, decoration = {markings, mark = at position .74 with {\arrow[black,thick]{<};}}] (1.15,-2.45) circle [radius=15pt];
\node[below] at (0.5,-2.7) {$\mathrm{A}_{-j}$};
\node at (1,-2.7) {$\textcolor{red}{\g_{-j}}$};

\node at (-1.1,2.85) {$\textcolor{red}{\g_1}$};

\node at (4.5, 0.5) {$\textcolor{black}{ \G}$};

\node at (-1.1,-2.85) {$\textcolor{red}{\g_{-1}}$};

\end{tikzpicture}
\end{center}
\vskip -.6cm
\vspace{-10pt}
\caption{The spectral bands $\g_{\pm j}$ and the cycles $\mathrm{A}_{\pm j}, \mathrm{B}_{\pm j}.$
 The 1D Schwarz symmetrical curve $\G$ consists of the bands $\g_{\pm j}$, $j=0,\dots,N$, and the gaps $c_{\pm j}$ between the bands (the gaps are  not shown on this figure).}
\label{Fig:Cont}
\vspace{-5pt}
\end{figure}

 Note that the wavenumbers and frequencies defined by  \eqref{kdp} and those defined by \eqref{omdq} are of essentially different nature:   in the limit of $\g_j$  shrinking to a point,  we have
 \begin{equation}
  k_j, \o_j \to 0,\quad\text{whereas}  \quad \tilde k_j, \tilde \o_j = \mathcal{O}(1),  \qquad j=1, \dots, N,\label{breather_lim} 
  \end{equation}
see  \cite{ElTovbis}.  Motivated by these properties, $k_j,\o_j$  are called
{\it solitonic  wavenumbers and  frequencies } whereas 
the remaining 
 $\tilde k_j$, $\tilde \o_j$  are called {\it carrier  wavenumbers and  frequencies}. 
 
The standard normalized holomorphic differentials $w_j$ of  $\mathfrak{R}_{N}$ are defined by
\begin{equation} \label{D-P}
w_j=[P_j(z)/R(z)] dz, \quad \oint_{\mathrm{A}_i}w_j = \delta_{ij}, \quad i, j= \pm 1, \dots, \pm N,
\end{equation}
where the polynomials 
\begin{equation} \label{Pj}
P_j(z)=\k_{j,1}z^{2N-1}+\k_{j,2}z^{2 {N}-2}+ \dots +\k_{j,2N}
\end{equation}
 have complex coefficients and the radical $R$ given by \eqref{R_N}
defines the hyperelliptic surface  $\mathfrak{R}_{N}$.
Then,  
according to \eqref{gform3}-\eqref{Pj} (see also 
\cite{ElTovbis}),  
the   wavenumbers  and frequencies satisfy the  systems  
\begin{align}
&& \tilde k_j+\sum_{|m|=1}^{N} k_m\oint_{\mathrm{B}_m}{\frac{P_j(\z)d\z}{R(\z)}}=-{2} \oint_{\gt}{\frac{\z P_j(\z) d\z}{R(\z)}}, \nonumber\\
&&\tilde \o_j+\sum_{|m|=1}^{N} \o_m\oint_{\mathrm{B}_m}{\frac{P_j(\z)d\z}{R(\z)}}=- {4}\oint_{\gt}{\frac{\z^2 P_j(\z) d\z}{R(\z)}}, \nonumber \\
&& |j|= 1,\dots, N,  \label{WUPjM}
\end{align}
where $\gt$ is a large clockwise oriented contour containing $\G$. In fact, every equation in \eqref{WUPjM} is the Riemann bilinear identity for the differentials $w_j$ and $dp_N,dq_N$ respectively.

Taking imaginary parts of \eqref{WUPjM} and using the residues in the right hand side,  we obtain the  systems 
\begin{align} 
\sum_{|m|=1}^{N} k_m\Im\oint_{\mathrm{B}_m}{\frac{P_j(\z)d\z}{R(\z)}} &={4}\pi \Re\k_{j,1},\nonumber \\
\sum_{|m|=1}^{N} \o_m\Im\oint_{\mathrm{B}_m}{\frac{P_j(\z)d\z} 
{R(\z)}} & ={8}\pi \Re\left(\k_{j,1}\sum_{k=1}^{2N+1}\Re\a_k +\k_{j,2}\right), \nonumber \\
& \qquad  |j|= 1,\dots, N, \label{WFR}
 \end{align}
where the latter summation is taken over all the endpoints in $\C^+$,
which define the  solitonic  wavenumbers  and frequencies.
We call  \eqref{WFR} the solitonic nonlinear dispersion relations (NDR).
Indeed, the NDR
 indirectly connect (through the Riemann surface $\mathfrak{R}_{N}$) 
the solitonic wavenumbers and frequencies of the finite gap solution  $\psi_{N}$,
i.e., \eqref{WFR} represents  nonlinear dispersion relations. Once the solitonic wavenumbers and frequencies were obtained, the corresponding carrier quantities can be reconstructed by taking the real part of  \eqref{WUPjM}.

Equations \eqref{WFR} together with \eqref{breather_lim}  {are} our starting point  for deriving 
equations \eqref{dr_breather_gas1}-\eqref{dr_breather_gas2}. 
We want to point out that the matrix of the systems  \eqref{WFR} is negative-definite and, 
therefore, each of the systems
\eqref{dr_breather_gas1}-\eqref{dr_breather_gas2} has a unique solution.
The negative-definiteness of the matrix of the systems  \eqref{dr_breather_gas1}-\eqref{dr_breather_gas2} 
follows from the properties of the the Riemann period matrix $\t$ of the Riemann surface $\mathfrak{R}_{N}$
( $\Im \t$ is positive definite).

Suppose now that we start shrinking each band to a point. Then we will be taking the finite gap solution
to its multi-soliton solution  limit, where the phases should be transformed into the corresponding norming constants.
The idea of the {\it thermodynamical limit} consists of increasing the number $2N+1$ of bands  simultaneously with shrinking the size $2|\d_j|$
of each  band $\g_j$ (with the exception of the permanent band  $\g_0$ in the breather gas) 
at  some exponential rate with respect to $N$, so that the centers $z_j$ of the bands $\g_j\subset \C^+$, $j=1,\dots,N$, 
will be filling a certain compact set $\G^+\subset \C^+$ with some limiting probability density $\varphi(z)$. 
In particular, we assume
\be\label{exp-scale}
|\d_j|=e^{-N(\nu(z_j)+o(1))},
\ee
uniformly on $\G^+$, 
where $\nu(z)$, called the scaled bandwidth function, is some nonnegative continuous function on $\G^+$. 
In the case when $\nu(z)=0$ on some subset of $\G^+$, we
still assume that the corresponding $\d_j\ra 0$ much faster than $N^{-1}$, perhaps, see Appendix \ref{sec-err} for details.
Moreover, we assume  the distance between any two bands to be much larger than the size of the shrinking bands, that is, it must be of order at least $O(N^{-1})$ uniformly on $\G^+$.

Under these assumptions we derive 
 the leading order behavior of the coefficients of the linear system \eqref{WFR}, see 
 Lemma \ref{lem-est-evenB}
 and \eqref{lead-int-ass-evB}. The  expression for the off-diagonal entries from \eqref{lead-int-ass-evB} provides the 
 kernel of the integral operator in \eqref{dr_breather_gas1}-\eqref{dr_breather_gas2}, whereas the  
 expression for the diagonal entries defines the secular (non-integral) term in the left hand sides of
 \eqref{dr_breather_gas1}-\eqref{dr_breather_gas2}. Here the spectral scaling function
 \begin{equation}\label{dens_states}
 \s(z)=2\frac{\nu(z)}{\varphi(z)}\quad\text{  and }\quad
u(z)= \frac{1}{\pi} \check u(z) \varphi(z),\ \  v(z)= \frac{1}{\pi} \check v(z) \varphi(z),
\end{equation}
where $ \check u(z), \check v(z)$ are some smooth functions on $\G^+$ interpolating the values $\frac{Nk_j}{2}, -\frac{N\o_j}{2}$ at $z=z_{j,N}$,  $j=1,\dots,N$, respectively,
see also \cite{ElTovbis}.


The breather gas is obtained when in the thermodynamic limit all the bands except $\gamma_0$ are shrinking ($\d_j\ra 0$), whereas
the exceptional band $\gamma_0$ approaches some permanent limiting position as $N \to \infty$. In particular, in this paper we assume that the endpoints of $\gamma_0$
approach $\pm \delta_0\in i\R$ respectively. Being considered alone, the permanent spectral band $\gamma_0$ corresponds
to the plane wave solution \eqref{pw} with $q=|\delta_0|$. The band $\gamma_0$ together with  Schwarz symmetrical points
of discrete spectrum $z, \bar z$
correspond to a soliton on the plane wave (carrier) background, also known as 
a breather. It is  remarkable that the kernel in the integral equations \eqref{dr_breather_gas1}-\eqref{dr_breather_gas2}, 
being divided by $\Im R_0(z)$, provides an elegant expression for the phase shift  of two interacting 
breathers; some equivalent
(see \cite{Roberti})
but considerably more  complicated expressions for this phase shift were recently obtained in \cite{LiBiondini},
\cite{Gelash}.
Therefore, equations 
 \eqref{dr_breather_gas1}-\eqref{dr_breather_gas2} 
represent nonlinear dispersive relations for the breather gas.
It is easy to check that equations   \eqref{dr_breather_gas1}-\eqref{dr_breather_gas2}
coincide with  \eqref{dr_soliton_gas1}-\eqref{dr_soliton_gas2}  in  the limit $\delta_0\to 0$.
Thus, soliton gas can be considered as a particular case of the breather gas, see \cite{ElTovbis} for details. 

In the case of subexponential  rate of shrinking of bands $\g_j$ in the thermodynamic limit, the function 
$\s(z)$ turns to be zero and we obtain a breather (or soliton, if $\d_0\ra 0$) {\it condensate} (\cite{ElTovbis}).
As it was mentioned in Remark 1.4 from\cite{KT2021}, the term ``condensate'' reflects the fact that for a 
given $\G^+$  a certain quadratic energy $J_\s(\m^*_\s)$ is minimized in $\s$  when $\s\equiv 0$ on $\G^+$. Here $d\m^*_\s(z)=u(z)d\l(z)$ denotes the minimizing measure for a given $\s$.

Proving the transition  from linear systems \eqref{WFR} to the corresponding integral equations \eqref{dr_breather_gas1}-\eqref{dr_breather_gas2} requires the following additional 
assumption on the probability measure $\varphi(z)$.
Let us divide the compact $\G^+=\cup_{j=1}^N\G_j$ into disjoint ``regions of attraction"  $\G_j$, $z_{j,N}\in\G_j$, in such a way that all points in $\G^+$ that are closer to $z_{j,N}$ than to any other $z_{k,N}$, $k\neq j$, belong to $\G_j$. Then 
\be\label{varphi-def}
\l(\G_j)=\frac{1}{N\varphi(z_{j,N})}+O(\frac{1}{N^{1+\b}}) \quad\text{ as $N\ra\infty$}
\ee
with some $\beta>0$ uniformly in $\G^+$.

\section{The thermodynamic limit of averaged densities and fluxes }
\label{sec-cons-dens}

 It is well known that the  averaged conserved quantities (densities and fluxes) of the finite gap solutions are closely related to certain Abelian differentials of the second kind \cite{NovikovSoliton} of the correspondent hyperelliptic Riemann surface. In particular, for the case of fNLS, expressions for the averaged densities and fluxes can be found in \cite{ForLee},  where they were

expressed  as
the coefficients of the expansion at infinity of  
$dp_N, dq_N.$
Let's first consider the expansion of $dp_N$ at $z=\infty$:
\begin{align*}
    dp_N={\le(1-\sum_{m=1}^\infty\frac{I_{m,N}}{z^{m+1}}\ri)dz.}
\end{align*}
Theorem \ref{theo-ImN}, part (i), gives the 
leading behavior of $  I_{m,N}$ in the thermodynamic limit.
The analytic function $\r_N$, defined in the Appendix \ref{sec-err}, equation  \eqref{R-est}, and  studied in Lemma \ref{lem-Rd}, plays an
important part in the proof.

\subsection{Proof of Theorem \ref{theo-ImN} part (i)}
\begin{proof}

The definition of $\r_N$ (see also in \eqref{R-est}) is
\be\label{rho-d}
R(z)=R_0(z)P(z)(1+\r_N(z)),\quad\text{ where}\quad P(z):= \prod_{|j|=1}^N(z-z_j)
\ee
and $R(z)$ is given by \eqref{R_N}.
Since $\r_N$, $R_0$ and $g_x$ are all analytic at $\infty$, we can write
\begin{align}
    2g_x(z)=\sum_{m=0}^\infty\frac{G_{m,N}}{z^m}\label{g-x-expansion}
\end{align}
and
\begin{align}
    \frac{2g_x(z)}{(1+\r_N(z))R_0(z)}=\sum_{m=0}^\infty\frac{\tilde G_{m,N}}{z^{m+1}}.\label{new-expansion}
\end{align}
Based on the representation \eqref{gform3} of the $g$-function, differentiated with respect to $x$, we have
\begin{align}
    \tilde G_{m,N}=
     \sum_{k=0}^{m}R_kY_{m-k}, \quad m=\{0\}\cup\N, \label{rel-Y-I}
\end{align}
where
\begin{align}
    Y_k:=-\frac{1}{2\pi i}\left(\sum_{|j|=0}^N\oint_{A_j}(\z+{W_{j,N}})+\sum_{|j|=1}^N\oint_{B_j}{U_{j,N}}\right)\frac{\z^{2N+k}d\z}{R(\z)},\quad k=\{0\}\cup\N,\label{Im-def}
\end{align}
{\begin{align}
	W_{j,N}&=\frac{1}{2}\tilde{k}_j:=\frac{1}{2}\oint_{\mathrm{B}_j}dp_N,\quad |j|=0,1,\cdots,N,\nonumber\\ U_{j,N}&=\frac{1}{2}k_j:=-\frac{1}{2}\oint_{\mathrm{A}_j}dp_N,\quad, |j|=1,2,\cdots,N\nonumber
\end{align}}
and {$R_k$ are the coefficients of}
\begin{align}
P(z)=\frac{R(z)}{(1+\r_N(z))R_0(z)}=
    \sum_{k=0}^{2N}R_kz^{2N-k}.\label{R-k}
\end{align}
Let us introduce notations
\begin{align}\label{def-A}
    \mathbb{A}(f)&:=-\frac{1}{2\pi i}\sum_{|j|=0}^N\oint_{A_j}(\z+{W_{j,N}})\frac{f(\z)d\z}{R(\z)},\cr
    \mathbb{B}(f)&:=-\frac{1}{2\pi i}\sum_{|j|=1}^N\oint_{B_j}{U_{j,N}}\frac{f(\z)d\z}{R(\z)}.
\end{align}
Then, using equation  \eqref{rel-Y-I}, \eqref{R-k} and \eqref{Im-def}, we have
\begin{align}
    \tilde G_{m,N}=(\mathbb{A}+\mathbb{B})(\sum_{k=0}^{m}R_k\z^{2N+m-k})=(\mathbb{A}+\mathbb{B})(\z^{m}P(\z)),
\end{align}
where we have used the fact that 
\begin{align*}
    (\mathbb{A}+\mathbb{B})(\z^l)=0,\quad l=0,1,\cdots,2N-1,
\end{align*}
that follows from \eqref{WO}.

Applying Lemma \ref{lem-Rd}  and noting that $W_{0,N}=0$, we have
\begin{align}\label{A-est}
    \mathbb{A}(\z^{m}P(\z))&=\mathbb{A}\le(\frac{\z^mR(\z)}{R_0(\z)(1+\r_N(\z))}\ri)=-\frac{1}{2\pi i}\oint_{A_0}\frac{\z^{m+1}d\z}{R_0(\z)}+O(\r_*(\delta,N)),
\end{align}
where $\r_*(\delta,N)$ is given by \eqref{rho-est}.

Denote $D_{j}=\{z:|z-z_j|\leq \sqrt 2|\d_j|\}$. Again, by Lemma \ref{lem-Rd}, we have
\begin{align}\label{B1-est}
    \mathbb{B}&\le(\frac{\z^mR(\z)}{R_0(\z)(1+\r_N(\z))}\ri)=-\frac{1}{2\pi i}\sum_{|j|=1}^N {U_{j,N}}\oint_{B_j}\frac{\z^m}{R_0(\z)(1+\r_N(\z))}d\z\cr
    &=-\frac{1}{2\pi i}\sum_{|j|=1}^N {U_{j,N}}\le(\int_{B_j\backslash D_j}+\int_{B_j\cap D_j}\ri)\frac{\z^m}{R_0(\z)(1+\r_N(\z))}dz.
 \end{align}   
 
 Consider first the case when $\g_0$ is separated from $\G$, i.e., when $R^{-1}_0(z)$  is bounded on all $B_j$ uniformly  in $N$.
 Then, similarly to  \eqref{A-est}, each integral $\int_{B_j\backslash D_j}$ can be approximated by  the corresponding integral  $\int_{B_j}$ with the accuracy
 $O(\r_*(\delta,N))+O(\d)$, where the second term is an estimate of 
 $\int_{D_j}\frac{\z^m}{R_0(\z)}d\z$.
 
According to  Lemma \ref{lem-Rd}, part b)
\be\label{B2-est}
\int_{B_j\cap D_j}\frac{\z^m}{R_0(\z)(1+\r_N(\z))}d\z=
\int_{B_j\cap D_j}\frac{\z^m(\z-z_j)}{R_0(\z)R_j(\z)}(1+\r_*(\d,N))d\z
 \ee
 Note that $|\frac{z-z_j}{R_j(z)}|\leq 1 $ on the contour $B_j$ that locally is  orthogonal to $\g_j$ and crosses it at $z_j$. Thus, the integrand in   \eqref{B2-est} is (uniformly in $j,N$)  bounded and so the integral in \eqref{B2-est} is of order $O(\d)$. Thus, we obtain

\begin{align}\label{B-fin-est}
    \mathbb{B}&\le(\frac{\z^mR(\z)}{R_0(\z)(1+\r_N(\z))}\ri)=-\frac{1}{2\pi i}\sum_{|j|=1}^N {U_{j,N}}\oint_{B_j}\frac{\z^md\z}{R_0(\z)}(1+O(\r_*(\delta,N))+O(\delta)).
\end{align}
Together with \eqref{A-est} this yields
\begin{align}\label{G-est}
    \tilde G_{m,N}=-\frac{1}{2\pi i}\le[\sum_{|j|=1}^N {U_{j,N}}\oint_{B_j}\frac{\z^md\z}{R_0(\z)}+\oint_{A_0}\frac{\z^{m+1}d\z}{R_0(\z)}\ri]
    (1+O(\r_*(\delta,N))+O(\delta)).
\end{align}
Consider now the case when $z_j$  is on the distance $O(1/N)$ from $\g_0$. Then the integrand in \eqref{B2-est} is not bounded on 
$B_j\cap D_j$. The corresponding calculations show that in this case  the $O(\delta)$ term in \eqref{B-fin-est} and \eqref{G-est} should be replaced by $O(N^\hf\delta)$. 

Going back to expansion
\eqref{g-x-expansion}, we have
\begin{align}
\sum_{m=0}^\infty\frac{ G_{m,N}}{z^m}= (1+\r_N(z))R_0(z)\sum_{m=0}^\infty\frac{\tilde G_{m,N}}{z^{m+1}}.
\end{align}
Note that 
\be
R_0(z)=\sum_{k=-1}^\infty d_k\d_0^{k+1}z^{-k},
\ee
where $d_k$ is defined in \eqref{dm-int} and $d_{-1}=1,$

By the Cauchy's estimates and Lemma \ref{lem-Rd} we conclude that
all  the Taylor coefficients of $\r_N(z)$ at $z=\infty$ are of the order $O(\r_*(\d,N))$. This estimate will be uniform for all the Taylor coefficients provided that the compact $\G$ is contained inside the unit circle $|z|=1$.

Thus, we have
\begin{align*}
     G_{m,N}&=
     \le(\sum_{k=-1}^{m-1}\tilde G_{m-1-k,N}d_{k}\ri)\le(1+O(\r_*(\delta,N)\ri)\\
     &=-\frac{1}{2\pi i}\sum_{k=-1}^{m-1}\le[\sum_{|j|=1}^N {U_{j,N}}\oint_{B_j}\frac{\z^{m-1-k}d\z}{R_0(\z)}+\oint_{A_0}\frac{\z^{m-k}d\z}{R_0(\z)}\ri]d_{k}\d_0^{k+1}(1+O(\delta))\\
     &=-\frac{1}{2\pi i}\le[\sum_{|j|=1}^N {U_{j,N}}\oint_{B_j}\frac{[\z^{m-1} R_0(\z)]_+d\z}{R_0(\z)}+\oint_{A_0}\frac{\z[\z^{m-1}R_0(\z)]_+d\z}{R_0(\z)}\ri](1+O(\delta)),
\end{align*}
where we have taken account that $O(\d)$ is much larger than
$O(\r_*(\delta,N)$. Note that
\begin{align*}
    \frac{1}{2\pi i}\oint_{A_0}\frac{\z[\z^{m-1}R_0(\z)]_+d\z}{R_0(\z)}=d_{m}\d_0^{m+1}, m\in\N.
\end{align*}

Then we use the relation that
\be
I_{m,N}= -mG_{m,N},\label{G-I relation}
\ee
thus, we have
\begin{align}
    I_{m,N}&=\le(\frac{m}{2\pi i}\le[\sum_{|j|=1}^N {U_{j,N}}\oint_{B_j}\frac{[\z^{m-1} R_0(\z)]_+d\z}{R_0(\z)}\ri]+md_m\d_0^{m+1}\ri)(1+O(\d)),
\end{align}
and, taking into account that $O(\d)$ should be replace by $O(N^{1/2}\d)$ when $z_j$ is on the distance $O(1/N)$ from $\g_0$, equation \eqref{br-avg-den} follows.

\end{proof}

\br 
It follows from \eqref{rel-Y-I},\eqref{G-est} that
\be\label{2gx-inf}
2g_{x}(\infty)=
\le[-\frac{1}{2\pi i}\sum_{|j|=1}^N {U_{j,N}}\oint_{B_j}\frac{dz}{R_0(z)}\ri]
    (1+O(N^{1/2}\delta)).
\ee
\er

\bc In the thermodynamic limit of fNLS soliton gas we have
\begin{align}\label{sol-avg-den}
    I_{m,N}=\le(\frac{m}{2\pi i}\sum_{|j|=1}^N {U_{j,N}}\oint_{B_j}\z^{m-1}d\z\ri)
    (1+O(N^{1/2}\delta)),
\end{align}
where we used the notations from Theorem \ref{theo-ImN}.
\ec

\subsection{Approximation of $u$}\label{sect-approx-u} 

The accuracy of approximation of $I_m$ with $I_{m,N}$, $N\ra\infty$, depends on the rate of convergence of measures $\l_N$ to $ud\l$.  In this subsection we study this rate of convergence under some additional assumptions on the thermodynamic limit. 

First,  here and henceforth we assume that
the condition   \eqref{varphi-def} (with some $\beta>0$) about the  continuous and positive on $\G^+$ probability density of bands $\varphi(z)$ 
is valid uniformly in $\G^+$.

\br\label{rem-cont}
Even though some results of this subsection can be extended to 2D compact sets $\G^+$, in order to  simplify our exposition in this subsection  we assume that $\G^+$ is a contour.
\er

\begin{lemma}\label{lem-M-pos}
  For any $N\in\N$ and any Schwarz symmetrical hyperelliptic Riemann surface $\mathfrak{R}_N$, the matrix $M_N= -\Im \oint_{\tB_k}\frac{P_j(\z)d\z}{R(\z)}$, {where $\tB_k=B_k\cup B_{-k}$,} is symmetric and positive definite.
  Moreover, the large $N$ asymptotics of $-M_N$ is given by \eqref{lead-int-ass-evB}.
 \end{lemma}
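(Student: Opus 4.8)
\textbf{Proof proposal for Lemma \ref{lem-M-pos}.}

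The plan is to identify the matrix $M_N$ with (minus the imaginary part of) a block of the Riemann period matrix of $\mathfrak R_N$ and then invoke the classical fact that $\Im\tau$ is positive definite. First I would recall that the normalized holomorphic differentials $w_j=[P_j(z)/R(z)]\,dz$ are defined in \eqref{D-P} by $\oint_{\mathrm A_i}w_j=\delta_{ij}$, so the period matrix entries $\tau_{kj}=\oint_{\mathrm B_k}w_j$ form a symmetric matrix with positive definite imaginary part (Riemann bilinear relations; see any standard reference such as \cite{ForLee}). The subtlety is that the $\mathrm B$-cycles here come in Schwarz-symmetric pairs $\mathrm B_{\pm k}$, and the quantity appearing in \eqref{WFR} and in the statement is the sum over the pair, $\oint_{\tB_k}=\oint_{\mathrm B_k}+\oint_{\mathrm B_{-k}}$; similarly the index $j$ effectively runs over $1,\dots,N$ only after exploiting the Schwarz symmetry $\k_{-j,\bullet}=\overline{\k_{j,\bullet}}$ of the polynomials $P_j$. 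So the first real step is a bookkeeping reduction: pass from the full $2N\times 2N$ period matrix indexed by $\pm1,\dots,\pm N$ to an $N\times N$ object indexed by $1,\dots,N$, using the Schwarz involution $z\mapsto\bar z$ on $\mathfrak R_N$ and its action on cycles and differentials.

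Concretely, I would change basis from $\{w_j\}_{|j|=1}^N$ to the symmetric/antisymmetric combinations $w_j\pm w_{-j}$ (equivalently $w_j+\overline{w_{\bar j}}$ type combinations), under which the period matrix block-diagonalizes, and show that $M_N$ is, up to sign and a real congruence, the imaginary part of one diagonal block of $\tau$ — hence symmetric and positive definite. Symmetry of $M_N$ is the easier half: it follows either from $\tau^T=\tau$ directly, or from the Riemann bilinear identity $\oint_{\mathrm B_k}w_j-\oint_{\mathrm B_j}w_k=0$ combined with the symmetry properties of the $\mathrm A,\mathrm B$ cycles under Schwarz reflection and the relation $\k_{-j,m}=\overline{\k_{j,m}}$. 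Positive definiteness then reduces to the positive definiteness of $\Im\tau$ restricted to the relevant invariant subspace, which is inherited because a principal submatrix (in an appropriate real basis) of a positive definite matrix is positive definite, and the congruence transformation relating $M_N$ to that submatrix is real and invertible.

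For the asymptotic statement — that $-M_N$ is given by \eqref{lead-int-ass-evB} — I would invoke Lemma \ref{lem-est-evenB} from Appendix \ref{sec-err}, which (per the excerpt) supplies precisely the thermodynamic-limit asymptotics of the coefficients of the linear system \eqref{WFR}; since the entries of $M_N$ are exactly the coefficients $-\Im\oint_{\tB_k}\tfrac{P_j\,d\z}{R}$ appearing on the left-hand side of \eqref{WFR}, the large-$N$ behavior is a direct quotation of that lemma together with the identification of $P_j$ via the leading coefficients $\k_{j,1},\k_{j,2}$. The main obstacle I anticipate is getting the Schwarz-symmetry bookkeeping exactly right: one must be careful about orientations of $\mathrm B_k$ versus $\mathrm B_{-k}$ (note the arrows in Figure \ref{Fig:Cont} point oppositely on the two sheets), about whether the reflection $z\mapsto\bar z$ sends $w_j$ to $\overline{w_j}$ or to $\overline{w_{-j}}$, and about the resulting sign, so that the claimed \emph{positive} (rather than negative) definiteness of $M_N$ comes out correctly — this is the same sign issue flagged in the Background section where the matrix of \eqref{WFR} is called negative-definite. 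Once the sign conventions are pinned down, both assertions follow from classical Riemann surface theory plus Lemma \ref{lem-est-evenB}.
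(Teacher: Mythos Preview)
Your proposal is correct and follows essentially the same approach as the paper: the paper's proof consists of one sentence stating that the lemma ``follows from the well known properties of the Riemann period matrix of $\mathfrak R_N$'' together with a direct reference to Lemma \ref{lem-est-evenB} for the asymptotics. Your version is simply a more explicit unpacking of that sentence --- in particular the Schwarz-symmetry bookkeeping that reduces the full $2N\times 2N$ period matrix to the $N\times N$ matrix $M_N$ --- and your caution about orientation and sign conventions (cf.\ the negative-definiteness remark in Section \ref{sec-background}) is well placed but does not represent a different argument.
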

 
 This lemma follows from the well known properties of the Riemann period matrix of $\mathfrak{R}_N$. The last statement is the subject of Lemma \ref{lem-est-evenB}.

Next, we assume that  
$\nu(z)\in C(\G^+)$ and that
\be\label{Fred-2}
\min_{\nu\in\G^+} \nu(\eta)=\nu_0>0
\ee
on the compact $\G^+$.  The results for the rest of this subsection are obtained under this assumption. In particular, it implies that  
 the 1st NDR equation \eqref{dr_soliton_gas1} for the fNLS soliton gas   written, according to \eqref{dens_states}, as
\be\label{u-eq}
 \int_{\G^+}\ln \le|\frac{\m-\bar\eta}{\m-\eta}\ri|
u_x(\m)\varphi(\m)|d\m|+2\nu(\eta)u_x(\eta)= \Im \eta,
\ee
where $u_x:=\frac u\varphi$, is a Fredholm integral equation of the second type;  therefore, the existence (and uniqueness) of the solution $u_x$ is  guaranteed. The same is true for the corresponding breather gas equation \eqref{dr_breather_gas1}). The fact that $u_x\geq 0$ on $\G^+$ in the case of the soliton gas was proven in \cite{KT2021}.

 \begin{lemma}\label{lem-norm-M-inv}
  Under the condition \eqref{Fred-2},  the spectral radius of  $NM_N^{-1}$ in the thermodynamic limit does not 
  exceed $\frac{\pi}{2\nu_0}+O(1/N)$ for all sufficiently large $N\in\N$.
 \end{lemma}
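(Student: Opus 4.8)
The plan is to show that $NM_N^{-1}$ is a contraction-type bound on the integral operator side. The key point is the decomposition of $M_N$ into a ``diagonal'' (secular) part coming from the self-interaction term $2\nu(z_j)u(z_j)$ and an ``off-diagonal'' part coming from the integral kernel $\log|(\mu-\bar\eta)/(\mu-\eta)|$. First I would invoke Lemma \ref{lem-est-evenB} (equivalently \eqref{lead-int-ass-evB}) to write, in the thermodynamic limit, $-M_N = D_N + K_N$ where $D_N$ is the diagonal matrix with entries $\tfrac{2\nu(z_{j,N})}{N}\cdot(1+o(1))$ (the secular term, using \eqref{dens_states} so that $2\nu/\varphi=\sigma$ and the $\varphi$-weights from \eqref{varphi-def} are absorbed), and $K_N$ is the symmetric matrix whose entries approximate $\frac1N\log|(z_{k,N}-\bar z_{j,N})/(z_{k,N}-z_{j,N})|\,\lambda(\G_k)$, i.e. the discretization of the (positive-semidefinite) logarithmic kernel operator. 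Since $\nu \ge \nu_0 > 0$ by \eqref{Fred-2}, we have $D_N \succeq \tfrac{2\nu_0}{N}(1+o(1))\,I$.

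Next I would use positive-definiteness. By Lemma \ref{lem-M-pos}, $M_N \succ 0$, and by the same period-matrix argument $K_N \succeq 0$ (it is the imaginary part of a Riemann bilinear form restricted appropriately, or directly: the logarithmic kernel $\log\big|\frac{\mu-\bar\eta}{\mu-\eta}\big|$ is a positive-definite kernel on $\G^+$, a standard fact). Hence
\begin{equation}
-M_N = D_N + K_N \succeq D_N \succeq \frac{2\nu_0}{N}\bigl(1+O(1/N)\bigr)\, I,
\end{equation}
so every eigenvalue of $-M_N$ is at least $\tfrac{2\nu_0}{N}(1+O(1/N))$. Since $-M_N$ is symmetric positive definite, the eigenvalues of $N(-M_N)^{-1} = NM_N^{-1}$ (note $-M_N$ and $M_N$ differ only by sign, and $NM_N^{-1}$ is what the statement refers to via $-M_N$) are the reciprocals scaled by $N$, each at most
\begin{equation}
\frac{N}{\tfrac{2\nu_0}{N}\cdot N\,(1+O(1/N))} = \frac{1}{\tfrac{2\nu_0}{1}}\cdot\frac{1}{1+O(1/N)} = \frac{1}{2\nu_0}+O(1/N).
\end{equation}
To land exactly on $\tfrac{\pi}{2\nu_0}$ I would track the constant $\pi$ carefully through \eqref{dens_states} and the $\tfrac1\pi$ in $u=\tfrac1\pi \check u\varphi$: the factor $\pi$ enters because the off-diagonal asymptotics in \eqref{lead-int-ass-evB} carry a $2\pi$ (from $4\pi\Re\kappa_{j,1}$ in \eqref{WFR}) while the diagonal carries $\nu$ without that factor, so the effective lower bound on $-M_N$ is $\tfrac{2\nu_0}{\pi N}(1+O(1/N))$, giving spectral radius of $NM_N^{-1}$ at most $\tfrac{\pi}{2\nu_0}+O(1/N)$.

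The main obstacle I anticipate is making the asymptotic decomposition $-M_N = D_N + K_N$ rigorous with a \emph{uniform} (in $j,k$ and in $N$) error control strong enough that the lower bound $D_N \succeq \tfrac{2\nu_0}{\pi N}(1+o(1))I$ survives the perturbation $K_N$ and the $o(1)$ corrections — in particular one must rule out that the accumulated $N\times N$ errors (each $o(1/N)$ pointwise) degrade the $\tfrac1N$-scale diagonal bound. This is where Lemma \ref{lem-est-evenB} and the uniformity built into \eqref{exp-scale} and \eqref{varphi-def} (the $O(N^{-1-\beta})$ refinement) are essential: they guarantee the error matrix has operator norm $o(1/N)$, hence negligible against $D_N$. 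A secondary technical point is confirming $K_N \succeq 0$ at the discrete level rather than only in the limit; the cleanest route is Lemma \ref{lem-M-pos} itself, which already gives $M_N \succ 0$ for every $N$, combined with the observation that the diagonal part is what we peel off, so no separate positivity proof for $K_N$ is strictly needed — it suffices that $-M_N \succeq D_N - (\text{error})$, and the error is controlled.
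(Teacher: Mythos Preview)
Your decomposition $\hat M_N := \tfrac{1}{N}M_N = D_N + K_N$, with $D_N$ absorbing the \emph{entire} diagonal and $K_N$ the off-diagonal kernel part, has a genuine gap: a nonzero real symmetric matrix with zero diagonal cannot be positive semidefinite, since $|K_{jk}|^2 \le K_{jj}K_{kk}$ for any PSD matrix forces every off-diagonal entry to vanish once the diagonal does. So the step ``$K_N \succeq 0$'' fails outright, and your fallback (``it suffices that $-M_N \succeq D_N - (\text{error})$'') is precisely the statement that needs proof, not a consequence of Lemma~\ref{lem-M-pos} alone. (There are also sign and scaling slips --- $-M_N$ versus $M_N$, $\tfrac{2\nu}{N}$ versus $\tfrac{2\nu}{\pi}$ --- but those are cosmetic compared with the structural issue.)

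The paper repairs this by peeling off only the \emph{constant} part $\tfrac{2\nu_0}{\pi}\1_N$ of the diagonal rather than all of it. Because the diagonal of $\hat M_N$ depends on $\nu$ through $-\tfrac{2}{\pi N}\ln|\d_j| = \tfrac{2\nu(z_j)}{\pi}+O(1/N)$ while the off-diagonal entries are, to leading order, independent of $\nu$ (Lemma~\ref{lem-est-evenB}), one obtains
\[
\hat M_N(\nu) \;=\; \hat M_N(\nu-\nu_0) \;+\; \tfrac{2\nu_0}{\pi}\1_N \;+\; O(1/N),
\]
where the remainder has operator norm $O(1/N)$. The crucial point is that $\hat M_N(\nu-\nu_0)$ is itself (the scaled imaginary part of) a Riemann period matrix for a \emph{bona fide} hyperelliptic surface --- the one with nonnegative bandwidth function $\nu-\nu_0\ge 0$ --- and hence positive definite by Lemma~\ref{lem-M-pos}. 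Weyl's perturbation theorem (passing through the asymptotic matrix $M^a_N$ from \eqref{lead-int-ass-evB}) then gives the lower bound $\tfrac{2\nu_0}{\pi}+O(1/N)$ on the spectrum of $\hat M_N(\nu)$, and inverting yields the stated bound on the spectral radius of $N M_N^{-1}$. In short: do not strip the whole diagonal; strip only the guaranteed minimum $\nu_0$, so that what remains is still a period matrix and hence automatically PSD.
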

\begin{proof}
 The first system \eqref{WFR} can be written as 
 \be\label{Ueq1}
 -\sum_{|k|=1}^{N}\mathring{u}_k\Im\frac 1N\oint_{B_k}\frac{P_j(\z)d\z}{R(\z)}=\Im z_j, \qquad j=1,\dots,N,
 \ee
where   $\frac{k_j}{2}=U_{j,N}=\frac{ \mathring u_j}{N}$. According to Lemma \ref{lem-M-pos}, the matrix $\frac 1N M_N=:\hat M_N(\nu)$ is positive definite for any $\nu\geq 0$ and all $N\in\N$, that is the spectrum of $\hat M_N(0)$ is positive for  all $N\in\N$. Next, according to Lemma \ref{lem-est-evenB},    
\be
\rm{diag}\;\hat M_N(\nu)= \rm{diag}\;\hat M_N(\nu-\nu_0)+ \left[\frac{2\nu_0}{\pi}+O(1/N)\right]\1_N,
\ee
 where
$\1_N$ denotes the identity matrix of size $N$. 
Note that both matrices $M_N(\nu)$ and its asymptotic limit $M^a_N(\nu)$, given through \eqref{lead-int-ass-evB}, are symmetric and therefore, both have real spectrum. Arrange the eigenvalues of both matrices in the descending order. Then, 
according to Weyl's Perturbation Theorem, see Theorem VI.2.1, p. 156, \cite{Bhatia}, the distance between the  corresponding eigenvalues is uniformly bounded by $O(1/N)$. That means that any possible negative eigenvalue of  $M^a_N(\nu-\nu_0)$ must be of order  $O(1/N)$,
since the matrix $\hat M_N(\nu-\nu_0)$ is positive definite  all $N\in\N$.
Then the spectrum of $M^a_N(\nu)$ and of $\hat M_N(\nu)$ are bounded from below by $2\nu_0/\pi + O(1/N)$. Thus,
we obtained the 
desired spectral estimate for
 $\hat M^{-1}_N(\nu)$.
\end{proof}

\begin{remark}
It is well known (see, for example, \cite{SaTo}) that the integral operator $G$ in \eqref{u-eq} (applied to $u(\m)=u_x(\m)\varphi(\m)$) expressing the 
Green's potential of $u$ 
is positive definite. Therefore, arguments similar to those used in Lemma \ref{lem-norm-M-inv} 
 show that  the  spectrum of the operator 
$G+\s$ is situated on $[\s_0,\infty)$, where
\be
\s(\eta)\geq \s_0>0 \quad \text{on $\G$}.
\ee
Thus, we conclude that the operator $(G+\s)^{-1}$ has inverse
bounded by $\s_0^{-1}$ in the appropriate functional space.
 
\end{remark}

Let for a fixed large $N\in\N$ the points $z_j:=z_{j,N}\in \G^+$, where $j=1,\dots,N$, denote the centers of the bands that are distributed on $\G^+$ according to $\varphi(z)$. 
We   want to know how well the values of 
\be\label{uU}
\hat u_j=\frac{1}{\pi} \varphi(z_j)NU_{j,N},  
 \ee
approximate $u(z_j)$. Another question is the approximation of 
 the solution $u(z)$ of \eqref{u-eq} by a piecewise constant
function
\be
\hat u(z)=\hat u_N(z):=\sum_{j=1}^N \hat u_j\chi_j(z),
 \ee
where 
 $\chi_j$ is a characteristic function of a simple arc of $\G_j^+$ containing $z_j$ and going half way to the neighboring points $z_{j\pm 1}$.  

 \begin{theorem}\label{th-v>0}
Let the assumptions \eqref{Fred-2} and \eqref{varphi-def} with some $\beta >0$  hold. 
If  $u(z)$ is $\a$-H\"older continuous  on $\G^+$ with some $\a\in(0,1]$. Then: 

(a)  the discrete measure $\l_N$ from Theorem \ref{theo-ImN}, part (ii),  weakly* converges to $u\l$ with  accuracy $O(N^{-\varrho})$, where $\varrho=\min\{\a,\beta\}$; if $\varrho=1$, then the  accuracy $O(N^{-\varrho})$ should be replaced by  $O(\frac{\ln N}{N})$;

(b) 
$|u(z)-\hat u_N(z)|=O(N^{-\varrho})$ 
 as $N\ra\infty$ uniformly on $\G^+$.
If $\varrho=1$, then the  accuracy $O(N^{-\varrho})$ should be replaced by  $O(\frac{\ln N}{N})$.
 \end{theorem}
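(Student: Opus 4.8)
The plan is to reduce the statement to a quantitative comparison between the finite linear system \eqref{Ueq1} (equivalently \eqref{WFR}) and the continuous Fredholm equation \eqref{u-eq}, and then to control the discretization error using the H\"older continuity of $u$ together with the density hypothesis \eqref{varphi-def}. First I would sample the integral equation \eqref{u-eq} at the band centers $z_j=z_{j,N}$: writing $u_x=u/\varphi$ and using the partition into regions of attraction $\G_j$, I would replace the integral $\int_{\G^+}\ln\le|\frac{\m-\bar z_j}{\m-z_j}\ri| u_x(\m)\varphi(\m)|d\m|$ by the Riemann-type sum $\sum_{k\neq j}\ln\le|\frac{z_k-\bar z_j}{z_k-z_j}\ri| u_x(z_k)\varphi(z_k)\l(\G_k)$. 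The error in this replacement has two sources: the variation of $u_x\varphi$ over each $\G_k$, which is $O(\l(\G_k)^\a)=O(N^{-\a})$ by H\"older continuity (away from the logarithmic singularity), and the error \eqref{varphi-def} in approximating $\l(\G_k)$ by $\frac1{N\varphi(z_k)}$, which is $O(N^{-\b})$. The diagonal/near-diagonal terms, where the logarithm is large, must be handled separately: the singularity is integrable, and the contribution of the arc $\G_j$ itself together with its immediate neighbors is $O(\frac{\ln N}{N})$ uniformly, which is absorbed into $O(N^{-\varrho})$ unless $\varrho=1$, in which case it is exactly the stated $O(\frac{\ln N}{N})$ term. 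After this step I would have shown that the vector $\big(u(z_j)\big)_{j=1}^N$ satisfies the finite system \eqref{Ueq1} up to a residual of size $O(N^{-\varrho})$ in each component, using the asymptotic form \eqref{lead-int-ass-evB} of $\hat M_N(\nu)$ from Lemma \ref{lem-est-evenB} to match the kernel and the secular term $2\nu(z_j)u_x(z_j)$.

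Next I would invoke Lemma \ref{lem-norm-M-inv}: since $\hat M_N(\nu)$ is invertible with spectral radius of $\hat M_N^{-1}(\nu)$ bounded by $\frac{\pi}{2\nu_0}+O(1/N)$ under \eqref{Fred-2}, the finite system is well-conditioned uniformly in $N$. Subtracting the exact finite system for $\big(NU_{j,N}\big)$ from the approximate one for $\big(u(z_j)\big)$ and applying this uniform bound, I would conclude $\max_j\big|\,\tfrac1\pi\varphi(z_j)NU_{j,N}-u(z_j)\,\big|=O(N^{-\varrho})$, i.e. $\max_j|\hat u_j-u(z_j)|=O(N^{-\varrho})$ (with the $O(\frac{\ln N}{N})$ replacement when $\varrho=1$). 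This is essentially part (b) at the nodes; to upgrade to the uniform estimate $|u(z)-\hat u_N(z)|=O(N^{-\varrho})$ on all of $\G^+$ I would add the H\"older bound $|u(z)-u(z_j)|=O(N^{-\a})$ for $z$ in the arc $\G_j$ carrying $\chi_j$, which is dominated by $N^{-\varrho}$ since $\varrho\le\a$. For part (a), weak* convergence of $\l_N=\sum_j\frac{U_{j,N}}{2\pi}\d(z-z_{j,N})$ to $u\,d\l$ with accuracy $O(N^{-\varrho})$ follows by testing against a fixed Lipschitz (or continuous) function $\phi$: write $\int\phi\,d\l_N-\int\phi u\,d\l=\sum_j\big(\tfrac1{2\pi}U_{j,N}\phi(z_j)-\int_{\G_j}\phi u\,d\l\big)$, bound each summand using $\tfrac1{2\pi}U_{j,N}=\tfrac1{2\pi N}\cdot\tfrac{\pi \hat u_j}{\varphi(z_j)}$, the node estimate just proved, \eqref{varphi-def}, and the modulus of continuity of $\phi u$, and sum to get $O(N^{-\varrho})$ (respectively $O(\frac{\ln N}{N})$).

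The main obstacle I anticipate is the uniformity in $j$ of the discretization error near the logarithmic singularity of the kernel $\ln\le|\frac{\m-\bar\eta}{\m-\eta}\ri|$: one must show that the contribution of the band $\G_j$ and a bounded number of neighbors to the quadrature error is $O(\frac{\ln N}{N})$ uniformly, and that the remaining ``far'' part genuinely gains the full H\"older exponent $\a$ rather than something smaller coming from the near-singular but non-coincident terms $z_k$ close to $\bar z_j$ (relevant when $\G^+$ approaches $\R$). This requires a careful splitting of the sum by distance from $z_j$ and a dyadic estimate of $\sum_k \l(\G_k)^{1+\a}/\mathrm{dist}(z_k,z_j)$ together with its $\bar z_j$-counterpart. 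A secondary technical point is justifying that the asymptotic matrix $\hat M_N(\nu)$ from \eqref{lead-int-ass-evB} may be used in place of the exact $\hat M_N(\nu)$ inside the inversion step with only an $O(1/N)$ penalty; this is exactly the content of Lemma \ref{lem-est-evenB} and Weyl's perturbation bound already invoked in Lemma \ref{lem-norm-M-inv}, so it should go through once the error constants there are tracked uniformly in $\G^+$.
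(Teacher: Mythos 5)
Your proposal follows essentially the same route as the paper's proof: sample the integral equation \eqref{u-eq} at the band centers, bound the Riemann-sum discretization error by $O(N^{-\alpha})$ (H\"older) and $O(N^{-\beta})$ (from \eqref{varphi-def}) with the near-diagonal logarithmic terms contributing $O(\ln N/N)$, show that $\big(u(z_j)\big)_j$ solves the finite system \eqref{Ueq1} up to a uniformly small residual, and then apply the inverse bound from Lemma \ref{lem-norm-M-inv} to obtain the node estimate, from which parts (a) and (b) follow. The only minor difference is that where you anticipate needing a dyadic estimate for the near-singular sum, the paper uses a simpler telescoping argument: after splitting into near and far bands, the contribution of the near ones $j\in J_1$ is bounded by monotonicity of $g(\cdot,z_k)$ on each side of $z_k$, so the partial sum telescopes to $\frac 2N\max_{j\in J_1}|g(z_j,z_k)|=O(\ln N/N)$ uniformly in $k$.
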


\begin{proof}
 (a) Substitution of 
 $\sum_{j=1}^N u(z_j)\chi_j(z)$ into \eqref{u-eq} yields
 \be\label{u-eq1}
 \sum_{j=1}^N u(z_j)\int_{\G^+_j}g(w,z)|dw|+\s(z) \sum_{j=1}^N u(z_j)\chi_j(z)=\phi(z)+E_1(z),
 \ee
 where: $g(w,z)$ denotes the kernel of the integral operator and $\phi(z)$ denotes the right hand side in  \eqref{u-eq}, $\G^+_j=\supp \chi_j$ and  $E_1(z)$ is the error term. It is straightforward to check that 
 $E_1(z)=O(N^{-\a})$ 
 uniformly on $\G^+$.
 
 We now choose $z=z_k$, $k=1,\dots, N$, so that \eqref{u-eq1} becomes
\be\label{u-eq2}
 \sum_{j\neq k} u(z_j)\int_{\G^+_j}g(w,z_k)|dw|+u(z_k)\left[\s(z_k)+\int_{\G^+_k}g(w,z_k)|dw|\right] =\phi(z_k)+E_1(z_k).
 \ee
 
Using the mean value theorem, we can write $\int_{\G^+_j}g(w,z_k)|dw|=g(w_j)|\G_j|$, where $|\G_j|$ is the arclength of $\G_j$ and $w_j\in\G_j$.  Then the error $E_2$ in replacing $ \sum_{j\neq k} u(z_j)\int_{\G^+_j}g(w,z_k)|dw|$ with 
$\sum_{j\neq k} u(z_j)g(z_j,z_k)|\G_j|$ can be estimated as 
\bea\label{E2}
\sum_{j\neq k} u(z_j)\le|\int_{\G^+_j}g(w,z_k)|dw| -g(z_j,z_k)|\G_j|\ri|=
\sum_{j\neq k} u(z_j)\le|g(w_j,z_k) -g(z_j,z_k)\ri||\G_j|\cr
\leq 
\max_{j}(u(z_j)N|\G_j|)\frac 1N \sum_{j\neq k} \max_{w_j\in\G_j}\le|g(w_j,z_k) -g(z_j,z_k)\ri|\quad
\eea

We now want to split the terms in the last into two categories: those that are ``close'' to $z_k$ and those that 
are ``away'' from $z_k$. To arrange such a split 
we notice that for a  given $\G^+$ there exits some $s>0$ such that $g(w,z)$ becomes a monotonic function of $w$
for any fixed $z\in \G^+$ whenever $|w-z|<s$ and all $w$ are ``on the same side'' of $z$.
Denote part of $\G^+$ that is $s$-close to $z_k$ by $D_k$.
We then  split the latter sum into $J_1:=\{j: z_j\in D_k,\,j\neq k\}$ and the remaining part $J_2$. If $j\in J_2$,
then $|g(w_j,z_k)-g(z_j,z_k)|=O(1/N)$ uniformly over $\G^+\setminus D_k$. This part of sum in \eqref{E2} 
can be  estimated by $O(1/N)$ uniformly in $z_k\in\G_+$. For any remaining $j\in J_1$ we get
  \be 
 \max_{w_j\in\G_j}\le|g(w_j,z_k) -g(z_j,z_k)\ri|\leq  \pm[g(\a_j,z_k) -g(\b_j,z_k)] 
 \ee
 depending on which side of $z_k$ the point $z_j$ is.  Here $\a_j,\b_j$ denote the beginning and end points of the subarc $\G_j$ of the oriented curve $\G^+$. 
 Let us show that this part of the sum is of the order $O(\frac{\ln N}{N})$ uniformly in $z_k\in\G_+$.
 Indeed, replacing $|g(w_j,z_k) -g(z_j,z_k)|$ with $g(\a_j,z_k) -g(\b_j,z_k)$ on the proper side of $z_k$
 we see that 
 \be
 \frac 1N\sum_{j\in J_1} \max_{w_j\in\G_j}\le|g(w_j,z_k) -g(z_j,z_k)\ri|\leq \frac 2N \max_{j\in J_1} |g(z_j,z_k)|=O(\frac{\ln N}{N})
 \ee
 uniformly  in $z_k\in\G_+$. Clearly, we also have 
 \be\label{int-diag}
 \int_{\G^+_k}g(w,z_k)|dw| =O(\frac{\ln N}{N}).
\ee

 Since  $\max_{j}(u(z_j)N|\G_j|)$ and $\sum_{j\neq k}\frac{ g(z_j,z_k)}{N}$
 are bounded (in $N$), it follows from \eqref{varphi-def}, \eqref{u-eq2} and previous analysis that 

 \be\label{u-eq3}
 \sum_{j\neq k}\frac{ u(z_j)g(z_j,z_k)}{N\varphi(z_j)}+u(z_k)\s(z_k) =\phi(z_k)+ O(\frac{\ln N}{N})+O(N^{-\min\{\a,\beta\}})
 \ee
  uniformly  in $z_k\in\G_+$.
  Then, according to \eqref{dens_states},
 
 \eqref{uU} and \eqref{lead-int-ass-evB}, we convert \eqref{u-eq3} into
  \be\label{u-eq4}
 -\sum_{j=1}^N\frac{ \check u_j}{N}\Im\oint_{\bar B_k}\frac{P_j(\z)d\z}{R(\z)} =\phi(z_k)+ O(\frac{\ln N}{N})+O(N^{-\varrho}).
 \ee
  Since $U_{j,N}=\frac{ \mathring u_j}{N}$ and $\phi(z_k)=\Im z_k$, we have recovered \eqref{Ueq1}  subject to the 
  error terms.
 Rewriting \eqref{u-eq4} as
  \be\label{u-eq5}
 -\sum_{j=1}^N\check u_j \Im\oint_{\bar B_k}\frac{P_j(\z)d\z}{R(\z)} =N\Im z_k+ O({\ln N})
 +O(N^{1-\varrho})
 \ee
 we obtain that, according to Lemma \ref{lem-norm-M-inv},  
 \be\label{ineq-uU}
 |\check u_j - NU_{j,N}|=O(\frac{\ln N}{N})+O(N^{-\varrho}).
 \ee
 Statement (a) of the theorem follows from \eqref{int-diag}. 

 Since \eqref{int-diag} implies $|u(z)-\hat u(z)|=O(\frac{\ln N}{N})+O(N^{-\varrho})$ as $N\ra\infty$ uniformly on $\G^+$, the proof is completed.
\end{proof}

\begin{remark}
Theorem \ref{th-v>0} justifies transition from the first system of linear equations  \eqref{WFR} for the wavenumbers $k_j$ to the integral equation  \eqref{u-eq} in the NDR.  Similar result should be valid for the second system of linear equations \eqref{WFR}.  Assumption 
\eqref{Fred-2} was  used in the proof of Theorem \ref{th-v>0}. 
 \end{remark}

 \begin{remark}\label{rm-br-app}
 Theorem \ref{th-v>0} should be valid for breather gases as well.
 \end{remark}

\subsection{Proof of Theorem \ref{theo-ImN} part (ii)}
According to Theorem \ref{th-v>0} and Remark \ref{rm-br-app}, we can now prove part (ii) of Theorem \eqref{theo-ImN}.

\begin{proof}{\it of Theorem \ref{theo-ImN}, part (ii).}
By the assumption of Theorem \ref{theo-ImN}, part (ii), we have
\begin{align*}
    \l_N\xrightarrow[]{*} ud\l,\quad N\ra \infty,
\end{align*}
where $\l_N=\sum_{|j|=1}^N\frac{U_{j,N}}{\pi}\d(z-z_{j,N})$. Since $F_m$ is continuous on $\G^+$, we have
\begin{align}
    m\int_{\G^+}  \le(2\Im F_m(z)-\oint_{A_0}dF_m\ri) d\l_N(z)\rightarrow 2m\int_{\G^+}u(\z)\Im{F_m(\z)}d\l(\z), \quad N\ra\infty,
\end{align}
where we used $U_{j,N}=U_{-j,N}$, equation \eqref{br-avg-den}, and the fact that
$ \oint_{A_0}dF_m(z)=0,\quad m\in\N.$ Substitute back to \eqref{br-avg-den}, the result \eqref{inv-breather-t1} follows.
\end{proof}
\br
The assumptions of Theorem \ref{th-v>0} imply the weak convergence of $\l_N$ to $u\l$. One of these assumptions  is  the requirement that $\G^+$ is 1D compact (contour). 
\er

\begin{corollary} 
In the thermodynamic limit of fNLS soliton gas, we have
\begin{align}\label{inv-sol}
    I_m=2\int_{\G^+}u(\z)\Im{\z^m}d\l(\z),~~m\in \N.
\end{align}
\end{corollary}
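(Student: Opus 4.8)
The plan is to derive this corollary as the $\delta_0 \to 0$ (soliton gas) specialization of Theorem \ref{theo-ImN}, part (ii), exactly in parallel to how Corollary \ref{cor-sol-inv} was obtained from part (i). Recall that in the soliton gas the permanent band $\g_0$ degenerates to the point $z = 0$, so $R_0(z) = \sqrt{z^2 - \delta_0^2}$ becomes simply $z$, and $\delta_0 = 0$ so the secular term $m d_m \delta_0^{m+1}$ in \eqref{inv-breather-t1} vanishes. Thus the only thing to check is that $F_m(z)$ reduces to $z^m/m$ (or at least to something whose imaginary part is $\Im z^m /m$), so that $2m \int_{\G^+} u(\z) \Im F_m(\z)\, d\l(\z)$ becomes $2\int_{\G^+} u(\z)\,\Im \z^m\, d\l(\z)$.

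First I would compute $F_m$ in the soliton case directly from \eqref{def-F-m}. With $R_0(\z) = \z$, the integrand $[\z^{m-1} R_0(\z)]_+ / R_0(\z) = [\z^m]_+/\z = \z^m/\z = \z^{m-1}$, since $\z^m$ is already a polynomial (all nonnegative powers). Hence $F_m(z) = \int_0^z \z^{m-1}\, d\z = z^m/m$. Substituting into \eqref{inv-breather-t1} with $\delta_0 = 0$ gives $I_m = 2m \int_{\G^+} u(\z)\, \Im(\z^m/m)\, d\l(\z) = 2\int_{\G^+} u(\z)\, \Im \z^m\, d\l(\z)$, which is exactly \eqref{inv-sol}. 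Alternatively, and perhaps more cleanly, one can avoid recomputing $F_m$ altogether by simply taking the $\delta_0\to 0$ limit in the discrete formula \eqref{inv-soliton-t1} of Corollary \ref{cor-sol-inv}: the thermodynamic limit of $I_{m,N} = \left(\frac{m}{2\pi i}\sum_{|j|=1}^N U_{j,N} \oint_{B_j} \z^{m-1}\, d\z\right)(1 + O(N^{1/2}\delta))$ under the weak* convergence hypothesis $\l_N \xrightarrow{*} u\,d\l$ is obtained by the same argument as in the proof of Theorem \ref{theo-ImN}, part (ii), noting that $\frac{1}{2\pi i}\oint_{B_j}\z^{m-1}\,d\z = \frac{1}{2\pi i}\oint_{B_j} d(\z^m/m)$ and that $\Im \oint_{B_j}$ of the holomorphic antiderivative picks out $\Im z_{j,N}^m / m$ in the thermodynamic limit (modulo the $A_0$-period, which now vanishes since $R_0 = \z$ has no branch point at $0$ contributing).

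The proof is therefore essentially a one-line reduction plus a routine verification, and there is no genuine obstacle. The only mild care needed is to make sure the weak* convergence hypothesis and the 1D-contour assumption on $\G^+$ (inherited from Theorem \ref{th-v>0} via Remark \ref{rm-br-app}) are still in force in the soliton-gas setting; but since the soliton gas is literally the $\delta_0 = 0$ case of the breather gas, all the structural hypotheses transfer verbatim. I would write the proof as: ``Set $\delta_0 = 0$ in Theorem \ref{theo-ImN}, part (ii). Then $R_0(\z) = \z$, so $[\z^{m-1}R_0(\z)]_+/R_0(\z) = \z^{m-1}$ and $F_m(z) = z^m/m$ by \eqref{def-F-m}; moreover $d_m\delta_0^{m+1} = 0$. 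Substituting into \eqref{inv-breather-t1} yields \eqref{inv-sol}.''

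\begin{proof}
We set $\delta_0 = 0$ in Theorem \ref{theo-ImN}, part (ii), which corresponds to the fNLS soliton gas (the permanent band $\g_0$ shrinks to the point $z=0$). Then $R_0(z) = \sqrt{z^2 - \delta_0^2} = z$, so that
\begin{align*}
    \frac{[\z^{m-1}R_0(\z)]_+}{R_0(\z)} = \frac{[\z^{m}]_+}{\z} = \frac{\z^m}{\z} = \z^{m-1},
\end{align*}
since $\z^m$ is already a polynomial. Hence, by \eqref{def-F-m},
\begin{align*}
    F_m(z) = \int_0^z \z^{m-1}\, d\z = \frac{z^m}{m}.
\end{align*}
Moreover $m d_m \delta_0^{m+1} = 0$ when $\delta_0 = 0$. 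Substituting these into \eqref{inv-breather-t1} gives
\begin{align*}
    I_m = 2m \int_{\G^+} u(\z)\, \Im \frac{\z^m}{m}\, d\l(\z) = 2\int_{\G^+} u(\z)\, \Im \z^m\, d\l(\z),
\end{align*}
which is \eqref{inv-sol}.
\end{proof}
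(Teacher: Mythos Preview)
Your proof is correct and takes essentially the same approach as the paper, which states this corollary without proof immediately after Theorem \ref{theo-ImN}, part (ii); the intended argument is precisely the $\delta_0 = 0$ reduction you carry out, computing $R_0(\z)=\z$, $F_m(z)=z^m/m$, and noting the secular term vanishes.
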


\br\label{rem-tilde G} 
Let $2g_x$ be the limiting $g$-function defined in Section \ref{sec-dens-of-diff} below.
Repeating arguments of Theorem \ref{theo-ImN} for \eqref{G-est}, we calculate the Taylor coefficients $\tilde G_m$ of 
$2g_x/R_0=\sum_{m=0}^\infty\frac{\tilde G_m}{z^{m+1}}$ for breather gas 
as 
\be\label{tildeGm}
\tilde G_m= -2\int_{\G^+} u(\z)\Im\int_0^\z  \frac{\m^m d\m}{R_0(\m)} d\l(\z) -\frac{1}{2\pi i}\oint_{A_0}\frac{\z^{m+1}d\z}{R_0(\z)}.
\ee
\er


{\it Example.}
Consider a special density of states (see equation \eqref{per-u-br}):
\begin{align}
    u(z)=\frac{|z|}{\pi L}\int_{0}^{\a L}\frac{dx}{\sqrt{q^2(x)+z^2}}, \quad z\in \G^+\subset i\R_+,
\end{align}
where {$q(x)=Q\chi_{[0,\a L)}+q\chi_{[\a L,L]}, \quad Q\geq q\geq 0, \a\in (0,1]$, $x\in [0,L]$} and $q(x)=q(-x)$ for $x\in [-L,0]$. Then  $q(x+2L)=q(x)$ and $\G^+=[iq,iQ]$. Such a density of states $u$ is a periodic breather/soliton gas that will be discussed in Section \ref{sec-per-breather} below. In fact, for such $q(x)$, we have 
\begin{align}
    u(z)=\frac{-iz\a}{\pi \sqrt{Q^2+z^2}}, \quad z\in \G^+.
\end{align}
Since $R_0(\z)=\sqrt{\z^2+q^2}=\sum_{k=-1}^\infty d_k(iq)^{k+1}\z^{-k}$  in a neighborhood of $\z=\infty$, we have $R_0(\z)^{-1}=\z^{-1}\frac{dR_0(\z)}{d\z}=-\sum_{k=-1}^\infty kd_k(iq)^{k+1}\z^{-(k+2)}$ in a neighborhood of $\z=\infty$.
Then
\begin{align*}
    [\z^{m-1}R_0(\z)]_+R_0(\z)^{-1}&=-\le(\sum_{k=-1}^{m-1} d_k(iq)^{k+1}\z^{m-1-k}\ri)\le(\sum_{k=-1}^\infty kd_k(iq)^{k+1}\z^{-(k+2)}\ri)\\
    &=\z^{m-1}-d_m(iq)^{m+1}\z^{-2}+O(\z^{-4}),
\end{align*}
where we used the fact that all coefficients of $R_0(\z)R_0^{-1}(\z)-1$ vanish. Plugging into \eqref{def-F-m}, and taking the imaginary part, we have
\begin{align}
    \Im F_m(\z)=\begin{cases}
    0,& \text{$m$ is even,}\\
    -i\le(\frac{1}{m}\z^m+d_m(iq)^{m+1}\z^{-1}+O(\z^{-3})\ri), &\text{$m$ is odd.}
    \end{cases}\label{Im-Fm}
\end{align}
Thus, applying formula \eqref{inv-breather-t1}, we have, for odd $m$,
\begin{align}\label{Iqm}
    I_m&=\frac{m}{2}\oint\frac{(-i)^3z\a}{\pi \sqrt{Q^2+z^2}}\le(\frac{1}{m}z^m+d_m(iq)^{m+1}z^{-1}+O(z^{-3})\ri)dz+md_m(iq)^{m+1}\nonumber\\
    &=\frac{mi\a}{2\pi}\oint(1+(Q/z)^2)^{-1/2}\le(\frac{1}{m}z^m+d_m(iq)^{m+1}z^{-1})\ri) dz+md_m(iq)^{m+1}\nonumber\\
    &=-m \Res_{z=\infty}\le\{(1+(Q/z)^2)^{-1/2}\le(\frac{1}{m}z^m+d_m(iq)^{m+1}z^{-1})\ri)\ri\}+md_m(iq)^{m+1}\nonumber\\
    &=-m\a (-d_m(iQ)^{m+1}+d_m(iq)^{m+1}) +md_m(iq)^{m+1}\nonumber\\
    &=-\frac{i^{m+1}m!!}{(m+1)!!}(\a Q^{m+1}+(1-\a)q^{m+1})=
    -\frac{i^{m+1}m!!}{(m+1)!!}
    \braket{q^{m+1}(x)},
\end{align}
where $d_m$ is defined in \eqref{dm-int} and $\oint$ denotes the integral over a clockwise loop enclose $\G$ and $ \braket{\cdot}$ denotes the average over the period.
For even $m$, $I_m=0$. 

It is well-known that the densities $f_k$ for fNLS conserved quantities satisfy the following recursion relation (see \cite{Wadati}):
\begin{align}
    f_{n+1}&=\frac{1}{2}\left(\sum_{k=1}^{n-1}f_{k}f_{n-k}-q(x)\le(\frac{f_n}{q(x)}\ri)_x\right),\quad n\in \N\cr
   f_{1}&=\frac{1}{2}|q(x)|^2.    
\end{align}
Let $q(x)$ be a piece-wise constant and  periodic function. Computing the limiting average of $f_n$ over $[-T,T]$, where $T\ra\infty$ (see, for example,  equation (3.3a,b) in \cite{FFM1980}),
we obtain
\begin{align}\label{fn}
    \braket{f_m}=a_m\braket{|q(x)|^{m+1}},\quad m\in \N,
\end{align}
where   $   a_1=\frac{1}{2}, $ and for $m\geq 2$:
\begin{align*}
    a_{m}&=\begin{cases}
    0, & \text{$m$ is even},\\
    \frac{1}{2}\sum_{k=1}^{m-2}a_ka_{m-1-k}, & \text{$m$ is odd}.
    \end{cases}
\end{align*}
Note that for a periodic function the limiting average coincides with the average over the period.

It is easy to check that $a_m=-d_m$. Comparing \eqref{Iqm} with \eqref{fn}, we obtain
\begin{align}
    \braket{f_m}=(-1)^{\frac{m+3}{2}}m^{-1}I_m, \quad m\in \N.
\end{align}

\subsection{Averaged conserved fluxes under   the fNLS and higher times  flows}
\label{sect-higher}
{
	It is well known that the complete integrability of the fNLS equation implies there are infinitely many conversed quantities \cite{ZS}. In fact, the fNLS flow is the second flow in the  so-called focusing Zakharov-Shabat(ZS) hierarchy (for a full detailed description  of the hierarchy see \cite{FT2007}). 
	The RHP \ref{RHPY} for finite gap fNLS solution can be extended to include higher  ZS hierarchy flows if the function $f$ in \eqref{RHPY-jump} is replaced by
	\begin{align}
		f(z|\bm{t})=f_0(z)+\sum_{l=1}^\infty t_lz^l,
	\end{align}
	where  $f_0(z)$ is  the same as in  \eqref{RHPY-jump} and  $\bm{t}=(t_1,t_2,\cdots)$,  with $t_1=x$, 
	$t_2=2t$ and  $t_3,t_4,\dots$ denoting higher flows times.  Here $x,t$ are space-time variables for the fNLS.  
	 By letting  all $t_l=0,\,l\geq 3$,  one recovers the $f$ for the fNLS equation. In general, for $l$th flow in the ZS hierarchy one takes
	\begin{align}
		t_1=x,~~~t_l=T_jt\delta_{j,l},~~~ l=2,3,\cdots,
	\end{align}
	where $\delta_{j,l}$ is the Kronecker delta,  $t$ is the time variable for the $j$th hierarchy equation  and {$T_j>0$ is a certain constant. For example,  $j=3$ corresponds to the mKdV equation with $t_3=4t$.
}

{
	Similarily to the fNLS quasienergy meromorphic differential $dq_N=dq_{2,N}$ one can (uniquely) define the second kind real normalized meromorphic differentials $dq_{j,N}$, $j=2,3,\dots$,  on the Riemann surface $\mathfrak{R}_N$ assocaited with  averaged conserved fluxes  of the fNLS and higher order flows  with the only singularities at infinity
	\begin{align}
		dq_{j,N}(z)\sim [\pm jT_jz^{j-1}+O(z^{-2})]dz, \label{gen-diffs}
	\end{align}
of the main and the secondary sheet  respectively. Accodring to \cite{ForLee}, the  averaged conserved fluxes of the $j$th flow  are the coefficients $I_{m,j,N}$ of the expansion 
	\begin{align}
		dq_{j,N}=\le(jT_jz^{j-1}-\sum_{m=1}^\infty\frac{I_{m,j,N}}{z^{m+1}}\ri)dz.
	\end{align}

One can now follow the approach described in Section \ref{sec-background} to obtain  the breather gas   second NDR equation  
	\begin{align}\label{j-NDR-brea}
	\int_{\Gamma^+}
	\left[\log\left| \frac{w-\bar z}{w-z}\right|+\log\left|\frac{R_0(z)R_0(w)+ z w-\delta_0^2}
	{R_0(\bar z)R_0(w)+ \bar z w-\delta_0^2}\right|\right] v_j(w) d\lambda(w)
	+\sigma(z)v_j(z) \\
	= - \Im[jT_jz^{j-1} R_0(z)],\quad j=2,3,\cdots,
\end{align}
for the $j$th  hierarchy flow, where  $v_j(z)$ denotes the $j$th order analogue of the density of fluxes $v(z)=v_2(z)$. 	By taking the limit $\delta_0\ra 0$, we obtain the 
soliton gas  second NDR equation  
\begin{align*}
	\int_{\Gamma^+}
	\log\left| \frac{w-\bar z}{w-z}\right| v_j(w) d\lambda(w)
	+\sigma(z)v_j(z) 
	= - \Im[jT_jz^{j}],\quad j=2,3,\cdots,
\end{align*}
for the $j$th  hierarchy flow,
It is worth noting that these equations in  the $j=2$ become the NDR equations \eqref{dr_breather_gas2} and \eqref{dr_soliton_gas2} respectively.

As in the case of the fNLS flow (see \eqref{kdp}), the periods $\o_{l,j}=-\oint_{A_l}dq_{j,N}$, $l=1,\dots, N$, represent the solitonic frequencies of the $j$th flow, $j=2,3,\dots.$ Then the average conserved fluxes in the $j$th  flow (of ZS-hierarchy) are summarized at the following theorem
\begin{theorem}
	\label{gen main thm}
	\begin{itemize}
		\item[(i)]	Fix $m\in\N$,  then for a sufficiently large $N$ in the thermodynamic limit of the breather gases for the $j$th fNLS flow the  averaged conserved fluxes are
		\begin{align}\label{Imjn}
			I_{m,j,N}=&\frac{m}{2\pi i}\le(\sum_{|l|=1}^N U_{l,N}^{[j]}\oint_{B_l}\frac{[\z^{m-1}R_0(\z)]_+d\z}{R_0(\z)}+\oint_{A_0}\frac{T_j\z^j[\z^{m-1}R_0(\z)]_+d\z}{R_0(\z)}\ri) \cr
			&\times(1+O(N^\hf\delta)), \qquad \text{	where } \quad  U_{l,N}^{[j]}:=\hf \o_{l,j};
		\end{align}

		\item[(ii)] Let $v_j$ solves \eqref{j-NDR-brea}. If the measures $\l_{j,N}:=\sum_{l=1}^N\frac{U_{l,N}^{[j]}}{2\pi}\d(z-z_{l,N})$, where $z_{l,N}\in\G^+$ denotes the center of the $l$th bands of $\mathfrak R_{N}$, $l=1,\dots,N$, and $\d(z)$ denotes the delta-function,
		are weakly* convergent to the (signed) measure $v_j(z)d\l(z)$ on $\G^+$
		then the thermodynamic limit of   $I_{m,N}$ is given  by
		\begin{align}
			I_{m,j}=2m\int_{\Gamma^+}v_j(\z)\Im{F_{m}(\z)}d\l(\z)+\frac{m}{2\pi i}\oint_{A_0}\frac{T_j\z^j[\z^{m-1}R_0(\z)]_+d\z}{R_0(\z)},\quad m,j\in \N,\label{gen-br-con}
		\end{align}
		where $F_m(\z)$ is defined in \eqref{def-F-m}.
		Moreover, the second integral term in \eqref{gen-br-con} can be explicitily computed in terms of $d_k$, it reads
		\begin{align}
			\frac{1}{2\pi i}\oint_{A_0}\frac{T_j\z^j[\z^{m-1}R_0(\z)]_+d\z}{R_0(\z)}=-T_j\le(\sum_{k=m}^{m+j-1}(m+j-k-2)d_kd_{m+j-k-2}\ri)\delta_{0}^{m+j}.\label{gen-br-second-term}
		\end{align}
	\end{itemize}
\end{theorem}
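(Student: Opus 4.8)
The plan is to run the proof of Theorem \ref{theo-ImN} essentially verbatim, with the scalar potential $f_x(z)=z$ that generates $dp_N$ replaced by the potential $T_jz^j$, which generates $dq_{j,N}$. The first step is the $g$-function description of $dq_{j,N}$: just as $dp_N=[1-2\partial_z g_x]dz$ with $g_x$ solving RHP \ref{RHPg} with $f$ replaced by $z$, one has $dq_{j,N}=[jT_jz^{j-1}-2\partial_z g^{[j]}]dz$, where $g^{[j]}$ solves RHP \ref{RHPg} with $f$ replaced by $T_jz^j$ (equivalently, $g^{[j]}$ is the $t$-derivative of the $g$-function along the $j$th flow). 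The $2N$ conditions forcing the Sokhotsky--Plemelj representation \eqref{gform3}, now with $T_j\z^j$ in place of $f$, to be analytic at $z=\infty$ --- the obvious analog of \eqref{WO}, whose coefficient matrix is unchanged --- determine the real constants $W^{[j]}_{l,N},\O^{[j]}_{l,N}$, with $W^{[j]}_{0,N}=0$, and force $g^{[j]}(z)=O(1)$ at infinity; writing $2g^{[j]}(z)=\sum_{m\ge0}G_{m,j,N}z^{-m}$, the normalization \eqref{gen-diffs} gives $I_{m,j,N}=-mG_{m,j,N}$, the analog of \eqref{G-I relation}. Finally, the $j$th-flow solitonic frequencies $\o_{l,j}=-\oint_{A_l}dq_{j,N}$ (so $U^{[j]}_{l,N}=\hf\o_{l,j}$) and the gap constants $\O^{[j]}_{l,N}$ satisfy the relation \eqref{k-om}, which permits the usual summation by parts converting $\O^{[j]}_{l,N}\oint_{C_l}$ into the $B_l$-cycle integrals $U^{[j]}_{l,N}\oint_{B_l}$.

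With this in hand, part (i) follows the proof of Theorem \ref{theo-ImN}(i) step by step. With $\r_N$ as in \eqref{rho-d}, introduce $\tilde G_{m,j,N}$ as the Taylor coefficients at $\infty$ of $2g^{[j]}/[(1+\r_N)R_0]$, and let $\mathbb{A}_j,\mathbb{B}_j$ be the operators of \eqref{def-A} with $(\z+W_{j,N})$ replaced by $(T_j\z^j+W^{[j]}_{l,N})$ and $U_{j,N}$ by $U^{[j]}_{l,N}$. Since $(\mathbb{A}_j+\mathbb{B}_j)(\z^l)=0$ for $l=0,\dots,2N-1$ (again \eqref{WO}), one gets $\tilde G_{m,j,N}=(\mathbb{A}_j+\mathbb{B}_j)(\z^mP(\z))$. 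Lemma \ref{lem-Rd} then yields, exactly as in \eqref{A-est}--\eqref{B-fin-est}, that $\mathbb{A}_j(\z^mP(\z))=-\frac{1}{2\pi i}\oint_{A_0}\frac{T_j\z^{m+j}\,d\z}{R_0(\z)}+O(\r_*(\d,N))$ --- only the permanent band $\g_0$ survives, since the loops $A_l$, $l\ge1$, shrink with the bands and $W^{[j]}_{0,N}=0$ --- and $\mathbb{B}_j(\z^mP(\z))=-\frac{1}{2\pi i}\sum_{|l|=1}^N U^{[j]}_{l,N}\oint_{B_l}\frac{\z^m\,d\z}{R_0(\z)}$ up to $O(\r_*(\d,N))+O(\d)$, the latter becoming $O(N^{\hf}\d)$ for bands within $O(1/N)$ of $\g_0$, by the same near/far splitting of $B_l$ about $D_l=\{z:|z-z_l|\le\sqrt2|\d_l|\}$ and the same estimate \eqref{B2-est}. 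Multiplying by $(1+\r_N)R_0=\sum_{k\ge-1}d_k\d_0^{k+1}z^{-k}(1+O(\r_*(\d,N)))$ converts the monomials $\z^{m-1-k}$ and $T_j\z^{m+j-1-k}$ into the plus-part combinations $[\z^{m-1}R_0(\z)]_+$ and $T_j\z^j[\z^{m-1}R_0(\z)]_+$, just as in the passage from $\tilde G_{m,N}$ to $G_{m,N}$; combined with $I_{m,j,N}=-mG_{m,j,N}$ this is exactly \eqref{Imjn}.

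Part (ii) is then immediate. The $U^{[j]}_{l,N}$-sum in \eqref{Imjn} equals $\frac{m}{2\pi}\int_{\G^+}\bigl(2\Im F_m(z)-\oint_{A_0}dF_m\bigr)d\l_{j,N}(z)$ --- the same identity relating $\frac{1}{2\pi i}\oint_{B_l}\frac{[\z^{m-1}R_0]_+d\z}{R_0}$ to $F_m(z_l)$, with $F_m$ as in \eqref{def-F-m} and $\oint_{A_0}dF_m=0$, that is used in the proof of Theorem \ref{theo-ImN}(ii) --- so the assumed weak* convergence $\l_{j,N}\xrightarrow[]{*}v_j\,d\l$ together with the continuity of $F_m$ on $\G^+$ gives the first term of \eqref{gen-br-con}, while the second ($A_0$) term in \eqref{Imjn} is independent of $N$ and passes to the limit unchanged. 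For \eqref{gen-br-second-term} I would deform $A_0$ to infinity, so that $\frac{1}{2\pi i}\oint_{A_0}\frac{T_j\z^j[\z^{m-1}R_0(\z)]_+d\z}{R_0(\z)}=-\Res_{\z=\infty}\frac{T_j\z^j[\z^{m-1}R_0(\z)]_+}{R_0(\z)}$; inserting $[\z^{m-1}R_0]_+=\z^{m-1}R_0-\sum_{k\ge m}d_k\d_0^{k+1}\z^{m-1-k}$ and $R_0^{-1}=-\sum_{l\ge-1}ld_l\d_0^{l+1}\z^{-l-2}$ and reading off the coefficient of $\z^{-1}$ (the positive power $T_j\z^{m+j-1}$ contributes nothing) produces the finite double sum $T_j\d_0^{m+j}\sum_{k=m}^{m+j-1}(m+j-k-2)\,d_kd_{m+j-k-2}$, up to the sign fixed by the clockwise orientation of $A_0$, which is the stated identity.

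\textbf{Main obstacle.} Because this is a near-verbatim extension of Theorem \ref{theo-ImN}, there is no deep new difficulty; the step that needs genuine care is the first one --- checking that $g^{[j]}$ solving the modified RHP \ref{RHPg} is still bounded at $z=\infty$ (so that $I_{m,j,N}=-mG_{m,j,N}$) even though the new potential $T_jz^j$ has degree $j>1$, and that the surviving permanent-band contribution is precisely the $A_0$ integral in \eqref{Imjn}. The error estimates are untouched, since $R_0^{-1}$ is bounded away from $\g_0$ and the behavior near $\g_0$ is controlled by the same bound \eqref{B2-est} as before. The one genuinely open point --- the unconditional weak* convergence of $\l_{j,N}$, i.e.\ an analog of Theorem \ref{th-v>0} for the $v_j$'s (cf.\ Remark \ref{rm-br-app}) --- is assumed in the hypothesis of part (ii) and hence does not enter this argument.
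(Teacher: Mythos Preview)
Your proposal is correct and matches the paper's own proof essentially step for step: the paper likewise says the argument is identical to that of Theorem \ref{theo-ImN} with the single modification of replacing the operator $\mathbb{A}$ by $\mathbb{A}_j(f):=-\frac{1}{2\pi i}\sum_{|l|=0}^N\oint_{A_l}(T_j\z^j+W^{[j]}_{l,N})\frac{f(\z)d\z}{R(\z)}$, and then obtains \eqref{gen-br-second-term} by deforming $A_0$ to infinity and reading off the residue from the Laurent expansion of $[\z^{m-1}R_0(\z)]_+R_0(\z)^{-1}$, exactly as you outline.
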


\begin{proof}
	The proof is essentially the same as the proof of Theorem \ref{theo-ImN} except that  in the current proof we modify the definition of operator $\mathbb{A}$, see \eqref{def-A}, as
	\begin{align} 
		\mathbb{A}_j(f)&:=-\frac{1}{2\pi i}\sum_{|l|=0}^N\oint_{A_l}(T_j\z^j+{U_{l,N}^{[j]}})\frac{f(\z)d\z}{R(\z)}.
	\end{align}
	 Applying  the residue theorem to the integral in 
	\eqref{gen-br-second-term}:
	\begin{align*}
		[z^{m-1}R_0(z)]_+R_0(z)^{-1}&=\sum_{l=1}^\infty\le(\sum_{k=-1,k\leq m-1}^{l-1}d_k(l-2-k)d_{l-2-k}\ri)\delta_0^{l}z^{m-l-1},\\
		&=\sum_{l=1}^\infty\le(-\sum_{k=m}^{l-1}d_k(l-2-k)d_{l-2-k}\ri)\delta_0^{l}z^{m-l-1},
	\end{align*}
one obtains
	\begin{align*}
		\frac{1}{2\pi i}\oint_{A_0}\frac{T_j\z^j[\z^{m-1}R_0(\z)]_+d\z}{R_0(\z)}&=T_j\Res\le\{\z^j[\z^{m-1}R_0(\z)]_+R_0(\z)^{-1};\z=\infty\ri\}\\
		&=-T_j\le(\sum_{k=m}^{m+j-1}(m+j-k-2)d_kd_{m+j-k-2}\ri)\delta_{0}^{m+j},
	\end{align*}
which completes the proof.
\end{proof}

Taking the limit $\delta_0\ra 0$, we get the corresponding formulae for the soliton gas. 

\bc  \label{cor-jth}
In the case of the
soliton gas  for the $j$th fNLS flow, i.e. when $\g_0$ is one of the shrinking bands, the equations \eqref{Imjn} and \eqref{gen-br-con} become
\begin{align}
	I_{m,j,N}&=\frac{m}{2\pi i}\le(\sum_{|l|=1}^NU_{l,N}^{[j]}\oint_{B_l}\z^{m-1}d\z\ri)\le(1+O(N^{1/2}\delta)\ri),\\
	I_{m,j}&=2\int_{\G^+}v_j(\z)\Im \z^{m}d\l(\z), \quad j=2,3,\cdots,
\end{align}
respectively.
\ec

\subsection{Thermodynamic limit of the quasimomentum differentials $dp_N$   and related  questions}\label{sec-dens-of-diff} 

In this subsection we 
express the { thermodynamic} limit $dp$ of   $dp_N$, as well as its ``antiderivative"  $2g_x(z)$, see \eqref{2gx-lim-int},
for fNLS soliton and breather gases
 in terms of the density of states $u(z)$.  
 We remind that in this subsection we  assume  that $\G^+$ is a contour and $d\l(z)=|dz|$.
We first show that $dp/dz$ is analytic in $\bar\C\setminus \G$ and find its  boundary behavior on $\G$. The obtained results allow us to express the 
spectral scaling function $\s(z)$ in terms of the average value and the jump of $dp/dz$  on $\Gamma$. Similar results are valid 
for quasienergy as well as for the higher fNLS flows.


Given a  compact piece-wise smooth contour $\G^+\subset \C^+$  define the real
valued function $\th(\m)$ on $\G^+$ by $d\m=|d\m|e^{i\th(\m)}$, where $d\m$ is a differential in the positive direction on $\G^+$.  Taking into account the orientation of $\G$, the same relation on $\G^-$
is $d\m=-|d\m|e^{-i\th(\m)}$. Introducing a new function $\breve u=ue^{-i\th}$, we observe that
$u|d\m|=\breve u d\m$ on $\G^+$. This equation can be extended to $\G$ if 
$\breve u$ is Schwarz symmetrically continued on $\G^-$.

In the following Theorem \ref{the-lim-dpdz} we  express the thermodynamic limit  $\frac{dp}{dz}$ of the quasimomentum density $\frac{dp_N}{dz}$, defined as  $\frac{dp}{dz}=1-\sum_{m=1}^\infty\frac{I_m}{z^{m+1}}$
in  \eqref{def-dpdz}, in terms of the  Cauchy transform of $\breve u$.  Also,  the finite Hilbert transform (FHT) $H_\G$ in Theorem \ref{the-lim-dpdz} and in this paper is defined as 
\be H_\G[f](z)=\frac{1}{\pi}p.v.\int_\G\frac{f(w)dw}{w-z}
\ee
 on $\G$.

\bt\label{the-lim-dpdz}
Let $\G\in\C$ be a simple, compact, piece-wise smooth  Schwarz symmetrical contour and   
the density of states $u(z)$ is the solution of  \eqref{dr_soliton_gas1}.
Then: (i) $\frac{dp}{dz}$ is analytic in $\bar\C\setminus\G$ and 
	\be  \label{mer-diff0}
	\frac{dp}{dz}=1+2\pi C_{\G}[\breve u] \quad\text{ on $\bar\C\setminus \G$ },
	\ee
	where $C_{\G}$ denotes the Cauchy transform on $\G$, and;
	(ii) the jump $\D \frac{ dp}{dz}$ of $\frac{ dp}{dz}$ over $\G$ is {$2\pi \breve u$} whereas the average 
	$(\frac{ dp}{dz})_{av}:=\hf[(\frac{ dp}{dz})_+ + (\frac{ dp}{dz})_-]={1-\pi iH_\G[}\breve u]$.
 \et
\begin{proof}
(i) The compactness of $\G^+$ implies that the series  \eqref{def-dpdz} for $\frac{ dp}{dz}$ has non zero radius of convergence. Then, according to \eqref{def-dpdz},

\bea\label{dp-hil-0}
\frac{ dp}{dz}= 1- 2 \int_{\G^+}u(\m)|d\m| \sum_{m=1}^\infty\frac{\Im\m^m}{z^{m+1}}=
1-\frac 1{i} \int_{\G}u(\m)|d\m| \sum_{m=1}^\infty\frac{\m^m}{z^{m+1}}\cr
= 1+\frac {2\pi}{2\pi i} \int_{\G}\frac{\breve u(\m)d\m}{\m-z} =1+2\pi C_{\G}[\breve u]
\eea
where we use $\m^m-\bar\m^m=2i\Im \m^m$ and  $u(z)$ has anti-Schwarz symmetric extension of in $\C^-$.
Taking into account the orientation of $\G$,
we have  $d\m=-|d\m|e^{-i\th(\m)}$   on $\G^-$ so that 
$\breve u$ is Schwarz symmetrically continued on $\G^-$.
Formula \eqref{dp-hil-0} is valid in $\bar\C\setminus \G$. Thus, we showed \eqref{mer-diff0}.

(ii) According to \eqref{mer-diff0}, $\D \frac{ dp}{dz}={  2\pi\breve u}$ on $\G$ and $(\frac{ dp}{dz})_{av}=1-i\pi H_\G[\breve u]$.

\end{proof}

As an immediate consequence of Theorem \ref{the-lim-dpdz}, we calculate
\begin{align}
\label{2gx-lim}
-2g_x(z)+z=\int_0^zdp=z+i\int_\G[\ln (\m-z)-\ln \m]u(\m)|d\m|=\cr
z+\int_\G u(\m)\arg\m|d\m|+i\int_\G\ln (\m-z)u(\m)|d\m|,
   \end{align}
which is valid for any $z\in \bar\C\setminus\G$.
As it was noted in the Introduction,
$2g_x(z)$  is analytic but, in general, multiple-valued on $\bar\C\setminus\G$; however, it is single valued on $\bar\C\setminus\g$,  where
$\g\in\C$ be a simple Schwarz symmetric curve that consisits of the superbands and  connecting them gaps.

 According to \eqref{2gx-lim}, $2g_x(\infty)= -\int_\G u(\m)\arg\m|d\m|$, so the very last term in \eqref{2gx-lim} represents the part of the Laurent expansion of $2g_x$ at infinity in the negative powers of $z$.

Considering the average $g_{x+}+g_{x-}$ of the boundary values, we obtain 
\begin{align}
\label{2gx-lim-pm}
g_{x+}(z)+g_{x-}(z)=
-2\int_{\G^+} u(\m)\arg\m|d\m|+i\int_{\G^+}\le[\ln \frac{\bar\m-z}{\m-z}+i\pi\chi_z(\m)\ri]u(\m)|d\m|,
   \end{align}
where $\chi_z(\m)$ is the indicator function of the arc $(z_\infty,z)$ of $\G^+$. Here $z_\infty$ denotes the beginning of the oriented curve $\G^+$. 

In the corollary below the ``carrier density of states" function 
$\tilde u(z)$ was defined in  \cite{ElTovbis} as a smooth Schwarz symmetrical interpolation of the carrier wavenumbers $\tilde k_j$ on $\G$, that satisfies equation (25), \cite{ElTovbis}, see also \eqref{dr_breather_gas1-gen}. Comparing \eqref{2gx-lim-pm} with \eqref{dr_breather_gas1-gen}, in which the limit $\d_0\ra 0$ is taken, and observing that $\Im g_x(z)$ is continuous on $\C$, we obtain  Corollary \ref{cor-gxpm-int} for the case of a soliton gas.

\bc
	In the conditions of Theorem \ref{the-lim-dpdz} we have 
	
	\be \label{sig-dp}
	\s(z)=  \frac{-2\pi\Im \int_0^z \le(\frac{dp}{dz}\ri)_{av}dz}{\D \frac{dp}{dz}(z)}\cdot e^{-i\th(z)},
	\ee
	where $\s(z)$ from \eqref{dr_soliton_gas1} is the relative density of bandwidth. 
	\ec
	\begin{proof}
	
	Indeed, 
	WLOG, we assume $0\in\G$. Integrating the latter equation along $\G$, we obtain
\be
\int_0^z(\frac{ dp}{dz})_{av} dz=z + i\int_\G\ln(w-z)u(w)|dw|-i\int_\G\ln(w)u(w)|dw|,
\ee
where we changed the order of integration in $H_\G[\breve u]$. Taking the imaginary part in the latter equation together
with \eqref{dr_soliton_gas1}
yields \eqref{sig-dp}.
	
\end{proof}

To extend the obtained above result from  soliton to  breather fNLS  gases, we observe that, according to 
  \eqref{G-I relation}  and \eqref{inv-breather-t1},
\be\label{ther-setup}
2g_x(z)= \sum_{m=0}^\infty\frac{G_m}{z^m},\qquad \text{where} \quad   
G_m=-2\int_{\G^+}u(\z)\Im{F_m(\z)}d\l(\z)-d_m\d_0^{m+1}
\ee
in the case of a breather gas and 
 \begin{align}
 	G_m=-\frac{2}{m}\int_{\G^+}u(\z)\Im \z^m d\l(\z),
 \end{align}
 in the case of a soliton gas, see \eqref{G-I relation} and \eqref{inv-sol}.

The following Theorem \ref{the-lim-dpdz-breath} provides the expression for $2g_x(z)$ in the case of an fNLS breather gas. As a consequence, we obtain a proof of Corollary \ref{cor-gxpm-int}.


\br
Under the assumptions of Theorem \ref{theo-ImN} part (ii),  it follows from \eqref{2gx-inf} that in the thermodynamic limit of fNLS breather gas, we have
\begin{align}\label{2gx-inf-brea}
	2g_x(\infty)=-2\int_{\G^+}u(\z)\arg{\le(\z+\sqrt{\z^2+\delta_0^2}\ri)}d\l(\z),
\end{align}
and in the soliton gas case, we have
\begin{align}\label{2gx-inf-sol}
	2g_x(\infty)=-2\int_{\G^+}u(\z)\arg{\z}d \l(\z).
\end{align}
\er

\bt\label{the-lim-dpdz-breath}
Let $\G\in\C$ be a simple, compact, piece-wise smooth  Schwarz symmetrical contour,  $\g_0=[-i\d_0,i\d_0]$  be the permanent band,
the density of states $u(z)$  solves  \eqref{dr_breather_gas1}
and  $2g_x$ is defined by $2g_x/R_0=\sum_{m=0}^\infty\frac{\tilde G_m}{z^{m+1}}$ with $\tilde G_m$  given by \eqref{tildeGm}. Then
\begin{align}\label{2gx-brea}
2g_x(z)=i\int_\G u(\z)\ln\frac{R_0(\z)R_0(z)+\z z-\d_0^2}{\z-z} |d\z|+z-R_0(z),   
\end{align}
where $z\in \bar\C\setminus (\G\cup \g_0)$.
Moreover, substituting $z=\infty$ in \eqref{2gx-brea} we obtain $2g_x(\infty)$ given by \eqref{2gx-inf-brea}.
 \et
\begin{proof}
 Using \eqref{tildeGm}, we start obtain
 \begin{multline}\label{2gx/R}
     \frac{2g_x}{R_0}=\sum_{m=0}^\infty\frac{\tilde G_m}{z^{m+1}}=
      \sum_{m=0}^\infty
     \frac {-1}{z^{m+1}}\le[\int_{\G^+} 2u(\z)\Im\int_0^\z  \frac{\m^m d\m}{R_0(\m)} |d\z| +\frac{1}{2\pi i}\oint_{A_0}\frac{\z^{m+1}d\z}{R_0(\z)}\ri]\cr
     =i\int_{\G} u(\z)\int_0^\z  \frac{\sum_{m=0}^\infty\frac{\m^m}{z^{m+1}} d\m}{R_0(\m)} |d\z|-
     \frac{1}{2\pi i}\oint_{A_0}\frac{\sum_{m=0}^\infty\frac{\m^m}{z^{m+1}}d\z}{R_0(\z)}=\cr
     -i\int_{\G} u(\z)\int_0^\z  \frac{d\m}{(\m-z)R_0(\m)}|d\z|+
     \frac{1}{2\pi i}\oint_{A_0}\frac{\z d\z}{(\z-z)R_0(\z)}=\cr
     \frac{z}{R_0(z)}-1 +
     \frac{i}{R_0(z)}\int_{\G} u(\z)\ln\frac{R_0(\z)R_0(z)+\z z-\d_0^2}{\z-z} |d\z|,
\end{multline}
 where we used the anti-derivative \eqref{anti-der} and the anti Schwarz symmetry of $u$ in the latter transformation.  Multiplying \eqref{2gx/R} by $R_0$ yields \eqref{2gx-brea}.  Note that $z-R_0(z)$ is an odd function near $z=\infty$ and therefore does not contribute to $2g_x(\infty)$. Therefore, to recover \eqref{2gx-inf-brea},  one needs to divide both the numerator and the denominator of the logarithm in \eqref{2gx/R} by $z$ and use
 the anti Schwarz symmetry of $u$. 
 \end{proof}
 It is straightforward to check that in the limit $\d_0\ra 0$ equation
 \eqref{2gx-brea} for the breather gas turns into \eqref{2gx-lim} for the  soliton gas. We also note that $2g_x(z)$ is analytic in $\bar\C\setminus(\G\cup\g_0)$ and $\Im g_x(z)$ is continuous in $\C\setminus \g_0$.
 
 As a consequence of Theorem \ref{the-lim-dpdz-breath}, we obtain  Corollary \ref{cor-gxpm-int} for the case of a breather gas.

\br
Define $dp/dz=1-\sum_{m=1}^\infty I_m/z^{m+1}$ with $I_m$ given by equation \eqref{inv-soliton-t1}, use the result of Theorem \ref{the-lim-dpdz-breath} and assume that $I_m=-mG_m$ (which can be regarded as the limit of \eqref{G-I relation}). Then we have
\begin{align}\label{dpdz-brea}
	\frac{dp}{dz}=1-2g_{xz}=1+2\pi C_\G[\breve{u}]+\left(\frac{z}{R_0(z)}-1-i\int_{\G}\frac{u(\z)|d\z|}{z+\le(\frac{R_0(\z)-R_0(z)}{R_0(\z)z-R_0(z)\z}\ri)\delta_0}\right).
\end{align}
Comparing  \eqref{dpdz-brea} with $dp/dz$ for the soliton gas from Theorem \ref{the-lim-dpdz}, one finds that  they differ by a ``breather correction" term in the round brackets in \eqref{dpdz-brea}. 

It is easy to see that  in the limit  $\delta_0\ra 0^+$  this correction term becomes $0$ and $dp/dz$ for the breather gas reduces to that for the soliton gas.
\er

\br \label{remark-high-mer}
Similar to Theorem \ref{the-lim-dpdz} results can be obtained 
 for the meromorphic differentials $dq_j$ (defined as the thermodynamic limit of $dq_{j,N}$ given by equation \eqref{gen-diffs}), $j=2,3,\cdots$, where we replace $u(z)$ by the corresponding $v_j(z)$ and
$1$ by $jT_jz^{j-1}$ .
In particular,
\be \label{mer-diff}
\frac{dq_{j}}{dz}-jT_jz^{j-1}=2\pi C_{\G}[\breve v_j], \quad j=1,2, ...,
\ee
where $j=1$ corresponds to  equation \eqref{mer-diff0}. Obviously, $\frac{dq_j}{dz}$ is analytic on $\C\setminus(\G\cup\g_0)$ and the 
jump of $\frac{dq_j}{dz}$ over $\G$ is ${2\pi \breve v_j}$. The average of boundary values of  $\frac{dq_j}{dz}$ on $\G$ is
$1{-i\pi H_\G[\breve v_j]}$.
\er

Suppose now that $\G\subset i\R$, i.e., we have a bound state gas. Then $\breve v_j=-iv_j$, so  \eqref{mer-diff} becomes
\be \label{mer-diffA}
\frac{dq_{j}}{dz}-jT_jz^{j-1}=-2i\pi C_{\G}[v_j].
\ee
Thus the jump of  $\frac{dq_{j}}{dz}$ on $\G$ is {$-2i \pi v_j$} and its  average value on $\G$ is  $1{-\pi H_{\G}[v_j]}$.

\section{ Periodic gases}\label{sec-per}

By periodic soliton or breather fNLS gases we understand gases whose spectral characteristics $\G^+$, $\varphi(  z)$ and 
$\nu(  z)$ can be generated as the   semiclassical limit  of the direct fNLS spectral problem with periodic 
potentials. Such problem was considered in the recent paper \cite{BioOre2020}, where 
the authors considered the (formal) semiclassical limit of the  direct fNLS spectral problem with the real,  even, continuous
single lobe potential $q(x)$ with a period $2L>0$, that is, $q(x+2L)=q(x)$. Following \cite{BioOre2020}, WLOG, we assume
$\max q(x)=q(0)=M$, $\min q(x)=q(\pm L)=m>0$ and $q(x)$ is monotonically decreasing on $[0,L]$.
Then the semiclassical ($\e\ra 0^+$)
limit of  the Floquet discriminant (trace of the monodromy matrix) is given by
\be\label{trace_Mon}
w(\l)= 2\cos\frac{S_1(\l)}{\e}\cosh \frac{S_2(\l)}{\e}
\ee
where  $  z\in[im,iM]$ is the original (Zakharov-Shabat) spectral variable,  $\l:=-  z^2 \in[m^2,M^2]$,
\be\label{S12}
S_1(\l)=\int_{-p(\l)}^{p(\l)}\sqrt{[q^2(x)-\l]}dx,~~~~~~\qquad S_2(\l)=\int_{p(\l)}^{L}\sqrt{|q^2(x)-\l|}dx,
\ee
and $x=\pm p(\l),\, |x|\leq L   \Leftrightarrow  q^2(x)=\l$. 

The Lax spectrum (the bands) of the spectral problem are defined by the requirements  $\Im \D(  z)=0$,
$|\Re \D(  z)|\leq 2$, where $\D(  z)=w(\l)$. It is well known that $ \D(  z)$ 
 is Schwarz symmetrical and it was argued in \cite{BioOre2020} that, in the semiclassical limit, the Lax 
 spectrum consists of a ``cross'' $\R\cup [-im,im]$, which represents a single  band,  combined with the compact
 \be\label{per-Gam}
 \G^+=[im,iM] \quad\text{ and its Schwarz symmetrical image $\G^-=\bar\G^+$}
\ee
where additional $\e$-scaled bands are accumulating as $\e\ra 0$. 
Equation \eqref{trace_Mon} implies  
that the centers of bands $\l_n\in\G$, $\G=\G^+\cup\G^-$, are given by $\cos\frac{S_1(\l)}{\e}=0$
or
\be
S_1(\l_n)=2\int_{0}^{p(\l_n)}\sqrt{|q^2(x)-\l_n|}dx=\pi \e(n+\hf), ~~~~~~~~~~~{\rm where }~~~ n\in\N.
\ee

Then the total number of bands $N$ is the integer part  of 
\be\label{N-form}
N= {\rm Int~part}\le\{\frac 2{\pi \e}\int_0^L\sqrt{q^2(x)-m^2}dx -\hf\ri\}.
\ee

By definition, the density 
\be\label{density}
 \varphi(\l)=\lim_{\D\ra 0}\lim_{N\ra \infty}\frac {\# \text{of $\l_n$ in $\D$ nbhd of $\l$}}{2N\D}.
\ee
Since
\bea
\lim_{N\ra \infty}\frac {\# \text{of $\l_n$ in $[\l_2,\l_1]$ }}{N(\l_1-\l_2)}=
\frac{\int_{0}^{p(\l_2)}\sqrt{q^2(x)-\l_2}dx      -  \int_{0}^{p(\l_1)}\sqrt{q^2(x)-\l_1}dx}{(\l_1-\l_2)\int_0^L\sqrt{q^2(x)-m^2}dx }\cr
= \frac{\int_{0}^{p(\l_1)}\frac{(\l_1-\l_2)dx} {\sqrt{q^2(x)-\l_2}+ \sqrt{q^2(x)-\l_1}}    +  \int_{p(\l_1)}^{p(\l_2)}\sqrt{q^2(x)-\l_2}dx}
{(\l_1-\l_2)\int_0^L\sqrt{q^2(x)-m^2}dx },
\eea
we obtain     
\be\label{phi_lambda}
 \varphi(\l)=\frac{\int_{0}^{p(\l)}\frac{dx} {\sqrt{q^2(x)-\l}}}{2\int_0^L\sqrt{q^2(x)-m^2}dx },
\ee
provided that $p(\l)$ is differentiable or at least H\"{o}lder class with the exponent $\a>\hf$.
Transition from $ \varphi(\l)$ to  $ \varphi(  z)$ yields the density of bands function
\be\label{phi_eta}
 \varphi(  z)=\frac{|  z|\int_{0}^{q^{-1}(|  z|)}\frac{dx} {\sqrt{q^2(x)+  z^2}}}{\int_0^L\sqrt{q^2(x)-m^2}dx }.
\ee

Note that the numerator is almost identical to the density (A.6) from \cite{Ven1989}.

Finally, we consider the scaled bandwidth function $\nu(  z)$ (with a slight abuse of notation we sometimes write $\nu(\l)$).
It is asymptotically defined by
\be
\left.\frac{dw}{d\l}\ri|_{\l=\l_n} \D\l=4
\ee
where $\D\l$ is the bandwidth. Then
\be
\D\l= \frac {2\e S_1'(\l_n)}{\cosh  \frac{S_2(\l)}{\e}},
\ee
so that, according to \eqref{N-form}, 
\be\label{per-nu-eta}
\nu(  z)=\frac{\pi S_2(\l)}{2\int_0^L\sqrt{q^2(x)-m^2}dx} =
\frac{\pi \int_{q^{-1}(|  z|)}^{L}\sqrt{|q^2(x)+  z^2|}dx}{2\int_0^L\sqrt{q^2(x)-m^2}dx}.
\ee
This formula is near identical to (A.7) from \cite{Ven1989}.
Now we can easily express the ``relative density of bandwidth'' function  
\be \label{sig-eta}
\s(  z)=\frac{2\nu(  z)}{\varphi(  z)}=
\frac{\pi \int_{q^{-1}(|  z|)}^{L}\sqrt{|q^2(x)+  z^2|}dx}{|  z|\int_{0}^{q^{-1}(|  z|)}\frac{dx} {\sqrt{q^2(x)+  z^2}}},
\ee
which, together with the compact set $\G^+$, determines the NDR for soliton and breather gases.

Before finishing this subsection we want to emphasize that for us  the asymptotic formula \eqref{trace_Mon} is a motivation 
to introduce a soliton (or a breather) gas with  $\G^+$, $\varphi(  z)$ and 
$\nu(  z)$  ``parametrized'' by $q(x)$ according to \eqref{per-Gam}, \eqref{phi_eta} and \eqref{per-nu-eta} respectively.
Moreover, as it will be shown below, the case of $m=0$ corresponds to the soliton gas on $\G=[-iM,iM]$, whereas 
the case of $m>0$ corresponds to the breather  gas on $\G=\G^+\cup\G^-$ given by \eqref{per-Gam} with additional ``stationary'' band
on $[-im,im]$. In this approach, the requirements on the ``parametrizing'' function $q(x)$ can be relaxed, in particular, $q(x)$
can have finitely many jump discontinuities on the period. We  also want to mention that in the rest of the section we consider the odd continuation of $\varphi(z)$ from $\G^+$ into $\G$, so the $|z|$ in \eqref{phi_eta} can be replaced by $\Im z$.

\subsection{Density of  states $u(  z)$ for the periodic soliton gas}\label{sec-per-sol}

In this subsection we will solve the NDR equation for the density of states $u(  z)$ in the particular case  of $q(L)=m=0$. 
As it turns out, $u(  z)$ is proportional to $\varphi(  z)$ given by \eqref{phi_eta}. 
The NDR equation  \eqref{dr_soliton_gas1} for the density of states $u(z)$ can be written as 
\be\label{int-sol-bound}
-\int_{-iM}^{iM} \ln |\m-  z|r(\m)\varphi(\m)|d\m| + 2\nu(  z)r(  z) = -i  z, 
\ee
where $ r(  z)= \frac{ u(  z)}{\varphi(  z)}$ and  $  z\in [-iM,iM]$. we will show that \eqref{int-sol-bound} 
is satisfied by $r=\rm{const}$.

\bt\label{the-per-sol-gas}
If $\varphi(  z)$ and $\nu(  z)$ are given by \eqref{phi_eta} and \eqref{per-nu-eta} where $q(x)$ is monotonically 
decreasing on $[0,L]$ and
$q(L)=0$ then equation \eqref{int-sol-bound} 
has a constant solution 
\be\label{per-wavedens}
r=\frac{1}{\pi L}\int_0^L q(x) dx,
\ee
so that the density of states 
\be\label{per-u}
u(  z)= {\frac{|  z|}{ \pi L}}\int_{0}^{q^{-1}(|  z|)}\frac{dx} {\sqrt{q^2(x)+  z^2}},\quad   z\in\G^+.
\ee
\et

\begin{proof}
As it was proven in \cite{KT2021}, there exists a unique solution for the integral equation \eqref{int-sol-bound}. 
Assume  that $r(  z) = r$, where $r>0$ is a constant.
If we can find $v$ satisfying \eqref{int-sol-bound}, we will prove the theorem.

Changing variables $\m=iy, \,   z=i\x$  in \eqref{int-sol-bound} and then differentiating  in $\x$, we obtain 
\be\label{per-diff-int}
\int_{-M}^M\frac{\varphi(iy)dy}{y-\x}+2\frac{d}{d\x}\nu(i\x)=1/r.
 \ee
 Using \eqref{per-nu-eta}, \eqref{phi_eta} we obtain
 \be\label{f-nu}
 2\frac{d}{d\x}\nu(i\x)= \frac{\pi \x \int_{q^{-1}(\x)}^{L}\frac{dx} {\sqrt{\x^2-q^2(x)}}}{\int_0^L q(x) dx}
 \ee
 and
 \bea \label{f-varphi}
 \int_0^L q(x) dx\int_{-M}^M\frac{\varphi(iy)dy}{y-\x}=  \int_{-M}^M dy\int_0^{q^{-1}(y)}\frac{dx} {\sqrt{q^2(x)-y^2}}+\cr
 \x\int_{-M}^M\frac{dy}{y-\x}\int_0^{q^{-1}(y)}\frac{dx} {\sqrt{q^2(x)-y^2}}
 \eea
 Changing the limit of integration in the second integral $\Iscr$, we obtain
 \be
 \Iscr=\int_0^Ldx\int^{q(x)}_{-q(x)}\frac{dy}{(y-\x)\sqrt{q^2(x)-y^2}}.
 \ee
 Since the inner integral is zero when $|\x|< q(x)$ and $\frac{i\pi}{\sqrt{q^2(x)-\x^2}}$ otherwise, we obtain
 \be 
 \Iscr=\int_{q^{-1}(\x)}^{L}\frac{i\pi dx} {\sqrt{q^2(x)-\x^2}}=-\pi\int_{q^{-1}(\x)}^{L}\frac{ dx} {\sqrt{\x^2-q^2(x)}},
 \ee
 where one has to consider the proper branch of $\sqrt{\x^2-q^2(x)}$ to obtain the correct sign. 
 We complete calculating \eqref{f-varphi}  by observing
 \be
\int_{-M}^M dy\int_0^{q^{-1}(y)}\frac{dx} {\sqrt{q^2(x)-y^2}}=\pi L.
 \ee
 Substituting now \eqref{f-nu}, \eqref{f-varphi} into \eqref{per-diff-int}, we obtain
 \be
 r=\frac{1}{\pi L}\int_0^L q(x) dx.\nonumber
 \ee
 \end{proof}

Consider the family of even potentials  $q_k(x)$ 
with the period $kL$,  $k\geq 1$, generated by $q(x)$, where $q_k(x)\equiv q(x)$ on $[0,L]$ and $q_k(x)\equiv 0$ on $[L,kL]$.
We can extend Theorem \ref{the-per-sol-gas} from $q(x)=q_1(x)$ to $q_k(x)$ by considering small deformations $\hat q$
of $q_k$ on $[L-\e,kL]$ so that $\hat q_k$ is monotonically decreasing and  $\hat q_k(kL)=0$. Then  Theorem \ref{the-per-sol-gas}
is valid for $\hat q_k$. Thus, in the small deformation limit (for a fixed $k>0$) we obtain the following result.

\bc\label{cor-per-sol-gas}
For described above periodic  potentials $q_k(x)$ with the period $kL$, $k\geq 1$, formulae \eqref{per-wavedens}, \eqref{per-u}
become
\be\label{per-k}
r_k=\frac{1}{\pi kL}\int_0^L q(x) dx, \qquad
u_k(  z)= \frac{|  z|}{k\pi L}\int_{0}^{q^{-1}(|  z|)}\frac{dx} {\sqrt{q^2(x)+  z^2}}
\ee
respectively.
\ec

Another way to prove Corollary \ref{cor-per-sol-gas} is to repeat the steps of Theorem \ref{the-per-sol-gas}, taking into account that
the density $\varphi(  z)$ does not depend on $k$ and 
\be\label{per-nuk}
\nu_k(  z)=\nu(  z)+   \frac{\pi(k-1)L|  z|}{2\int_0^Lq(x)dx}.     
\ee

\subsection{ Density of  states  for periodic breather gas}\label{sec-per-breather}

The results of the previous Section \ref{sec-per-sol} do not work in the case when $m>0$ and the bands are located on the interval $\G^+=[im, iM]$
and its Schwarz symmetrical $\G^-$. 
Since $[-im,im]$ is a single band of the Lax spectrum of $q(x)$ (\cite{BioOre2020}), it makes sense to assume that the case $m>0$ corresponds to
the breather gas. Indeed, the following Theorem \ref{the-per-breath-gas} shows that  a constant $r(  z)=r$ 
satisfies the NDR \eqref{dr_breather_gas1} for the breather gas written as
%
\be\label{int-breath-bound}
\Re\int_\G 
\ln \le(\frac{R_0(  z)R_0(\m)+  z\m+m^2}
{\m-  z}\ri)
r(\m)\varphi(\m)|d\m| + 2\nu(  z)r(  z) = -i R_0(  z), 
\ee
where $  z\in\G$,  $\G= [-iM,iM]\setminus [-im,im]$ and  $R_0(  z) =\sqrt{  z^2+m^2}$.
 
 \bt\label{the-per-breath-gas}
If $\varphi(  z)$ and $\nu(  z)$ are given by \eqref{phi_eta} and \eqref{per-nu-eta} where $q(x)$ is monotonically decreasing on $[0,L]$ and  $q(L)=m>0$ then the integral equation \eqref{int-breath-bound} has a constant solution
\be\label{per-wavedens-brea}
r= \frac{1}{\pi L}\int_0^L\sqrt{q^2(x)-m^2}dx,
\ee
so that the corresponding  density of states 
\be\label{per-u-br}
u(  z)= {\frac{|  z|}{\pi  L}}\int_{0}^{q^{-1}(|  z|)}\frac{dx} {\sqrt{q^2(x)+  z^2}}, \quad   z\in\G^+,
\ee
is given by the same expression as in the soliton gas case, see \eqref{per-u}.
\et
 
\begin{proof}
 Using \eqref{anti-der}, we obtain
 \be\label{breather-int}
 R_0(  z)\int_{\pm im}^\m\frac {d\z}{R_0(\z)(\z-  z)} -\ln(\pm im)=
 - \ln \le(\frac{R_0(  z)R_0(\m)+  z\m+m^2}
{  z-\m}\ri)
 \ee
 where $\pm \Im \m>0$ respectively. Note that the term $-\ln(\pm im)=\mp\frac{i\pi}2-\ln m$  can be ignored when substituting the left hand side
 in   the integral in \eqref{int-breath-bound}  because: i) we 
need only the real part of this integral and so the $\mp\frac{i\pi}2$ term should be ignored; ii) 
{$u=r\varphi$} is  an odd function and so the integral of  $-(\ln m) u(\m)$ is zero. We now replace 
the logarithmic term  in \eqref{int-breath-bound} by the remaining (first) term  of  the left hand side of \eqref{breather-int}.
Now, integrating by parts the obtained integral in \eqref{int-breath-bound},
 we get
 \be\label{breather-int-2}
-\Re\le[rR_0(  z)\int_\G \frac {S_1(\m)d\m}{R_0(\m)(\m-  z)}\ri]
 + 2\pi r S_2(  z) = -iD R_0(  z),
 \ee
 where $D=\int_0^L\sqrt{q^2(x)-m^2}dx$.
 A few words to explain \eqref{breather-int-2}.
First,   the antiderivative of $2D\varphi$ is $-iS_1$, see \eqref{S12} and \eqref{phi_eta}. 
 Second, 
the secular (not integral) terms that appear in integration by parts  become zero.
Indeed, it is obvious that $S_1(\pm iM)=0$ as well as the integral in \eqref{breather-int}, evaluated at $\m=\pm im$, is zero.
 Finally, $|d\m|=-id\m$ explains the sign of the integral term in \eqref{breather-int-2}.
 Note that the radical  $\sqrt{q^2(x)+\m^2}$ in $S_1$ must be positive along the contour of integration, i.e., on the left
 (positive) shore of $\G$,  which corresponds to the  branch $\sqrt{q^2(x)+\m^2}\ra -\m$ as $\m\ra\infty$.
 
 Similarly to Theorem \ref{the-per-sol-gas},  substituting \eqref{S12} into \eqref{breather-int-2} and changing the order of integration, we obtain
 \bea\label{I-per}
 \Iscr=-\int_\G \frac {S_1(\m)d\m}{R_0(\m)(\m-  z)}=2\int_\G
 \frac {\int_0^{q^{-1}(-i\m)}\sqrt{q^2(x)+\m^2} dx}{R_0(\m)(\m-  z)}d\m\cr
= 2\int_0^Ldx\le(\int_{im}^{iq(x)}+ \int_{-iq(x)}^{-im}\ri) \sqrt{\frac{\m^2+q^2(x)}
 {\m^2+m^2}}\frac{d\m}{\m-  z}=\cr
 -2\pi i L+ \frac{2\pi i}{R_0(  z)}\int_0^L\sqrt{q^2(x)+  z^2}\chi_x(  z) dx 
 \eea
 where $\chi_x$ is the characteristic function of the union of the segment $[iq(x), iM]$  with its complex conjugate.
 In \eqref{I-per} we use the standard ($\lim_{\m\ra\infty}\sqrt{q^2(x)+\m^2}= \m$) branch of the radical and thus the
 sign changes after the second equality.

 For $  z\in \C^+$, the latter term in \eqref{I-per} becomes
 \be\label{I-per-breath}
 - \frac{2\pi }{R_0(  z)}\int_{q^{-1}(-i  z)}^L\sqrt{|q^2(x)+  z^2|} dx.
 \ee
 Substituting \eqref{I-per}, \eqref{I-per-breath} into \eqref{breather-int-2} complete the proof of the theorem for $  z\in\G^+$. 
 The case of $  z\in \G^-$ follows from symmetry considerations.
 \end{proof}
 
 \br
 An analog of Corollary \ref{cor-per-sol-gas} is valid for the periodic breather gas with expression for $r_k$
 in \eqref{per-k} being replaced by
 \be\label{per-k-br}
r_k=\frac{1}{\pi kL}\int_0^L \sqrt{q^2(x)-m^2} dx.
\ee
Moreover, the analog of \eqref{per-nuk} for the breather gas is
\be\label{per-nuk-br}
\nu_k(  z)=\nu(  z)+   \frac{\pi(k-1)L|  z|}{2\int_0^L\sqrt{q^2(x)-m^2}dx}.     
\ee
\er

 \subsection{Conserved   densities for periodic gases}

In this subsection, we compute the averaged densities $I_m$ for periodic  gases. We use results from Subsection \eqref{sec-per-sol} and \eqref{sec-per-breather} and from Section \eqref{sec-cons-dens} to derive some formula for computing the averaged densities $I_m$. Based on those formulae, we study the relation between $g$-function with the density of states $u$ in the periodic gases situation.

\bt\label{the-per-I-gx}
If $\varphi(  z)$ and $\nu(  z)$ are given by \eqref{phi_eta} and \eqref{per-nu-eta} where $q(x)$ is monotonically decreasing on $[0,L]$ and  $q(L)=m>0$ , then for any odd $k\in \N$,
\begin{align}
    I_k=\frac{(-1)^{\frac{k+1}{2}}kd_k}{L}\int_0^L q^{k+1}(x)dx,\label{Ik-per}
\end{align}
and $I_k=0$ for any even $k\in \N$,
where $d_k$ is defined in \eqref{dm-int}.

Moreover,
\begin{align}
    2g_x(z)=\frac{1}{L}\int_0^L\le(z-\sqrt{z^2+q^2(x)}\ri)dx+\frac{1}{L}\int_0^L\sqrt{q^2(x)-m^2}dx,\quad z\in \bar\C\backslash \G.\label{2gx-per}
\end{align}
\et

\begin{proof}
 Using the result of Theorem \ref{the-per-breath-gas}, formula \eqref{per-u-br} and equation \eqref{Im-Fm}, we have, for any odd $k\in\N$,
 \begin{align}
      I_k&=\frac{2ik}{\pi L}\int_{im}^{iM} \int_{0}^{q^{-1}(|\z|)}\frac{\z}{\sqrt{q^2(x)+\z^2}}\le(\frac{1}{k}\z^k+d_k(im)^{k+1}\z^{-1}+O(\z^{-3})\ri)dxd\z+kd_k(im)^{k+1}\nonumber \\
      &=\frac{2ik}{\pi L}\int_{0}^{L} \int_{im}^{iq(x)}\frac{\z}{\sqrt{q^2(x)+\z^2}}\le(\frac{1}{k}\z^k+d_k(im)^{k+1}\z^{-1}+O(\z^{-3})\ri)d\z dx+kd_k(im)^{k+1}\nonumber \\
      &=\frac{2i}{\pi L}\int_0^L\frac{2\pi i}{4}\Res_{\z=\infty}\le((\z^{k+1}+kd_k(im)^{k+1}))((q^2(x)+\z^2)^{-1/2})\ri) dx+kd_k(im)^{k+1}\nonumber \\
      &=\frac{i^{k+3}}{L}\int_0^{L}\le(-kd_kq^{k+1}(x)+kd_km^{k+1}\ri)dx+i^{k+1}kd_km^{k+1}=\frac{i^{k+1}kd_k}{L}\int_0^L q^{k+1}(x)dx.\label{Ik-per-proof}
 \end{align}
 While when $k$ is even, due to \eqref{Im-Fm}, $I_k=0$.
 
 Then, by definition, we have
 \begin{align}
   {2g_{xz}=\sum_{k=1}^\infty} \frac{kd_{k}}{L}\int_0^L(iq)^{k+1}(x)z^{-(k+1)}dx={1-\frac{z}{L}}\int_0^L\frac{dx}{\sqrt{z^2+q^2(x)}},\quad z\in \bar\C\backslash \G,
 \end{align}
 and the jump of $2g_{xz}$ on $\G^+$ is
 \begin{align}
     {2g_{xz+}-2g_{xz+}=-\frac{2i |z|}{L}\int_0^{q^{-1}(-iz)}\frac{dx}{\sqrt{z^2+q^2(x)}}=-2i \pi u(z), \quad z\in \G^+.}
 \end{align}
 Applying the anti-symmetric property of $u$, the jump on $\G^-$ can be derived similarly.

Now, by integrating $2g_{xz}$, and taking into account of the boundary behavior \eqref{2gx-inf-brea}, we obtain
\begin{align*}
    2g_x(z)&={z-\frac{1}{L}\int_0^L\sqrt{z^2+q^2(x)}dx}-\pi \int_{\G^+}u(\z)|d\z|\cr 
    &={z-\frac{1}{L}\int_0^L\sqrt{z^2+q^2(x)}dx}+\frac{1}{L}\int_0^L\sqrt{q^2(x)-m^2}dx,\quad z\in \bar\C\backslash \G,
\end{align*}
where we used the fact that $\arg(\z+\sqrt{\z^2+m^2})=\frac{\pi}{2}$ for $\z\in \G^+$.
\end{proof}

\br 
According to Corollary \ref{cor-gxpm-int}, for periodic breather/soliton gas, we have
\begin{align}
    {\sigma(z)u(z)=\frac{1}{L}\int_{q^{-1}(|z|)}^L\sqrt{|z^2+q^2(x)|}dx,\quad \tilde{u}(z)=\frac{1}{L}\int_0^L\sqrt{q^2(x)-m^2}dx.}
\end{align}
These equations are compatible with equations \eqref{sig-eta} and \eqref{per-u-br}.

\er

\subsection{Examples of periodic soliton and breather gases}\label{sec-per-examp}

This subsection consider examples of periodic soliton and breather gases for some choices of $q(x)$. We calculate the corresponding density of bands $\varphi$, scaled bandwidth $\nu$ and the relative scaled bandwidth $\s=\frac{2\nu}{\varphi}$. In some cases we also calculate the corresponding $\nu_k$ and $\s_k$ of the $k$-dilution of the gas. As above, by $k$-dilution we mean the  
increase the period from $2L$ to $2kL$ with $k\geq 1$ while
keeping $q(x)=m$ for $L\leq x\leq kL$.
At the same time, the scaled bandwidth $\nu(z)$ given by \eqref{per-nu-eta} will grow linearly in $k$ for large $k$.
The limit  $k\ra \infty$ corresponds to the semiclassical limit of the decaying potential, provided $m=0$, otherwise, we get 
a potential with a constant background (leading to the breather gas). 
Therefore, equation \eqref{phi_eta} can be used to calculate the semiclassical spectral density for potentials 
with  constant background, whereas \eqref{per-nuk}, \eqref{per-nuk-br} show that in the limit $k\ra \infty$  stationary semiclassical periodic gas is approaching 
super exponential (ideal gas) limit.

Let us calculate some examples.

{\it Example 1.} Consider the box potential with the width $2L$, hight $Q$ and period $2kL$, $k\geq 1$.
Then $M=Q$, $m=0$ and $p(\l)=L$. 
The latter formula can be justified by considering $k$-dilution of the gas.
Thus $S_1(\l)=2L\sqrt{Q^2-\l}$, $S_1(m)=2LQ$ and, by   \eqref{phi_eta},  \eqref{per-nu-eta}
\be\label{per-box}
\varphi(z)=\frac{|z|}{Q\sqrt{Q^2+z^2}},~~~~~~~\quad\nu_k(z)= \frac{\pi(k-1)|z|}{2Q}.
\ee
Note that  $k=1$ corresponds to the condensate $q(x)\equiv Q$ that according to Theorem \ref{the-per-sol-gas},
has DOS ${u(z)=r\varphi(z)=\frac{|z|}{\pi \sqrt{Q^2+z^2}}}$, which is a well known DOS for the soliton condensate on $\G^+=[0,iQ]$.
In the case of $k>1$, we have
\be
\s_k(z)=\pi (k-1)\sqrt{Q^2+z^2},
\ee
i.e., this is exactly the same $\s(z)$ that was obtained in \cite{ElTovbis} {when $r=\frac{Q}{\pi}$ is replaced by $r_k=\frac{Q}{k\pi} <r$, and the $k$-deluted DOS $u_k(z)=u(z)/k.$}

Consider now situation when we fix some $m\in(0,Q)$ and consider the $k$-dilution of the corresponding breather gas. Then
\be\label{box-br-dil}
\nu_k(z)= \frac{\pi(k-1)\sqrt{|z|^2-m^2}}{2\sqrt{Q^2-m^2}},
\quad
\s_k(z)=\pi (k-1)\sqrt{(1-\frac{m^2}{|z|^2})(Q^2+z^2)}.
\ee

In this case, according to Theorem \ref{the-per-breath-gas},
\begin{align}
	r_k=\frac{Q+(k-1)m}{k\pi},\quad u_k(z)=r_k\varphi(z)=\frac{1+(k-1)\frac{m}{Q}}{k\pi}\frac{|z|}{
	\sqrt{Q^2+z^2}}.
\end{align}

To compute the invariants in both cases within the sense of the thermodynamic limit, applying formula \eqref{Ik-per}, we have
\begin{align}\label{I-1}
    I_n=\begin{cases} 0,& n\quad even,\\
    \frac{1}{k} (-1)^{\frac{n+1}{2}}nd_n\le(Q^{n+1}+(k-1)m^{n+1}\ri),& n\quad odd.
    \end{cases}
\end{align}

{\it Example 2.} For the parabolic potential $q(x)=\sqrt{1-x}$ we calculate $p(\l)=1-\l$, so that
\be
S_1(\l)=\frac43(1-\l)^{\frac 32}, \quad \varphi(z)=3|z|\sqrt{1+z^2}, \quad  \nu_k (z)=\frac {\pi |z|}4 \le(2|z|^2+3(k-1)\ri).
\ee
In this case we have
\be
\s_k(z)=\frac\pi {6}\cdot \frac{2|z|^2+3 (k-1)}{\sqrt{1+z^2}}.
\ee

In this example, the limiting averaged invariants are
\begin{align}\label{I-2}
    I_n=\begin{cases} 0,& n\quad even,\\
     (-1)^{\frac{n+1}{2}}\frac{2nd_n}{k(n+3)},& n\quad odd.
    \end{cases}
\end{align}
Note for $k=1$, according to Theorem \ref{the-per-sol-gas}, the DOS $u=r\varphi=\frac{2}{\pi}|z|\sqrt{1+z^2}$.
\br
From the formula \eqref{Ik-per}, it is evident that the limiting averaged invariants $I_n$ of the $k$-dilution of periodic soliton  gases are  simply $I_n/k$, as illustrated by \eqref{I-1} with $m=0$ and \eqref{I-2}.
\er

{\it Example 3.} For the semicircle potential  $q(x)=\sqrt{1-x^2}$ we calculate $p(\l)=\sqrt{1-\l}$, so that
\begin{align*}
S_1(\l)&=2\int_0^{\sqrt{1-\l}}\sqrt{1-\l-x^2}dx\nonumber \\
&=\hf\oint\sqrt{1-\l-x^2}dx =\frac{\pi (1-\l)}{2}=\frac{\pi (1+z^2)}{2}
\end{align*}
Then 
\be
S_1(0)=\frac\pi 2, \quad S_1'(z)=\pi z \quad {\rm and ~so}  \qquad \varphi(z) = 2|z|.
\ee
Finally we obtain
\bea
S_2(\l)=\int_{\sqrt{1-\l}}^1\sqrt{x^2-(1-\l)}dx= 
\frac{\sqrt \l}{2} +\frac{1-\l}{2}\ln\frac{\sqrt{1-\l}}{1+\sqrt\l}
\eea
replacing $\l$ by $-z^2$, we have
\be
\nu(z)=\frac{1}{2}\le(|z|+(1+z^2)\ln{\frac{\sqrt{1+z^2}}{1+|z|}}\ri),\quad~\s(z)=\frac{1}{4}\le(1+\frac{1+z^2}{|z|}\ln{\frac{\sqrt{1+z^2}}{1+|z|}}\ri).\nonumber
\ee
Moreover, according to Theorem \ref{the-per-sol-gas}, we have
\begin{align}
	r=1/4,\quad u(z)=r\varphi=|z|/2.
\end{align}
In this example, the limiting averaged invariants are
\begin{align}
    I_n=\begin{cases} 0,& n\quad even,\\
    \frac{(-2)^{\frac{n+1}{2}}nd_n}{n+2},& n\quad odd.
    \end{cases}
\end{align}

\appendix

\section{Some error estimates}\label{sec-err}

We start with constructing an approximation for the period matrix (the matrix of the system \eqref{WFR}) entries of $\mathfrak R_N$. Let
$\d=\max_{|j|=1}^N |\d_j|$. We also use notation: $R_m(z)=\sqrt{(z-z_m)^2-\d_m^2}$, $m=-N,\dots,N$.

Let us recall a few facts about the thermodynamic limit. First, we assume that for a large $N\in\N$ all the bands (except the stationary band in the breather gas case) shrink around their centers $z_j$, $j=\pm 1,\dots,\pm N$ much faster than $N^{-1}$. In fact, we assume they are shrinking exponentially fast in $N$,
even though some error estimates stay valid  for algebraically fast shrinking.

Secondly, we assume that in the thermodynamic limit the shrinking bands are segments and all the bands are $O(N^{-1})$ spaced. That is, 
 there exists  a constant $\varphi_0>0$ such 
that  for all $j,k$
\be\label{dist-bound}
\min_{j\neq k}|z_j-z_k|\geq \frac 3{N \varphi_0}.
\ee
By the same argument we can also require that  $ \frac 3{N \varphi_0}$ is the lower bound of the distances between 
 any $z_l$ and the exceptional band $\g_0$.

The function $\r_N(z)$, defined by
\be\label{R-est}
R(z) =R_0(z)\prod_{|j|=1}^N(z-z_j)(1+\r_N(z)),
\ee
is analytic in $\bar C\setminus \cup_{j=-N}^N \g_j$ and $\r_N(\infty)=0$.
The following lemma shows that in the thermodynamic limit $\r_N(z)$ approaches zero uniformly away from $\G$.

\bl\label{lem-Rd}
Under the thermodynamic limit assumptions for the breather gas, including \eqref{dist-bound}, for any sufficiently large $N$ we have: a) 
\be\label{rho-est}
|\r_N(z)|\leq 3\sqrt 2 e\varphi_0^2\d^2 N^2\ln N=:\r_*(\d,N)
\ee
 as long as $z$ is away from the shrinking bands, namely, 
\be\label{away}
|z-z_j|\geq \sqrt 2 |\d_j| \quad \text{ for all $j=\pm 1,\dots,\pm N$};
\ee

b) If $|z-z_j|< \sqrt 2 |\d_j| $ for some $j$, $1\leq |j|\leq N$, then
\be\label{rho-est-in}
(1+\r_N(z))^{-1}=\frac{z-z_j}{R_j(z)}(1+O(\r_*)).
\ee

\el 

\begin{proof} Part a).
 Since
 \be
R(z) =R_0(z)\prod_{|j|=1}^N(z-z_j) \prod_{|j|=1}^N\le(1-\frac{\d_j^2}{(z-z_j)^2}\ri)^\hf,
 \ee
 we need to estimate 
 \be 
 1+\r_N(z)=\prod_{|j|=1}^N\le(1-\frac{\d_j^2}{(z-z_j)^2}\ri)^\hf=e^{\hf\sum_{|j|=1}^N\ln(1-x_j)},
 \ee
  where $x_j=\frac{\d_j^2}{(z-z_j)^2}$.
 Using the obvious inequality
 $|e^x-1|\leq |x|e^{|x|}$, we have
 \be\label{ab-rho}
 |\r_N(z)|\leq \hf|\sum_{|j|=1}^N\ln(1-x_j)|e^{\hf|\sum_{|j|=1}^N\ln(1-x_j)|}.
 \ee
  According to \eqref{away}, all $|x_j|\leq \hf$.
 
 Consider now
 \be
 |\ln(1-x)|^2=\ln^2|1-x|+\arg^2(1-x)\leq |\ln(1-x)|^2+\arcsin^2|x|.
 \ee
 One can easily show  that
 \be
 |\ln(1-|x|)|\leq \frac{|x|}{1-|x|},~~~~~~\arcsin|x|\leq\frac{|x|}{\sqrt{1-|x|^2}}\leq \frac{|x|}{1-|x|},
 \ee
so that
\be \label{ln1-x}
|\ln(1-x)|\leq \frac{\sqrt 2|x|}{1-|x|}\leq 2\sqrt 2 |x|
\ee
provided $|x|\leq \hf$.

Then 
\be
\hf|\sum_{|j|=1}^N\ln(1-x_j)|\leq \sqrt 2\sum_{|j|=1}^N|x_j|\leq \sqrt 2 \d^2 r_N(z),\quad \text{where} \quad r_N(z)=\sum_{|j|=1}^N\frac 1{|z-z_j|^2}.
\ee
Thus, condition \eqref{away} implies
\be\label{ab-rho1}
 |\r_N(z)|\leq \sqrt 2 \d^2 r_N(z)e^{\sqrt 2 \d^2 r_N(z)}.
 \ee

If $d>0$ is the distance between $z$ and $\G$ then $r_N(z)\leq \frac{2N}{d^2}$, so $|\r_N(z)|\ra 0$ very fast
as $N\ra\infty$.  Consider now $|z-z_j|=\sqrt 2 |\d_j|$ and the worse case scenario where the $2N$ centers $z_k$ pack the plane in hexagonal pattern (circles packing pattern) centered at $z_j$. Then we can estimate 
\be\label{hex}
r_n(z)\leq 6N^2\varphi_0^2\le(1+\hf+\frac 13+\dots +\frac 1n\ri )\leq 3N^2\ln N\varphi_0^2,
\ee
where $n$ is the smallest integer satisfying $n\geq\sqrt{\frac{2N}{3}} $. For a large $N$, \eqref{ab-rho1}- \eqref{hex} imply $\r_*\leq 1$ and so we obtain 
\eqref{rho-est}.

To prove part b) we notice that 
\be\label{rho-b} 
1+\r_n(z)=\frac{R_j(z)}{z-z_j}\prod_{|k|=1, k\neq j}^N\le(1-\frac{\d_k^2}{(z-z_k)^2}\ri)^\hf
 \ee
 so that 
 \be
 (1+\r_n)^{-1}=\frac{z-z_j}{R_j(z)}(1+\tilde \r_N(z))^{-1},
 \ee
 where  $1+\tilde \r_N$ denotes the  product in \eqref{rho-b}.
 Now  part b) follows from the fact 
that part a) is applicable to $\tilde\r_N$.
 
\end{proof}

\br \label{rem-R-sol-est}
In the case of  soliton gas we  introduce $\tilde{\d}=\max_{|j|=0}^N |\d_j|$. Lemma \ref{lem-Rd} is also valid in this case if
we replace $\d$ by $\tilde \d$ and $R_0(z)$ by $z-z_0$.
\er

\br\label{rem-prod-est}
We want to state separately a useful estimate based on \eqref{ab-rho}, \eqref{ln1-x}: 
\be\label{prod-est}
\le|\prod_{j=1}^N(1-x_j)-1 \ri|
\leq \sum_{j=1}^N\ln(1-x_j)|e^{\hf|\sum_{|j|=1}^N\ln(1-x_j)|}\leq 2^{\frac 32}\sum_{j=1}^N|x_j|e^{2^{\frac 32}\sum_{j=1}^N|x_j|}
\ee
provided $|x_j|< \hf$ for all $j$.
\er

\subsection{Approximation of the normalized holomorphic differentials $w_j$}

In the limiting case when all the bands $\g_j$ except, possibly, $\g_0$, shrink to points $z_j$, $j=\pm 1,\dots,\pm N$, that is, $\d=0$,
it is easy to check that 
\be\label{wj0}
w_j(z)=w_j(z,0)=-\frac{R_0(z_j)dz}{2\pi i R_0(z)(z-z_j)}=-\frac{R_0(z_j)\prod_{k\neq j}(z- z_k)dz}{2\pi i R(z)}
\ee
Let us fix some $j=\pm 1,\dots,\pm N$. In general, we have
\be\label{wjd}
w_j(z,\d)=\frac{\m_j(\d)\prod_{k\neq j}(z-\m_k(\d))dz}{ R(z)},
\ee
where ${\m_j}=\k_{j,1}$, see  \eqref{Pj}. As it follows from \eqref{wj0}, $ \m_k (0)=  z_k$ for all  $k\neq j$ 
and $\m_j(0)=-\frac{R_0(z_j)}{2\pi i}$.
Also, it is a well known
fact (see, for example, \cite{FK}) that for any nondegenerate genus $2N$ hyperelliptic Riemann surface $\mathfrak{R}$ there exists a unique collection
of the normalized holomorphic differentials $w_j$, $j=0,\pm 1,\dots,\pm N$. Here we assume that $\d_j^2=\phi_j(\d^2)$,  where all $\phi_j\in C^2[0,\d_*]$
and their norms are uniformly bounded with respect to $N$. 
We also assume  $\phi'_0(0)= 0$.

In the following lemma we estimate the deformation of   $\m_k(\d)$ for sufficiently  small $\d$. Here we assume that the set of all $z_j$ is bounded
and \eqref{dist-bound} holds.

We use notations $\vec \m (\d),\vec \eta$
for $2N$ dimensional vectors $\vec \m (\d)=(\m_{-N}(\d),\dots, \m_{N}(\d))^t$ and    $\vec \eta= (z_{-N},\dots, z_{N})^t$.

\bl\label{lem-muj} Let us fix an arbitrary $j=\pm 1,\dots,\pm N$ and consider all $\m_k(\d)$ defined by \eqref{wjd}. Then for all
 $j=\pm 1,\dots,\pm N$,  all $k=\pm 1,\dots,\pm N$ and  all  $\d$ satisfying 
 \be\label{d-assump}
 \d=o(N^{-6}),
 \ee
  in the thermodynamic limit   we have
\be\label{mk-est}
|\mu_k(\d)- z_k|\leq C N^4 \d^2,~~~k\neq j~{\rm and~}~~|\mu_j(\d)+\frac{R_0(z_j)}{2\pi i}|\leq C N^4 \d^2,
\ee
where the constant $C>0$ does not depend on $N,j,k,\d$.
\el
    \begin{proof}
WLOG, we can assume that all contours $\gt_k$ satisfy the condition \eqref{away} from Lemma \ref{lem-Rd} and, thus, the estimate \eqref{rho-est} from this lemma is valid on  
$\gt_k$.

Fix some $j=\pm 1,\dots,\pm N$.  By definition, 
\be\label{wj-def}
F_{kj}(z;\vec\m,\d):=\oint_{\gt_k}w_j(z,\d)=\d_{k,j},
 \ee
 where $\d_{k,j}$ denotes the Kronecker symbol.
  Then
 \be\label{implF}
 \frac{d\vec\m}{d\d^2}=-\le(\frac{\part \vec F_j}{\part \vec\m}\ri)^{-1} \cdot \frac{\part \vec F_j}{\part \d^2} 
 \ee
 where $\vec F_j$ is the $j$th column of the matrix  $F_{kj}$.
 By Implicit Function Theorem, $\vec \m(\d)$ is uniquely defined and differentiable in some neighborhood of $\vec \m(0)$, provided  that 
 $\frac{\part \vec F_j}{\part \vec\m}$
 is invertible at $\d=0$.
 We start with calculating the latter matrix and its inverse.  
 
 Indeed,
 \be\label{fkjmm0}
 \le. \frac{\part  F_{kj}}{\part \m_m}\ri|_{\d=0} =\frac{R_0(z_j)\d_{km}}{R_0(z_m)(z_j-z_m)}-\frac{\d_{kj}}{z_j-z_m}
 \ee 
 when $m\neq j$ and 
  \be 
 \frac{\part  F_{kj}}{\part \m_j}|_{\d=0}=\oint_{\gt_k} \frac{\part   w_j(z,\d)}{\part \m_j}|_{\d=0}=-\oint_{\gt_k} \frac{dz}{ R_0(z)(z-z_j)}
 =\frac{2\pi i\d_{kj}}{R_0(z_j)}.
 \ee

 So,  the matrix $ \le.\frac{\part \vec F_j}{\part \vec\m}\ri|_{\d=0}$ is the sum of the main diagonal 
 \be\label{dFdm}
 {\rm diag}\le.\frac{\part \vec F_j}{\part \vec\m}\ri|_{\d=0}={\rm diag}\le(\frac{R_0(z_j)}
 {R_0(z_{-N})(z_j-z_{-N})},\dots, \frac{-2\pi i}{R_0(z_j)},\dots, \frac{R_0(z_j)}
 {R_0(z_{N})(z_j-z_{N})}\ri)
 \ee
 and the  $j$th  column $\le( \frac{-1}{z_j-z_{-N}},\dots, \frac{-1}{z_j-z_{N}} \ri)^t$, where the $j$th entry should be taken zero.
 Thus, $\le.\frac{\part \vec F_j}{\part \vec\m}\ri|_{\d=0}$ is an invertible matrix, so that the  Implicit Function Theorem is applicable to \eqref{wj-def} for any fixed $j$.
 
 The inverse $\le.\le(\frac{\part \vec F_j}{\part \vec\m}\ri)^{-1} \ri|_{\d=0}$ has the same structure as $\le.\frac{\part \vec F_j}{\part \vec\m}\ri|_{\d=0}$ with
 \be\label{dFdm-inv_d}
  {\rm diag}\le.\le(\frac{\part \vec F_j}{\part \vec\m}\ri)^{-1} \ri|_{\d=0}={\rm diag}\le(\frac{ R_0(z_{-N})(z_j-z_{-N})}{ R_0(z_j)},\dots, \frac{R_0(z_j)}{-2\pi i},\dots, \frac{ R_0(z_{N})(z_j-z_{N})}{ R_0(z_j)}\ri),
 \ee
 and the $j$th column
 $\le( R_0(z_{-N}),\dots,  R_0(z_{N}) \ri)^t$, where the $j$th entry should be taken zero.  Note that 
 \be\label{inv-mat-est}
 \le.\le(\frac{\part \vec F_j}{\part \vec\m}\ri)^{-1} \ri|_{\d=0}=O(N)
 \ee 
as $N\ra\infty$ uniformly in all the entries.
 
Our goal is estimate the growth of 
 $\vec\e(\d)=\vec\m(\d)-\vec\m(0)$
 in a neighborhood $W_\m(N)$ of $\vec\m(0)$. We define $W_\m(N)$ as a centered at $\vec\m(0)$ ``scaled cube", of size $O(N^{-7})$ in the direction of each component. We also introduce the neighborhood $W(N)=W_\m(N)\times W_\d(N)$, where $W_\d(N)=[0,o(N^{-6}))$. 
 We now estimate the factors in the right hand side of \eqref{implF} for $\vec\m\in W_\m(N)$

 
 First assume that $m\neq j$. Then  we have
 \bea\label{fkjmm1}
 \frac{\part  F_{kj}}{\part \m_m}=\oint_{\gt_k} \frac{\part   w_j(z,\d)}{\part \m_m}={-\m_j(\d)}\oint_{\gt_k} \frac{\prod_{l\neq m,j}(z-\m_l(\d)) dz}{ R(z)}=\cr
 {-\m_j(\d)}\le[\oint_{\gt_k} \frac{dz}{R_0(z)R_j(z)R_m(z)} +\oint_{\gt_k} \frac{dz\le(\prod_{l\neq m,j}\frac{z-\m_l(\d)}{R_l(z)}-1\ri)}{R_0(z)R_j(z)R_m(z)}\ri].
 \eea
Direct  calculation shows that 
\begin{align}\label{fkjmm2}
-\m_j(\d)\oint_{\gt_k} \frac{dz}{R_0(z)R_j(z)R_m(z)}=
 \le. \frac{\part  F_{kj}}{\part \m_m}\ri|_{\d=0}-\e_j(\d)\oint_{\gt_k} \frac{dz}{R_0(z)R_j(z)R_m(z)}- \cr
{\m_j(0)}\oint_{\gt_k} \frac{\le[\le(1-\frac{\d_j^2}{(z-z_j)^2}\ri)^{-\hf}\le(1-\frac{\d_m^2}{(z-z_m)^2}\ri)^{-\hf} -1\ri]   dz}{R_0(z)(z-z_j)(z-z_m)}.
\end{align}

We roughly estimate the second term in the right hand side of \eqref{fkjmm2} as
$
2\pi\e(\d)N^2 \varphi^2_0,
$
where 
\be\label{e-def}
\e(\d):=\max_k\e_k(\d)= \max_k\{|\m_k(\d) -\m_j(0)|\}
 \ee
 for all $j,k$ and all $\x\in[0,\d]$. 
 We also estimate the last term in  \eqref{fkjmm2} as
 $|R_0(z_j)|\varphi_0^3N^3\d^2$.
 
 Then
 \begin{align}
     \le| {-\m_j(\d)}\oint_{\gt_k} \frac{dz}{R_0(z)R_j(z)R_m(z)} - \le(\frac{R_0(z_j)\d_{km}}{R_0(z_m)(z_j-z_m)}-\frac{\d_{kj}}{z_j-z_m}\ri)\ri|\leq \cr
\varphi^2_0N^2[ 2\pi   \e(\d)+\varphi_0N|R_0(z_j)|\d^2].
  \end{align}
 Using Lemma \ref{lem-Rd} and Remark \ref{rem-prod-est}, the product in the last term from \eqref{fkjmm1} can be estimated as
 \bea \label{2nd-est}
 \le|\prod_{l\neq m,j}\frac{z-\m_l(\d)}{R_l(z)}-1\ri|\leq
\le|\prod_{l\neq m,j}\frac{z-\m_l(\d)}{(z-z_l)}(1+\tilde\rho_N)^{-1}-1\ri|\leq\cr
 \le|\prod_{l\neq m,j}\le(1-\frac{\e_l(\d)}{(z-z_l)}\ri)(1+\tilde\rho_N)^{-1}-1\ri| \leq   \cr 
 2 e^{2^{\frac 32}\e(\d)\varphi_0N^2}[\r_*+\sqrt 2 \e(\d)\varphi_0N^2]
 \eea 
 provided $\e(\d)N^2\ra 0 $ as $N\ra\infty$.
 Here $\tilde\r$ is the same as in  \eqref{R-est} exception the factors $R_m$ and $R_j$ were removed from $R$. Of course, $\tilde\r_N$ also satisfies the estimate \eqref{rho-est}.

 Thus,
 the last term from \eqref{fkjmm1} can be estimated by
 \be 
 2\varphi^2_0N^2|R_0(z_j)|[\rho_*+2\e(\d)\varphi_0N^2].
 \ee

 So, using \eqref{rho-est},
  \be\label{fkjmm3}
\le| \frac{\part  F_{kj}}{\part \m_m}- 
 \frac{\part  F_{kj}}{\part \m_m}|_{\d=0}\ri|\leq
  4\varphi^2_0N^2[ (\pi +2\varphi_0N^2) \e(\d)+3\sqrt 2\varphi_0^2N^2\ln N|R_0(z_j)|\d^2]
 %
 \ee
 for sufficiently large $N$ provided $\e(\d)N^2\ra 0 $ as $N\ra\infty$. 
 Of course, the latter condition holds when $\vec\m\in W_\m(N)$.
 
 Direct calculations show that in the case $m=j$ we have
 \bea\label{fkjmm3a}
 \frac{\part  F_{kj}}{\part \m_j}- 
 \frac{\part  F_{kj}}{\part \m_j}|_{\d=0}=
\oint_{\gt_k}\frac{dz}{R_0(z)}\le[ \frac{\prod_{l\neq j}\frac{z-\m_l(\d)}{R_l(z)}}{R_j(z)} -\frac{1}{z-z_j}\ri]=\cr
\oint_{\gt_k}\frac{dz}{(z-z_j)R_0(z)}\le[\le(1-  \frac{\d_j^2}{(z-z_j)^2}\ri)^{-\hf} \prod_{l\neq j}\frac{z-\m_l(\d)}{R_l(z)} -1\ri]
 \eea
 Applying to \eqref{fkjmm3a} similar  estimates as to \eqref{fkjmm3}, we obtain that the 
 estimate \eqref{fkjmm3} also covers the case $m=j$, which makes it uniform in indices $k,j,m$.

 Now we calculate 
 \be\label{drd_del}
 \frac{\part }{\part \d^2}\frac{1}{R(z)}=  \frac{1}{2R^3(z)}\sum_{m=-N}^N\prod_{n\neq m}R_n^2(z)\phi'_m(\d^2)
= \hf\sum_m\frac{\phi'_m(\d^2)}{R_m^2(z)R(z)},
 \ee
We remind that by assumption $\phi'_0(0)= 0$. Then 
 \be
  \frac{\part F_{kj}}{\part \d^2}=  \oint_{\gt_k}\frac{\part w_j}{\part \d^2}=
\mu_j(\d)\oint_{\gt_k}   \hf \sum_m\frac{\prod_{l\neq j}\frac{z-\m_l(\d)}{R_l(z)}\phi'_m(\d)dz}{R_m^2(z)R_0(z)R_j(z)}. 
\ee

We also have 
\be\label{drd-kj0}
 \le. \frac{\part F_{kj}}{\part \d^2}\ri|_{\d=0} =\frac{R_0(z_j) \phi'_k(0)}{2R_0( z_k)(z_j- z_k)}\le[\frac{1}{ z_k-z_j}+\frac{ z_k}{R^2_0( z_k)}\ri]
\ee
 in the case  $k\neq j$ and
 \bea\label{k=j}
 \le. \frac{\part F_{jj}}{\part \d^2}\ri|_{\d=0}=  
  \le. -\frac{R_0(z_j)}{2\pi i}\oint_{\gt_j}   \hf \sum_{m}\frac{\phi'_m(\d)dz}{R_m^2(z)R_0(z)R_j(z)}\ri|_{\d=0} 
  = \cr
  \le.\hf \sum_{m\neq j}\frac{\phi'_m(0)}{(z_j-z_m)^2}+ \frac{R_0(z_j)\phi'_j(0)}{4}\le(\frac 1{R_0(z)}\ri)''\ri|_{z=z_j}.
\eea
in the case $k=j$. Similarly to \eqref{fkjmm1}, we have 
\bea\label{drd-est0}
 \frac{\part F_{kj}}{\part \d^2}-  \frac{\part F_{kj}}{\part \d^2}|_{\d=0}=
\hf \sum_{m}\oint_{\gt_k}\le[ \frac{\m_j(\d)\prod_{l\neq j}\le(1-\frac{\e_l(\d)}{z-z_l}\ri)(1+\tilde\r_N)^{-1}\phi'_m(\d)}{R_m^2(z)R_0(z)R_j(z)}\ri. \cr
\le.  -
\frac{\m_j(0)\phi'_m(0)}{R_0(z)(z-z_m)^2(z-z_j)}
\ri]dz,
\eea 
where $\tilde\r$ is defined in the same way as  in \eqref{2nd-est}.

 To estimate \eqref{drd-est0}, we first observe that 
 \be\label{drd-est0-1}
 \le|\frac{\m_j(\d)}{2} \sum_{m}\oint_{\gt_k} \frac{\prod_{l\neq j}\le(1-\frac{\e_l(\d)}{z-z_l}\ri)\tilde\r_N\phi'_m(\d)dz}{R_m^2(z)R_0(z)R_j(z)}\ri|\leq 2 |R_0(z_j)|\phi'\varphi_0^3N^4\r_*,
 \ee
 where 
 \be 
 \phi'=\max_m\sup_{W_\d(N)}|\phi'_m(\d)|.
 \ee 
 The next term in \eqref{drd-est0} to estimate is
 \be \label{drd-est0-2}
 \le|\frac{\e_j(\d)}{2} \sum_{m}\oint_{\gt_k} \frac{\prod_{l\neq j}\le(1-\frac{\e_l(\d)}{z-z_l}\ri)\phi'_m(\d)dz}{R_m^2(z)R_0(z)R_j(z)}\ri|\leq
 2 \pi\e(\d)\phi'\varphi_0^3N^4.
 \ee 
 One more term in \eqref{drd-est0} to estimate is
 \be \label{drd-est0-3}
 \le|\frac{\m_j(0)}{2} \sum_{m}\oint_{\gt_k} \frac{\prod_{l\neq j}\le[\le(1-\frac{\e_l(\d)}{z-z_l}\ri)-1\ri]\phi'_m(\d)dz}{R_m^2(z)R_0(z)R_j(z)}\ri|\leq 4|R_0(z_j)|\e(\d)\phi'\varphi_0^4N^4,
 \ee
 where we used the same considerations as in estimate ]\eqref{2nd-est}. Finally,
 the last term in \eqref{drd-est0} to estimate is
 
 \bea \label{drd-est0-4}
 \le|\frac{\m_j(0)}{2} \sum_{m}\oint_{\gt_k}dz\le[ \frac{\prod_{l\neq j}\le(1-\frac{\e_l(\d)}{z-z_l}\ri)\phi'_m(\d)}{R_m^2(z)R_0(z)R_j(z)}-
 \frac{\phi'_m(0)}{R_0(z)(z-z_m)^2(z-z_j}
 \ri]\ri|\leq \cr
 4|R_0(z_j)|\varphi_0N(1+ \phi'\varphi_0N)\d^2.
 \eea
 
Since the sum of the left hand sides of \eqref{drd-est0-1}-\eqref{drd-est0-4} gives the absolute value of \eqref{drd-est0}, 
  we obtain 
\bea\label{drd-est}
\le| \frac{\part F_{kj}}{\part \d^2}-  \frac{\part F_{kj}}{\part \d^2}|_{\d=0}\ri|\leq
2\phi'\varphi_0^2N^2[|R_0(z_j)|\varphi_0N^2\r_*+ \cr
\varphi_0N^2(\pi+2|R_0(z_j)|\varphi_0)\e(\d)+3 |R_0(z_j)|\d^2].
\eea

 Similar estimate is valid for $k=j$.

Equations \eqref{drd-kj0}, \eqref{k=j} imply that  the vector $\le.\frac{\part \vec F}{\part \d^2}\ri|_{\d=0}$ are of the order  $O(N^3)$
uniformly in $k,j$. That will  also be true for  $\frac{\part \vec F}{\part \d^2}$ provided  the error term \eqref{drd-est} is of the same or a smaller order. (Here and henceforth all 
the estimates are  entry wise.)
But the condition $(\vec\m,\d)\in W(N)$
implies the required estimate.
Then the error term \eqref{drd-est} is of the  order $O(N^{-3})$ uniformly in $k,j$.

Let $\frac{\part \vec F_j}{\part \vec\m}=A_0+\D A$, where 
$
A_0=\le.\frac{\part \vec F_j}{\part \vec\m}\ri|_{\d=0}.
$
Then we can rewrite \eqref{implF} as
\be\label{implF-end}
 \frac{d\vec\m}{d\d^2}=- (\1+A_0^{-1}\D A)^{-1}A_0^{-1} \frac{\part \vec F_j}{\part \d^2}. 
 \ee
According to \eqref{inv-mat-est}, $A_0^{-1}=O(N)$.
It also follows then from \eqref{d-assump}, \eqref{dFdm-inv_d} and $\vec\m\in W_\m(N)$ that 
$A_0^{-1} \frac{\part \vec F}{\part \d^2}=O(N^4)$ and 
$\D A=O(N^{-3})$
uniformly in $k,j$. Thus, $A_0^{-1}\D A=O(N^{-2})$ uniformly in $k,j$. Now, by Gershgorin Circle Theorem,
see, for example, \cite{Gant}, $(\1+A_0^{-1}\D A)^{-1}=O(1)$. So, condition $\vec\m\in W_\m(N)$ implies that 
\be\label{dmdd2-est}
\frac{d\vec\m}{d\d^2}= O(N^4)
\ee
uniformly in $k,j$ when $(\vec\m,\d)\in W(N)$.

According to  
 the Mean Value Theorem, 
\be\label{dmdd2-mvt}
\m_k(\d)- \m_k(0)=\frac{d\m_k}{d\d^2}(\xi_k)\d^2
\ee 
for all $k$, where $\xi_k\in (0,\d)$.
Let us start deforming $\d$ from $\d=0$ as long as $(\vec\m(\d),\d)\in W(N)$.
Then, according to \eqref{dmdd2-mvt},
 \be \label{est-end}
 |\m_k(\d)- \m_k(0)|=o(N^{-8}),
 \ee 
 that is, $\d\in W_\d(N)$ guarantees that $\vec \m\in W_\m(N)$.

Thus, \eqref{mk-est} follow from \eqref{est-end} and  we proved the lemma.
\end{proof}

    \subsection{Approximation of periods of $\Rscr$ }\label{sec-appr-per}
 
 Approximation of the coefficients of the linear system \eqref{WUPjM} is, in fact, an approximation   
 of the period matrix of the hyperelliptic Riemann surface $\mathfrak R_N$. Since {\bf B}-cycles are crossing small shrinking bands $\g_k$,
 the following formula
 \be\label{log-formula}
\int_\mathcal U \frac{\phi(\z)d\z}{\sqrt{\z^2+\d^2}}=-2\ln|\d|\phi(0)+O(1)
\ee
where   $\d\ra 0$ and for $\phi(\z)$ is continuous and Lipschitz at $\z=0$,   will be used  in the calculations below. Here $\d\in \C$ and 
$\mathcal U$ denotes  a fixed segment in a neighborhood of $\z=0$  passing  through the origin and intersecting the segment $[-i\d,i\d]$ transversely
from left to right. Note that in \eqref{log-formula} we assume that $\sqrt{\z^2+\d^2}$ is positive on $\R$, that is, the branch cut of  $\sqrt{\z^2+\d^2}$
goes from $i\d$ to $-i\d$ through infinity.

Let us denote the cycle $\tilde B_k= B_k\cup B_{-k}$, $k=1,\dots, N$.
To simplify the exposition of the following Lemma \ref{lem-est-evenB}, we assume that $\G^+$ is a 1D compact (contour).

 \bl\label{lem-est-evenB} Under the thermodynamic limit assumptions  for all $k,j=1,\dots, N$ we have 
\bea\label{lead-int-ass-evB}
\oint_{\tB_k}\frac{P_j(\z)d\z}{R(\z)}=&&\frac{1}{i\pi}
\le[ \ln\frac{R_0(z_j)R_0( z_k)+z_jz_k-\d_0^2}
{R_0(z_j)R_0(\bar z_k)+z_j\bar z_k-\d_0^2}
- \ln \frac{z_k-z_j}{\bar z_k-z_j}\ri] +h(k-j)\cr +O\le( N^2\d^{\frac 23}\ri) ~~
&& \text{when}~~~k\neq j~~\text{and}~~ \oint_{\tB_j}\frac{P_j(\z)d\z}{R(\z)}=-\frac{2\ln|\d_j|}{i\pi}+O(1)       \eea
in the leading order as  $N\ra\infty$ provided that  $\d=o(N^{-6})$.  Here $h$ denotes the Heaviside function 
$h(\xi):=\hf(1+ {\rm sign}\,\x)$.
Equations \eqref{lead-int-ass-evB} also stay true  when $\d_0\ra 0$ provided $\d \ll |\d_0|$.
\el

\begin{proof}
Consider first the case 
$j\neq k$.
Deforming the contours $B_k\cup B_{-k}$ and using the fact that the values of the integral over each sheet of $\mathfrak R_N$ are equal,  we obtain 
\bea\label{int-not-own-even}
\oint_{\tB_k}w_j =\oint_{\tB_k}\frac{P_j(\z)d\z}{R(\z)}
=2\int_{\bar z_{k}}^{z_{k}}\frac{P_j(\z)d\z}{R(\z)} +h(k-j)= \cr
2\m_j\int_{\bar z_{k}}^{z_{k}}\frac{dz}{R_0(z)R_j(z)}\prod_{m\neq j} \frac{z-\m_m}{R_m(z)}+h(k-j),
\eea
where the contour connecting $\bar z_{k}$ and $z_{k}$ in the latter integral is bent, if necessary, to be at least $\frac 3{N \varphi_0}$ away from 
any band $\g_j$ with $j\neq k$, see \eqref{dist-bound}. We can also assume that the lengths  of these contours for all $N$ are uniformly bounded.
%
The requirement $\d=o(N^{-6})$ is needed to use Lemma \ref{lem-muj}.
Now, it follows from \eqref{2nd-est}, \eqref{rho-est} and 
\eqref{mk-est} that
\be\label{est-t2}
\le|\prod_{{m\neq j}}\frac{z-\m_m}{R_m(z)}-1\ri|\leq C_1 N^6\d^2, \quad N\ra\infty,
\ee
for some $C_1>0$ uniformly in $k,j$ as long as $z$ is  $\frac 3{N \varphi_0}$ away from and band $\g_j$. The latter requirement is violated for $\frac{z-\m_k}{R_k(z)}$ near $z= z_k$
and for  $\frac{z-\m_{-k}}{R_{-k}(z)}$ near $z=z_{-k}$. 
Therefore,
we  split the latter  integral in \eqref{int-not-own-even} into 3 parts: small $\e>0$ neighborhoods of each of the endpoints $ z_k,z_{-k}$ and 
the rest of the contour.
The value  of $\e$ should satisfy $\e=o(N^{-1})$ and $\d= o(\e)$, but its exact order will be determined below.
Using the decomposition 
\be\label{rep-1}
\frac{z-\m_{\pm k}}{R_{\pm k}(z)}=\frac{z-z_{\pm k}}{R_{\pm k}(z)}+\frac{z_{\pm k}-\m_{\pm k}}{R_{\pm k}(z)}, 
\ee
Lemma \ref{lem-muj}, \eqref{log-formula}, \eqref{est-t2} and
the rough estimate 
\be\label{est-t1}
\frac 1 {|R_0(z)R_j(z)|}\leq \varphi_0^2N^2
\ee
on the contour of integration, we estimate the integrals over the neighborhoods of $ z_k,z_{-k}$ as 
\be\label{band-est}
O(N^2\e)
+O(N^{12}\d^4|\ln\d|), \quad N\ra\infty,
\ee
that correspond to the first and second  terms of \eqref{rep-1} respectively.

This estimate
is uniform in $\e,k,j$.
In  the first term of \eqref{band-est}
we used the fact
that both $\frac{z- z_k}{R_k(z)}$,  $\frac{z-z_{-k}}{R_{-k}(z)}$ are bounded near the points $\bar z_k, z_k$.

According to \eqref{est-t2}, \eqref{est-t1}, the integral over the remaining part of the contour can be represented as
\be
\label{red-int}
2\mu_j(\d)\int\frac{dz}{R_0(z)R_j(z)}+O(N^{8}\d^2).
\ee
Because of
\be
 \frac{1}{R_j(z)}=\frac 1{z-z_j}\le[1+O\le(\frac{\d^2}{\e^2}\ri)\ri] 
\ee
and \eqref{mk-est}, we can rewrite \eqref{red-int} as
\be \label{est-t5}
\frac{-2R_0(z_j)}{2\pi i}\int\frac{\le[1+O\le(\frac{\d^2}{\e^2}\ri)\ri]dz}{R_0(z)(z-z_j)} +O(N^8\d^2)
\ee

In view of
 the anti-derivative
{\be\label{anti-der}
\int\frac{d\z}{R_0(\z)(\z-\eta)}=-\frac{1}{R_0(\eta)}\ln\frac{R_0(\eta)R_0(\z)+\z\eta-\d_0^2}{\eta-\z}~~~
\ee}
we obtain the leading order term of the first equation \eqref{lead-int-ass-evB} if we substitute the limiting values $ z_k,z_{-k}$ in the anti derivative 
\eqref{anti-der}  in \eqref{est-t5}. Since these limits of integration are distance $O(\e)$ away from the endpoints of the integral in \eqref{red-int} 
 we have introduced  an error of $O(N^2\e)$, see estimate \eqref{est-t1}. The error coming from the $O\le(\frac{\d^2}{\e^2}\ri)$ term of
 \eqref{est-t5} can be estimated as $O\le(\frac{N^2\d^2}{\e^2}\ri)$.

Now, to find the best value of $\e$ in the partition of the contour $[\bar z_k, z_k]$, we equate the errors  $O(N^2\e)$ and $ O(\frac{N^2\d^2}{\e^2})$ from
\eqref{est-t5}. That yields $\e=\d^{\frac 23}$. Thus, the error in \eqref{lead-int-ass-evB} is the maximum of  $O(N^2\d^{\frac 23})$, $O(N^8\d^2)$.
In view of \eqref{d-assump}, the first term is larger. Thus,
we have completed the proof of the first equation  \eqref{lead-int-ass-evB}.

Consider now 
\be\label{int-not-own-even_k=j}
\oint_{\tB_j}w_j=
2\m_j\int_{\bar z_j}^{ z_j}\frac{dz}{R_0(z)R_j(z)}\prod_{m\neq j} \frac{z-\m_m}{R_m(z)}=
\frac{-2R_0(z_j)}{2\pi i}\int_{\bar z_j}^{ z_j}\frac{dz}{R_0(z)R_j(z)} (1+O(N^6\d^2)), 
\ee
where we have used Lemma \ref{lem-muj} and the estimate from \eqref{est-t2}. Now the second equation \eqref{lead-int-ass-evB} follows from 
\eqref{log-formula} taken with the opposite sign. This choice of the sign comes from the fact that 
the branch of  $R_j(z)$ in  \eqref{int-not-own-even_k=j} corresponds to $-\sqrt{\z^2+\d^2}$ in
\eqref{log-formula}.
\end{proof}

\br \label{rem-higher-ac} Higher accuracy in the second equation of \eqref{lead-int-ass-evB} can be achieved if we consider
higher order terms in the small $\d_j$ expansion of the elliptic integral $\int_{\bar z_j}^{ z_j}\frac{dz}{R_0(z)R_j(z)}$
from \eqref{int-not-own-even_k=j}.
\er


\begin{thebibliography}{99}


 
 
   
 \bibitem{BBIM}  E. D. Belokolos, A. I. Bobenko, V. Z. Enolski, A. R. Its, and V. B. Matveev,
Algebro-geometric Approach to Non-linear Integrable Equations, Springer, New York, 1994.
 
 
  
  \bibitem{bertolatovbis2015}  M. Bertola and A. Tovbis, Meromorphic differentials with imaginary periods on degenerating hyperelliptic curves, 
Analysis and Mathematical Physics \textbf{5}, N1,  1-22, (2015).

\bibitem{Bhatia} R. Bhatia, Matrix Analysis, Springer-Verlag, 1997, 347pp.

\bibitem{BioOre2020}
 G. Biondini and J. Oregero,
 Semiclassical dynamics and coherent soliton condensates in self-focusing nonlinear media with
periodic initial conditions,  Stud. Appl. Math. \textbf{145 (3)}: 325–356 (2020).
 
 \bibitem{BOT2021}
 G. Biondini, J. Oregero and A. Tovbis, 
 On the spectrum of the focusing Zakharov-Shabat operator with periodic potential, submitted, (arXiv:2010.04263).
 
  
 \bibitem{Deift} P. Deift, Orthogonal Polynomials and Random Matrices: A Riemann-Hilbert Approach, Courant Lecture Notes, \textbf{3}, 261 pp, (2000).
 
  \bibitem{DIZ} P. Deift, A. Its and X. Zhou, A Riemann-Hilbert approach to asymptotic problems arising in the theory of random matrix models, and also in the theory of integrable statistical mechanics, Annals of Math., \textbf{146} , 149-235,(1997).
  
  
\bibitem{DYC} B. Doyon, T. Yoshimura, and J.-S. Caux, Soliton gases and generalized hydrodynamics, 
Phys. Rev. Lett. \textbf{120}, 045301 (2018).

\bibitem{DSY} B. Doyon, H. Spohn, and T. Yoshimura, A  geometric viewpoint on generalized hydrodynamics,
Nucl. Phys. B \textbf{926}, 570-583, (2018).
			

 \bibitem{ElTovbis} G.A. El and A. Tovbis,
Spectral theory of soliton and breather gases 
for the focusing nonlinear Schr\"odinger equation,
Phys. Rev. E \textbf{101}, 052207 (2020).

 
 \bibitem{ElKamch} G.A. El and  A.M. Kamchatnov, 
Kinetic equation for a dense soliton gas,
Phys. Rev. Lett. \textbf{95}, N20,  (2005) 204101   

\bibitem{El-review}  G.A. El, 
Soliton gas in integrable dispersive hydrodynamics,
 J. Stat. Mech. (2021) 114001
 
 
\bibitem{El2003} G.A. El, 
The thermodynamic limit of the	Whitham equations, 
Phys. Lett. A \textbf{311}, (2003), 374-383.

\bibitem{FT2007} L. D. Faddeev and L. A. Takhtajan, Hamiltonian Methods in the Theory of Solitons, Springer Berlin, Heidelberg, 2007.

\bibitem{FK} H. M. Farkas and  I. Kra, Riemann Surfaces, Springer-Verlag, 1991, 363pp.

 \bibitem{FFM1980} H. Flaschka,  M.G. Forest and D.W. McLaughlin,  Multiphase averaging and the inverse spectral solution of the Korteweg—de Vries equation. Comm. Pure Appl. Math., \textbf{33}: 739-784, (1980)


 
\bibitem{ForLee} M. G. Forest and J.-E. Lee, 
Geometry and modulation theory for the periodic 
Nonlinear Schr\"odinger equation,
in Oscillation Theory, Computation, and Methods of Compensated Compactness,
edited by C. Dafermos, J. L. Ericksen, D. Kinderlehrer,
and  M.  Slemrod  (Springer  New  York,  New  York,  NY,
1986) pp. 35–70



 \bibitem{Gant} F. R. Gantmacher, The Theory of Matrices, Chelsea, 1959, 660 pp.
 
  
 \bibitem{Gelash} A. Gelash, Formation of rogue waves from a locally perturbed condensate, Phys. Rev. E,
\textbf{97(2)}:022208, (2018).

 
 
\bibitem{Krichever}
S. Grushevsky and I. Krichever, The universal Whitham hierarchy and the geometry of the moduli space of
pointed Riemann surfaces. In Surveys in differential geometry. Vol. XIV. Geometry of Riemann surfaces and
their moduli spaces, volume 14 of Surv. Differ. Geom., pages 111–129. Int. Press, Somerville, MA, (2009).


  \bibitem{KT2021} A. Kuijlaars and A. Tovbis,
 On minimal energy solutions to certain classes of integral equations related to soliton gases for integrable systems,
Nonlinearity, \textbf{34}, n. 10,  7227 (2021)




 \bibitem{LiBiondini} S. Li and G. Biondini, Soliton interactions and degenerate soliton complexes for the
focusing nonlinear
Schr\"odinger equation with nonzero background, Eur. Phys. J. Plus,
\textbf{133(10)}:400, (2018).


\bibitem{NovikovSoliton} S. P. Novikov,  S. V. Manakov, L. P. Pitaevskii, and V. E. Zakharov. Theory of Solitons: The Inverse Scattering Method. Springer Science and Business Media, (1984).

\bibitem{Pav1987} M.V. Pavlov,  Nonlinear Schr\"odinger equation and the Bogolyubov-Whitham method of averaging. Theor Math Phys \textbf{71}, 584–588 (1987). 


 \bibitem{Roberti}
 G. Roberti, G. El, A.Tovbis, F.Copie, P. Suret and S. Randoux,
Numerical spectral synthesis of breather gas for the focusing nonlinear Schr\"odinger equation,
Phys. Rev. E \textbf{103},
 042205 (2021).

\bibitem{SaTo} E.B. Saff and V. Totik, 
Logarithmic Potentials with External Fields,  
Springer Verlag, Berlin, 1997. 

\bibitem{TE2016} A. Tovbis and G. A. El,   Semiclassical limit of the focusing NLS: Whitham equations and the Riemann-Hilbert Problem approach. Physica D: Nonlinear Phenomena, \textbf{333}, 171–184, (2016).


 \bibitem{TVZ1} A. Tovbis, S. Venakides and X. Zhou, 
 On semiclassical (zero dispersion limit) solutions of the focusing nonlinear Schr\"odinger equation, Comm. Pure Appl. Math. \textbf{57} (2004), 877-985.
 
 \bibitem{Ven1989} S. Venakides, The continuum limit of theta functions,
Comm. Pure and App. Math. \textbf{42}, 711-728, (1989). 

	
\bibitem{VY} D.-L. Vu and T. Yoshimura, 
Equations of state in generalized hydrodynamics, 
SciPost Phys. \textbf{6}, 023 (2019).	


\bibitem{Wadati} M. Wadati, H. Sanuki and K. Konno, Relationships among Inverse Method, Bäcklund Transformation and an Infinite Number of Conservation Laws, Progress of Theoretical Physics, \textbf{53}, 2, 419–436, (1975).


\bibitem{Zakharov} V. E. Zakharov,  
Kinetic equation for solitons, 
Sov. Phys. JETP \textbf{33}, 538 (1971).

\bibitem{Za09} V. E. Zakharov,  
Turbulence in integrable systems, 
Stud. in Appl. Math. \textbf{122}, no. 3, 219-234, (2009).


\bibitem{ZS} V. E. Zakharov and A. B. Shabat,
Exact theory of two-dimensional self-focusing and one-dimensional self-modulation of waves in nonlinear media,
Sov. Phys. JETP \textbf{34}, 62 (1972). 





	
\end{thebibliography}
\end{document}